\PassOptionsToPackage{nameinlink}{cleveref}
\documentclass[a4paper, UKenglish, numberwithinsect, thm-restate, cleveref, final]{lipics-v2021}
\usepackage{lineno}
\nolinenumbers

\newcommand{\takeout}[1]{\empty}

\usepackage[utf8]{inputenc}
\usepackage{enumitem}
\usepackage[T1]{fontenc}
\usepackage{stmaryrd}
\usepackage{dsfont}
\usepackage{thmtools}
\usepackage{mathtools}
\usepackage{tikz-cd}
\usetikzlibrary{decorations.pathmorphing}
\usetikzlibrary{calc}
\usepackage{relsize}
\usepackage{scalerel}
\tikzset{shiftarr/.style={
    rounded corners,%
    to path={--([#1]\tikztostart.center)
      -- ([#1]\tikztotarget.center) \tikztonodes
      -- (\tikztotarget)},
  }}
\usepackage[nomargin, marginclue, inline]{fixme}
\fxuselayouts{inline,nomargin}
\fxusetheme{color}
\FXRegisterAuthor{fb}{afb}{FB}
\FXRegisterAuthor{sm}{asm}{SM}
\FXRegisterAuthor{hu}{ahu}{HU}
\usepackage[notref,notcite, color]{showkeys}
\usepackage{xspace}
\usepackage{awesomebox}
\usepackage{microtype}

\usepackage{anyfontsize}
\usepackage{bbm}

\usepackage{cleveref}

\takeout{
\newcommand{\resetCurThmBraces}{%
\gdef\curThmBraceOpen{(}%
\gdef\curThmBraceClose{)}}
\resetCurThmBraces
\newcommand{\removeThmBraces}{%
\gdef\curThmBraceOpen{}%
\gdef\curThmBraceClose{}}

\newenvironment{notheorembrackets}{\removeThmBraces}{\resetCurThmBraces}
\usepackage{etoolbox}
\patchcmd{\thmhead}{(#3)}{\curThmBraceOpen #3\curThmBraceClose }{}{}
}

\def\resettheorembrackets{
  \def\theorembracketopen{(}
  \def\theorembracketclose{)}
}
\resettheorembrackets
\makeatletter
\def\@spopargbegintheorem#1#2#3#4#5{\trivlist
  \item[\hskip\labelsep{#4#1\ #2}]{#4{\theorembracketopen}#3{\theorembracketclose}\@thmcounterend\ }#5}
\makeatother

\newcommand{\resetCurThmBraces}{%
  \gdef\curThmBraceOpen{(}%
  \gdef\curThmBraceClose{)}}
\resetCurThmBraces
\newcommand{\removeThmBraces}{%
  \gdef\curThmBraceOpen{}%
  \gdef\curThmBraceClose{}}
\newenvironment{notheorembrackets}{\removeThmBraces}{\resetCurThmBraces}

\patchcmd{\thmhead}{(#3)}{\curThmBraceOpen #3\curThmBraceClose }{}{}

\newcommand{\pullbackangle}[2][]{\arrow[phantom,to path={
                     -- ($ (\tikztostart)!1cm!#2:([xshift=8cm]\tikztostart) $)
                        node[anchor=west,pos=0.0,rotate=#2,
                        inner xsep = 0]
                        {\begin{tikzpicture}[minimum
                        height=1mm,baseline=0,#1]
    \draw[-] (0,0) -- (.5em,.5em) -- (0,1em);
                        \end{tikzpicture}}}]{}}

\AtBeginDocument{%
  }

\usepackage{enumitem}
\setlist[enumerate,1]{label=(\arabic*),font=\normalfont,align=left,leftmargin=0pt,labelindent=0pt,listparindent=\parindent,labelwidth=0pt,itemindent=!,topsep=2pt,parsep=0pt,itemsep=2pt,start=1}
\setlist[enumerate,2]{label=(\alph*),font=\normalfont,labelindent=*,leftmargin=*,start=1,left=5pt,ref=(\arabic{enumi}\alph*)}
\setlist[itemize]{labelindent=*,leftmargin=*}
\setlist[description]{labelindent=*,leftmargin=*,itemindent=-1 em}

\newcommand{\appendixproof}[2][Proof of]{%
  \subsection*{#1\ \autoref{#2}}%
  \addcontentsline{toc}{subsection}{#1\ \autoref{#2}}%
}
\newcommand{\detailsfor}[1]{%
  \appendixproof[Details for]{#1}%
}

\AtEndPreamble{%

  \theoremstyle{plain}
  \newtheorem{thm}{Theorem}[section]
  \newtheorem{prop}[thm]{Proposition}
  \newtheorem{lem}[thm]{Lemma}
  \newtheorem{cor}[thm]{Corollary}

  \theoremstyle{definition}
  \newtheorem{defn}[thm]{Definition}
  
  \newtheorem{constr}[thm]{Construction}
  \newtheorem{asm}[thm]{Assumption}
  \newtheorem{expl}[thm]{Example}

  \theoremstyle{definition}
  \newtheorem{rem}[thm]{Remark}
  \newtheorem{nota}[thm]{Notation}

  \crefname{thm}{Theorem}{Theorems}
}

\makeatletter
\newcommand{\setlabel}[2]{\edef\@currentlabel{#1}\label{#2}}
\makeatother

\newcommand{\iref}[2]{\autoref{#1}.\ref{#1:#2}}

\newcommand{\f}{\mathsf{f}}
\newcommand{\pow}{\ensuremath{\mathcal{P}}\xspace}
\newcommand{\powf}{\ensuremath{\mathcal{P}_{\f}}\xspace}

\newcommand{\id}{\mathsf{id}}
\newcommand{\Id}{\mathsf{Id}}
\newcommand{\epi}{\twoheadrightarrow}
\newcommand{\mono}{\rightarrowtail}
\newcommand{\incl}{\hookrightarrow}
\newcommand{\darr}{\downarrow}

\newcommand{\aiso}[0]{\ensuremath{\mathbf{a}}\xspace}

\newcommand{\cat}[1]{\ensuremath{\mathbf{#1}}\xspace}
\newcommand{\A}{\cat{A}}
\newcommand{\names}{\ensuremath{\mathbb{A}}\xspace}
\newcommand{\C}{\cat{C}}
\newcommand{\V}{\cat{V}}
\newcommand{\ctx}{\cat{Ctx}}

\newcommand{\D}{\cat{D}}
\newcommand{\E}{\cat{E}}
\newcommand{\F}{\ensuremath{\mathbb{F}}\xspace}
\newcommand{\I}{\ensuremath{\mathbb{I}}\xspace}
\newcommand{\Idcat}{\ensuremath{\mathbb{J}}\xspace}
\newcommand{\Inc}{\ensuremath{\Idcat_{\subseteq}\xspace}}
\renewcommand{\S}{\ensuremath{\mathbb{S}}\xspace}
\newcommand{\B}{\ensuremath{\mathbb{B}}\xspace}

\newcommand{\Set}{\cat{Set}}

\newcommand{\xto}{\xrightarrow}

\newcommand{\Nom}{\cat{Nom}}
\newcommand{\Nompres}{\ensuremath{{\cat{Nom}_{=}}}\xspace}
\newcommand{\Suppset}{\cat{SuppSet}}
\newcommand{\Suppsetpres}{\ensuremath{\cat{SuppSet}_=}}
\newcommand{\Ren}{\cat{Ren}}
\newcommand{\Relev}{\cat{Relev}}
\newcommand{\Relevpres}{\ensuremath{\cat{Relev}_{=}}\xspace}

\newcommand{\fs}{\mathsf{fs}}

\newcommand{\tr}[2]{({#1}\,\,{#2})}
\DeclareMathOperator{\supp}{\mathsf{supp}}
\DeclareMathOperator{\Perm}{\mathsf{Perm}}
\DeclareMathOperator{\rens}{Ren}
\DeclareMathOperator{\orb}{orb}

\DeclareMathOperator{\colim}{colim}

\newcommand{\op}{\mathsf{op}}

\DeclareMathOperator{\Sh}{\mathbf{Sh}}

\newcommand{\PSh}[1]{[{#1}, \Set]}
\newcommand{\PShpb}[1]{[{#1}, \Set]_{\cap}}
\newcommand{\PShpi}[1]{[{#1}, \Set]_{\mathsf{pi}}}
\newcommand{\y}{\mathbf{y}}
\newcommand{\sh}{I^*}
\newcommand{\unsh}{I_*}
\DeclareMathOperator{\Nat}{\mathsf{Nat}}
\DeclareMathOperator{\Lan}{\mathsf{Lan}}

\DeclareMathOperator{\strabs}{[{\A}]_=}
\newcommand{\sub}{\mathbin{\diamond}}

\DeclareMathOperator{\curry}{\mathsf{curry}}
\DeclareMathOperator{\uncurry}{\mathsf{uncurry}}

\newcommand{\quot}{\!\mathrel{/\!}}
\newcommand{\bigast}{\ast}
\newcommand{\bigst}{\raisebox{-0.75pt}{\rotatebox{30}{\scalebox{1}{$\bigast$}}}}
\newcommand{\wand}{\mathbin{\relbar\mkern-10mu{\bigst}}}

\newcommand{\dwand}{\mathbin{\relbar\mkern-7mu\diamond}}
\newcommand{\uswand}{\mathbin{\relbar\mkern-6.5mu\raisebox{1pt}{\scalebox{0.7}{\(\otimes\)}}}}

\newcommand{\tl}{\mathbin{\triangleleft}}

\newcommand{\overbar}[1]{\mkern 1.5mu\overline{\mkern-1.5mu#1\mkern-1.5mu}\mkern 1.5mu}
\newcommand*\cocolon{%
  \nobreak
  \mskip6mu plus1mu
  \mathpunct{}%
  \nonscript
  \mkern-\thinmuskip
  {:}%
  \mskip2mu
  \relax
}

\title{A Unified Treatment of the Substitution Tensor for \\ Presheaves, Nominal Sets, Renaming Sets, \\ and so on}
\titlerunning{A Unified Treatment of the Substitution Tensor}
\author{Fabian Lenke}{Friedrich-Alexander-Universität Erlangen-Nürnberg, Germany\and\url{https://www8.cs.fau.de/people/fabian-lenke/}}{fabian.birkmann@fau.de}{https://orcid.org/0000-0001-5890-9485}{}
\author{Stefan Milius}{Friedrich-Alexander-Universität Erlangen-Nürnberg, Germany\and\url{https://www.stefan-milius.eu}}{stefan.milius@fau.de}{https://orcid.org/0000-0002-2021-1644}{}
\author{Henning Urbat}{Friedrich-Alexander-Universität Erlangen-Nürnberg, Germany\and\url{https://www8.cs.fau.de/people/henning-urbat/}}{henning.urbat@fau.de}{https://orcid.org/0000-0002-3265-7168}{}
\authorrunning{F.~Lenke, S.~Milius, and H.~Urbat}

\Copyright{Fabian Lenke, Stefan Milius and Henning Urbat}
\ccsdesc[500]{Theory of Computation}
\keywords{Substitution, Presheaves, Nominal Sets, Monoidal Category}

\relatedversiondetails{Full version}{https://arxiv.org/abs/2602.11907}
\funding{Fabian Lenke is supported by Deutsche Forschungsgemeinschaft (DFG, German Research Foundation) -- project number 419850228.
  Henning Urbat are supported by Deutsche Forschungsgemeinschaft (DFG, German Research Foundation) -- project number 569130867.
  Stefan Milius is supported by Deutsche Forschungsgemeinschaft (DFG, German Research Foundation) -- project number 517924115.}

\numberwithin{equation}{section}

\makeatletter
\hypersetup{
  final,
  hidelinks,
  pdftitle={\@title},
  pdfauthor={Lenke, Milius, Urbat},
}
\makeatother

\begin{document}

\maketitle

\begin{abstract}
  Presheaves and nominal sets provide alternative abstract models of sets of syntactic objects with free and bound variables, such as $\lambda$-terms.
  One distinguishing feature of the presheaf-based perspective is its elegant syntax-free characterization of substitution using a closed monoidal structure.
  In this paper, we introduce a corresponding closed monoidal structure on nominal sets, in the spirit of Fiore~et~al.'s substitution tensor for presheaves over finite sets.
  To this end, we present a general method to derive a closed monoidal structure on a category from a given action of a monoidal category on that category.
m  We demonstrate that this method not only uniformly recovers known substitution tensors for various kinds of presheaf categories but also yields notions of substitution tensor for nominal sets and their relatives, such as renaming sets. In the process, we shed new light 
  on different incarnations of nominal sets and (pre-)sheaf categories and establish a number of correspondences between them.
\end{abstract}

\section{Introduction}

Substitution is a ubiquitous operation in the theory of computation. It naturally appears whenever syntactic objects (terms, formulas, programs) that involve variables are manipulated, such as in algebra, logic, type theory, and programming language theory.
For instance, in one of the fundamental models of computation, the $\lambda$-calculus, the essence of the operational semantics lies in the $\beta$-reduction rule
\( (\lambda x.\, p) \, q \; \to_\beta \; p[q/x]  \)
expressing that ``to apply a function $\lambda x.\, p$ to the input $q$, substitute~$q$ for every free occurrence of the variable $x$ in the term $p$''. Even in such simple models, substitution tends to be rather subtle: the presence of free and bound variables and $\alpha$-renamings requires careful bookkeeping to avoid capture. The situation becomes increasingly complex in more advanced settings, including for example substructural systems such as the linear $\lambda$-calculus which impose restrictions on the use of variables (resources), or systems involving parametric or higher-order types. 

To navigate these difficulties, the investigation of {abstract}, syntax-free mathematical models of substitution has spurred the interest of researchers for a long time. At LICS'99, two independent seminal works in this direction were presented by Fiore, Plotkin, and Turi~\cite{ftp99} and by Gabbay and Pitts:
\begin{enumerate}
\item The former~\cite{ftp99} studied \emph{presheaves} as a model of abstract syntax and variable binding. Their original work
  considered the category $\PSh{\F}$ of presheaves over finite sets (cartesian contexts)
  and captures the syntax of languages like the basic untyped $\lambda$-calculus. The idea is to interpret the collection of $\lambda$-terms 
  as a functor 
  \( \Lambda\colon \F \to \Set \)
  that sends a finite context of variables to the set of $\lambda$-terms (modulo $\alpha$-equivalence) in that context. Operations like $\lambda$-abstraction and substitution are then captured at this abstract level of presheaves via initial algebra semantics. The presheaf-based approach was
  subsequently extended in a series of papers in two orthogonal directions. On the one hand,
  it turned out to also apply smoothly to \emph{substructural} abstract syntax, which
  amounts to replacing the category $\F$ of cartesian contexts with the categories $\I$,
  $\S$, $\B$ of finite sets and injections, surjections, and bijections, corresponding to
  affine, relevant, and linear contexts, respectively~\cite{t00,fr25}. On the other hand,
  various authors have studied presheaf models of (dependently or parametrically)
    \emph{typed} languages, using more complex categories of typed contexts in lieu of $\F$~\cite{f08,hamana11}. Both directions highlight the power and flexibility of the presheaf perspective to model various, even rather complex forms of abstract syntax.
\item \cite{gp99} introduced a conceptually very different approach to abstract syntax and variable
  binding using \emph{permutations} of names, modeled by a suitable group actions. Here the collection of $\lambda$-terms is viewed as a \emph{set} $\Lambda$ equipped with an action 
  \( \Perm \names \times \Lambda \to \Lambda, (\pi,p) \mapsto \pi\cdot p \),
of the group of permutations of (variable) names that performs capture-avoiding renamings; e.g.\ $(x\, y)\cdot (\lambda x.\, x\, y) = \lambda z.\, z\, x$. Originally based on Fraenkel-Mostowski (FM) set theory~\cite{g01}, their approach
  ultimately developed into the theory of \emph{nominal sets}~\cite{p13}, which has found
  broad applications in many areas of computer science, among them programming languages~
  \cite{bbkl12,s06}, logic~\cite{l24,pit03,gc04}, algebra~\cite{gm09,kp10}, and automata
  theory~\cite{bkl14,skmw17,hmsu25,uhms21}.
\end{enumerate}

Even though the topics and intentions of the two original papers~\cite{ftp99,gp99} are very
similar (even their titles are), the subsequent lines of research they have spawned were increasingly unrelated. In particular, nominal sets were largely neglected as categorical models of substitution via monoids in favor of presheaves. One may ask why, given the perks of nominal sets: they are in many respects simpler and more well-behaved structures than presheaves~\cite{gh08}, which may be the key reason for their adoption outside the category theory community (e.g.\ in automata theory). It is often more intuitive and natural, and technically easier, to think of the collection of $\lambda$-terms as a (nominal) \emph{set}, as opposed to a \emph{functor}. In fact, much of the theory of nominal sets can be developed in the elementary language of group actions without reference to categorical concepts such as (pre)sheaves at all.

However, one can easily identify one specific reason for the focus on presheaves: While name abstraction is well-understood in both nominal sets and presheaves,
the \emph{substitution tensor}, which models free simultaneous substitution in the presheaf setting via a closed monoidal structure~\cite{ftp99} still has, over 25 years later, found no analogue in the nominal world.
Power~\cite{jp07} and Power and Tanaka~\cite{pt08} conjectured that this structure could be introduced to nominal
sets via a unifying $2$-categorical approach using pseudo-distributive laws over
pseudo-monads, but this direction has not been pursued further in subsequent work. This gap is even more remarkable given the close connection between nominal sets and presheaves: there exist multiple characterizations of nominal sets as well-behaved presheaves (namely intersection-preserving presheaves, or sheaves for a suitable topology), and conversely, 
presheaf categories for which the substitution tensor exists admit corresponding nominal versions. Most notably, Gabbay and Hofmann~\cite{gh08} introduced \emph{renaming sets} (which allow for non-injective renamings rather than just permutations) as a nominal counterpart of the presheaf category $\PSh{\F}$ of~\cite{ftp99}.

\subparagraph*{Contributions} In this work, we introduce a general method for deriving substitution-like closed monoidal structures that covers both the presheaf and the nominal setting. The core idea is to \emph{freely} generate the substitution tensor on a category $\A$ from a \emph{left action} \(\tl\colon \V\times\A\to\A\) of a monoidal category $\V$ on $\A$ that specifies what a substitution is in the given setting. We give a general criterion (\Cref{thm:gen-mon,thm:gen-mon-clo}) for the substitution tensor to yield a closed monoidal structure.

We first instantiate our method to capture the substitution tensors in presheaf categories $\PSh{\ctx}$ for various forms of (untyped) contexts $\ctx$: exchange ($\B$), weakening + exchange, ($\I$) contraction + exchange ($\S$), and cartesian ($\F$). The respective substitution tensors were introduced separately and from scratch in earlier work. We identify a condition (called \emph{contextuality}) on the category $\ctx$ that enables us to construct a closed monoidal structure in $\PSh{\ctx}$ and captures all the four specific cases uniformly.

As the primary application of our theory, we derive a closed substitution tensor for nominal sets and renaming sets, and give elementary nominal descriptions of both the tensors and its internal hom, showing that they are considerably simpler than their corresponding presheaf-based versions.
To further stimulate the transfer of concepts between presheaves and nominal sets, we establish a number of correspondences with presheaf types not considered before:
presheaves over $\B$ are equivalent to the category of nominal sets and support-preserving functions,
and presheaves over $\S$ are equivalent to the category of renaming sets where renamings respect least supports. Adding relations between the various presheaf categories and types of nominal sets, we arrive at a more complete map of the nominal landscape.

\subparagraph*{Outline} In \Cref{sec:substitution-intro} we briefly recall the concrete instance of the substitution tensor on presheaves over finite sets~\cite{ftp99}.
In \Cref{sec:categ-presh} we recall some required concepts from category theory such as (co-)ends, Kan extensions, and constructions on presheaves.
\Cref{sec:mon-struct-from-act} establishes our general result for deriving a monoidal structure from a left action, together with a criterion for its closedness.
We instantiate this result to presheaf categories in \Cref{sec:psh-substitution} to uniformly derive their substitution structures.
The corresponding substitution structures for nominal sets and renaming sets are constructed in \Cref{sec:subst-nomin-sets}.
In \Cref{sec:taxon-nomin-sets} we study the relationships between presheaf categories and different incarnations of nominal sets.

\subparagraph*{Related Work}
The substitution tensor has a long history, going back to Kelly~\cite{k05o} for operads and Johnstone and Wraith~\cite{jw78} for their work on (finitary) algebraic theories in toposes.
Fiore, Plotkin and Turi~\cite{ftp99} then connected it to algebras with binders, and their approach has been generalized to increasingly complex settings~(e.g.~\cite{fr25,f08,fs22,f07}.

In nominal sets, single-variable substitution via structural recursion was developed in~\cite[Example 5.7]{gp99}, and investigated via nominal algebra in~\cite{gm08,g16}.
In FM-sets, different possible substitution actions were thoroughly studied in~\cite{g09}.
Notably, the characterisation from~\cite{ftp99} of substitution algebras as monoids is missing for nominal sets.

A possible unification of the two approaches was outlined by Power~\cite{jp07}, using an extension of the 2-categorical axiomatization of substitution developed by Power and Tanaka~\cite{t04,pt08}.

\subparagraph*{Acknowledgements}
We thank Nathanael Arkor and the reviewers for the various pointers to the literature, insightful comments and helpful suggestions and simplifications.

\section{Substitution, Abstractly}\label{sec:substitution-intro}
To provide some intuition and motivation for the general theory of substitution developed in our paper, we recall one concrete categorical setting for modeling substitution:
the category $\PSh{\F}$ of covariant presheaves over the category $\F$ of finite sets and functions. This is the setting originally considered by Fiore et al.~\cite{ftp99}.

\subparagraph*{Models.} Intuitively, we think of an object $A=\{a_1,\ldots,a_n\}$ of $\F$ as a context of $n$ untyped variables, and of a presheaf $X\in \PSh{\F}$ as a map assigning to every context $A$ a set $XA$ of terms in that context, built over some syntax. Given a map $f\colon A\to A'$ in $\F$, the map $Xf\colon XA\to XA'$ sends $t\in XA$ to the term $Xf(t)$ obtained by renaming all (free) variables in $t$ according to $f$.
For example, the presheaf $V^I A = A^I$ represents $I$-indexed families $(a_i \in A)_{i \in I}$ of variables, with special case $V A = V^1 A  = A$, the \emph{presheaf of variables}.
Sets of terms that involve variable binding and $\alpha$-equivalence can also be naturally presented as presheaves.
For example, we can form the presheaf  $\Lambda$ sending a context $A$ to the set $\Lambda A$ of $\lambda$-terms (modulo $\alpha$-equivalence) with free variables from $A$.

\subparagraph*{Substitution.}
A \emph{substitution} $\sigma$ specifies for each variable $a$ from a context $A \in \F$ a term $t_a$ from a model $X$ to be substituted for~$a$. Thus, $\sigma$ is an element of the $A$-fold power of $X$, denoted by
\(A \tl X = X^A\)
We can model the process of applying a substitution~$\sigma$ to a term $t$ of $X$ abstractly in two steps.

\begin{enumerate}
\item Given presheaves $X, Y \in \PSh{\F}$ we can build the presheaf $X \sub Y$ of ``terms of $Y$ freely substituted into terms of $X$''.
It consists of terms $t \in XA$ with variables from $A$ together with a substitution $\sigma \in A \tl Y$.
        However, one has to be careful not to add too many terms: for example, for the presheaf $XA = \pow_\f A$ (where $\pow_\f$ is finite power set), if we formally substitute $[a \mapsto t_a, b \mapsto t_b]$ into $\{a, b\} \in X\{a, b\}$, we obtain the same result ``$\{t_a, t_b\}$'' as \[\{a, b\}[a \mapsto t_a, b \mapsto t_b], \qquad \{a, b\}[a \mapsto t_b, b \mapsto t_a] \qquad\text{or}\qquad \{a, b, c\}[a \mapsto t_a, b \mapsto t_b, c \mapsto t_b].\]
This naturally suggests the following definition:
\begin{equation}
  \label{eq:F-sub}
  X \sub Y = \big( \coprod_{A \in \F} XA \times (A \tl Y) \big) \quot \; \sim, \qquad  \text{ where }\qquad (Xf(t), \sigma) \sim (t, \sigma \cdot f)
\end{equation}
generates the equivalence for $C \in \F$, $t \in XA$, $\sigma \in (B \tl Y)C$, and $f \colon A \rightarrow B$, and where $\sigma \cdot f$ is the ``rendering'' of the substitution $\sigma$ under $f$. Note that the substitution tensor $\sub$ is defined using $\tl$. We can also recover $\tl$ from $\sub$ via \[A \tl Y \cong A \tl (V \sub Y) \cong (A \tl V) \sub Y.\]
\item The actual process of substitution (requiring $Y = X$) is modeled by specifying a ``bind'' operation $X \sub X \rightarrow X$, analogous to that from functional programming languages such as Haskell:
it sends a pair $[t, \sigma]$ to the term $t\sigma$ obtained by applying the substitution $\sigma$ to the free variables of $t$. For the presheaf of $\lambda$-terms for example, this bind operation performs the usual capture-avoiding substitution, e.g.\ the pair $[\lambda x.\, x\, y, [y\mapsto x\,y]]_\sim$ is sent to $\lambda z.\, z\,y$.
\end{enumerate}

The substitution tensor $\sub$ yields a (non-symmetric) \emph{monoidal structure} on $\PSh{\F}$ whose unit is the presheaf $V$ of variables. Moreover, this structure is \emph{right-closed}: there is a natural isomorphism \(\Nat(X \sub Y, Z) \cong \Nat(X, Y \dwand Z) \) for all \( X,Y,Z\in \PSh{\F}\), where the internal hom is given by \((Y \dwand Z)A = \Nat(Y^{A}, Z)\), the \emph{clone of operations} from $Y$ to $Z$. Intuitively, a natural transformation $f\colon Y^A\to Z$ describes, for a fixed term $t$ over free variables in $A$, the outcome $f(\sigma)$ of applying a substitution $\sigma$ to $t$.

\section{Categorical Preliminaries}
\label{sec:categ-presh}

We next work towards our goal of generalizing the construction of the closed monoidal substitution structure sketched in the previous section to the level of abstract categories, with categories of presheaves or nominal sets as concrete instances. This requires some machinery from category theory, notably (co-)ends, Kan extensions, and Day convolution, which we recall next. Readers should be familiar with basic categorical concepts such as (co-)limits, monads, and monoidal categories~\cite{mac-71}.

\begin{nota}
  \begin{enumerate}
    \item For a small category \C, we write \(\PSh{\C} \) for the category of covariant presheaves on $\C$.
          The Yoneda embedding is denoted by
 \(           \y_\C \colon \C^{\op} \rightarrow \PSh{\C}\) mapping \(C \mapsto  \C(C, -)\).
          We drop its subscript if the category $\C$ is clear.
    \item Following~\cite{MacLaneMoerdijk1992}, given a presheaf \(F\), a morphism \(f \colon C \to D\), and an element \(x \in FC\), we write \(f \cdot x\) for \(F(f)(x)\), and \(x \cdot f\) if \(F\) is contravariant (if $F$ is bivariant we use both). Note that on hom-functors this notation agrees with morphism composition, that is, for the bivariant functor \( \C(-, -) \colon \C^\op \times \C \rightarrow \Set\) we have for $f \colon C \rightarrow D$ and $g \colon D \rightarrow E$ that \[\C(C, -)(g)(f) = g \cdot f = \C(-, E)(f)(g).\]
          If \(\alpha \colon F \rightarrow G\) is a natural transformation, we omit the object subscripts of components, so naturality reads as \(\alpha(f \cdot x) = f \cdot \alpha(x)\).
  \end{enumerate}
\end{nota}

\subparagraph*{Coends and Copowers.}
(Co-)ends are a variant of (co-)limits, formed over diagrams of type $H\colon I^\op \times I \rightarrow \C$ for some category~$I$.
A \emph{cowedge} for $H$ consists of an object $W \in \C$ and a family of morphisms $w_i \colon  H(i, i) \rightarrow W$ ($i\in I$) such that
 \[w_i \cdot H(f, i) = w_j \cdot H(j, f) \colon H(j, i) \rightarrow W \] for all $f \colon i \rightarrow j$ in $I$.
A \emph{coend} for $H$ is a cowedge \[\kappa_i \colon H(i, i) \rightarrow  \int^{i \in I}H(i, i) \]
such that every cowedge $W$ for $H$ factorizes through $\kappa_i$ via some unique  $\int^{i \in I}H(i, i) \rightarrow W$.
The notion of an \emph{end}, denoted $\int_{i \in I} H(i, i)$, is dual. 
We are mostly concerned with (co-)ends in \Set, which are formed similarly to (co-)limits: Given \(H \colon I^{\op} \times I \rightarrow \Set\) we have
\begin{align*}
  \int_{i \in I} H(i, i) ~&\cong~\{(x_{i})_{i \in I} \in \prod_{i \in I} H(i, i) \mid \forall i \xrightarrow{f} j \colon f \cdot x_{i} = x_{j} \cdot f \}\\
  \int^{i \in I} H(i, i) ~&\cong~  \coprod_{i \in I} H(i, i) / \sim, \qquad \text{where \(x \cdot f \sim f \cdot x\) for \(f \colon i \rightarrow j\) and \(x \in H(j, i)\)}
\end{align*}
generates the equivalence relation $\sim$.

A category \(\C\) is \emph{copowered} if for every object \(C \in \C\) the covariant representable \(\C(C,-)\) has a left adjoint \(- \cdot C \colon \Set \rightarrow \C \). The object $S \cdot C$ is the \emph{copower} of $C$ by the set $S$. This makes \(\cdot\) a bifunctor \(\Set \times \C \rightarrow \C\). If \(\C\) has coproducts, then \(S \cdot C \cong \coprod_{s \in S} C\).
In particular,
copowers in \Set are simply products: \(S \cdot C = S \times C\) for \(C \in \Set\),
and copowers in presheaves are taken point-wise.

Both coends and copowers can be expressed as colimits of ordinary diagrams; in particular, if a category is cocomplete it has all coends and copowers.

\subparagraph*{The Yoneda Lemma.} The most important result for presheaves is the \emph{Yoneda lemma}:
\begin{lem}[Yoneda]
  For every small category \(\C\) we have the following isomorphisms, natural in \(F \in [\C, \Set]\) and \(C \in \C\):
  \fbnote{TODO correct labels}
  \begin{equation}\label{eq:y}
    \int_{D \in \C} \Set(\y C(D), FD)  \quad\overset{(1)}\cong\quad \Nat(\y C, F)
                                      \quad\overset{(2)}\cong\quad FC
                                      \quad\overset{(3)}\cong\quad \int^{D \in \C} \y D(C) \times FD.
  \end{equation}
\end{lem}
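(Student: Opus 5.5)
The plan is to establish the three isomorphisms in \eqref{eq:y} separately. Each is natural in $F$ and $C$, so I will fix explicit maps and check naturality at the end. I start with (2), the classical Yoneda bijection, and then derive (1) and (3) from it: (1) by the explicit description of ends in \Set, and (3) by the explicit description of coends in \Set given above.

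For (2), I define $\Phi\colon \Nat(\y C, F)\to FC$ by $\Phi(\alpha)=\alpha(\id_C)$. Conversely, I define $\Psi\colon FC\to \Nat(\y C,F)$ by $\Psi(x)_D(f) = f\cdot x$ for $f\colon C\to D$. Naturality of $\Psi(x)$ is functoriality of $F$: $g\cdot(f\cdot x) = (g\cdot f)\cdot x$. Then $\Phi\Psi(x) = \id_C\cdot x = x$. For $\Psi\Phi = \id$, naturality of $\alpha$ gives $\alpha(f) = \alpha(f\cdot \id_C) = f\cdot\alpha(\id_C)$. Naturality of $\Phi$ in $F$ is immediate. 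Naturality in $C$ is a one-line check: for $h\colon C'\to C$, precomposition with $\y h$ corresponds to $x\mapsto h\cdot x$.

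For (1), I unfold the end formula in \Set. An element of $\int_D \Set(\y C(D),FD)$ is a family $(\alpha_D)_D$ satisfying the wedge condition for $\Set(\y C(-),F-)\colon \C^\op\times\C\to\Set$. For $g\colon D\to D'$, the covariant action postcomposes with $Fg$ and the contravariant action precomposes with $\y C(g)=g\cdot -$. So the wedge condition $g\cdot\alpha_D = \alpha_{D'}\cdot g$ says exactly $Fg\circ\alpha_D = \alpha_{D'}\circ \y C(g)$, which is naturality. Hence (1) is the identity on underlying families.

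For (3), I use the coend description $\coprod_D \C(C,D)\times FD/\sim$, where $\y D(C)=\C(D,C)$ is contravariant in $D$. Here the indexing needs care: since $\y\colon\C^\op\to\PSh\C$, we have $\y D = \C(D,-)$, so $\y D(C)=\C(D,C)$. I define $FC\to\int^D \C(D,C)\times FD$ by $x\mapsto[\id_C,x]$, and in the other direction $[f,y]\mapsto f\cdot y$. The second map is well defined because it respects the generating relation: $(f\cdot g, y)\sim(f, g\cdot y)$ for $g\colon D\to D'$, $f\colon D'\to C$, $y\in FD$, and both sides map to $f\cdot g\cdot y$ by functoriality. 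One composite is clearly the identity. For the other, $[f,y] = [\id_C\cdot f, y]\sim[\id_C, f\cdot y]$, so every class has a representative with first component $\id_C$.

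The main obstacle is only bookkeeping. I must orient the variances correctly in the (co)end formulas, especially in (3) given the $\y\colon\C^\op\to\PSh\C$ convention, and verify naturality in $C$ of (3). That last check reduces to $[h\cdot f,y]\mapsto h\cdot f\cdot y$, which is again functoriality.
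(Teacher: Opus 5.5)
Your proof is correct: it is the standard argument for the Yoneda and co-Yoneda (density) isomorphisms, and your variances match the paper's convention $\y C = \C(C,-)$, so that $\y C(D)=\C(C,D)$ in (1) and $\y D(C)=\C(D,C)$ in (3). The paper gives no proof of this lemma; it simply quotes it as the classical Yoneda lemma, so there is no alternative route to compare against.
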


\begin{rem}
  \Cref{eq:y}.(3) is sometimes called the \emph{Co-Yoneda lemma} or \emph{density}, since it implies, together with the fact that coends can be expressed as colimits, density of the Yoneda embedding: every presheaf is a colimit of representables.
\end{rem}

\subparagraph*{Kan Extensions}
 The concept of \emph{Kan extension} laxly solves the factorization problem for functors. Given functors \(\D \xleftarrow{K} \C \xrightarrow{F} \E\),
  the \emph{left Kan extension of $F$ along $K$} consists of a functor $\Lan_{K} F \colon \D \rightarrow \E$ with a natural transformation $\eta \colon F \rightarrow \Lan_{K}F \cdot K $ such that every natural transformation $\alpha \colon F \rightarrow GK$ factorizes uniquely as 
  \[\alpha = (\, F \xto{\eta} (\Lan_K F)K \xto{\hat\alpha K} GK\,) \qquad \text{for some $\hat{\alpha} \colon \Lan_{K}F \rightarrow G$.} \]
  A functor \(H \colon \E \rightarrow \E'\) \emph{preserves} the Kan extension \(\Lan_{K}F\) if \(H \cdot \Lan_{K}F \cong \Lan_{K}HF\).

We recall some important properties of Kan extensions, see e.g.~\cite[Chapter X]{mac-71} for proofs.
\begin{prop}\label{prop:kan-props}
  \begin{enumerate}
    \item\label{prop:kan-props:exist} If \C is small and \E is cocomplete, then for every $K$ and $F$ as above, the left Kan extension $\Lan_K F$ is given by \eqref{eq:kan-ext}, and there is an adjunction \eqref{eq:lan-adj},\vspace{-8pt}\\
          \begin{minipage}{0.45\linewidth}
            \begin{equation}
              \label{eq:kan-ext}
              \Lan_{K}F \cong \int^{C \in \C}\D(KC, -) \cdot FC
            \end{equation}
          \end{minipage}
          \hfill%
          \begin{minipage}{0.45\linewidth}
            \begin{equation}
              \label{eq:lan-adj}
              \Lan_K\dashv K^* \colon  [\D, \E] \rightarrow [\C, \E]
            \end{equation}
          \end{minipage}\\
          where $K^*$ is precomposition with $K$.
    \item\label{prop:kan-props:adj} Left adjoints preserve left Kan extensions.
    \item\label{prop:kan-props:ff} If $K$ is fully faithful, then $\Lan_K F$ extends $F$: we have \(K^* (\Lan_K F) = (\Lan_K F)K \cong F\).
  \end{enumerate}
\end{prop}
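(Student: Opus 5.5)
The plan is to prove the three parts of \Cref{prop:kan-props} in order, obtaining (2) and (3) from the coend formula and the adjunction in (1).

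\emph{Part (1).} Fix $G\colon \D\to\E$ and set $L = \int^{C\in\C}\D(KC,-)\cdot FC$. This coend exists because $\C$ is small and $\E$ is cocomplete, so coends and copowers exist (as noted above). Natural transformations $L\to G$ are computed by the following chain of bijections, natural in $G$:
\begin{align*}
[\D,\E](L,G) &\cong \int_{D\in\D}\E\Big(\int^{C}\D(KC,D)\cdot FC,\, GD\Big)
\cong \int_{D}\int_{C}\E\big(\D(KC,D)\cdot FC,\, GD\big)\\
&\cong \int_{C}\int_{D}\Set\big(\D(KC,D),\, \E(FC,GD)\big)
\cong \int_{C}\E(FC, GKC) \cong [\C,\E](F,GK).
\end{align*}
The justifications are as follows.
\begin{itemize}
\item The first step expresses natural transformations as an end.
\item The second uses that $\E(-,GD)$ turns coends into ends.
\item The third combines the copower adjunction with Fubini for ends.
\item The fourth is the Yoneda lemma \eqref{eq:y}, applied to the covariant presheaf $\E(FC,G-)$ on $\D$.
\item The last again expresses natural transformations as an end.
\end{itemize}
This gives a natural bijection $[\D,\E](L,G)\cong[\C,\E](F,GK)$. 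Taking $G=L$ and the identity on the left produces the unit $\eta\colon F\to LK$. Naturality then shows that the bijection is $\hat\alpha\mapsto \hat\alpha K\circ\eta$, which is exactly the universal property of $\Lan_KF$. Since this works for every $F$, we obtain the adjunction $\Lan_K\dashv K^*$.

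\emph{Part (2).} Let $H\colon\E\to\E'$ have a right adjoint $R$. For any $G'\colon\D\to\E'$ we compute
\[
[\D,\E'](H\Lan_KF,G')\cong[\D,\E](\Lan_KF,RG')\cong[\C,\E](F,RG'K)\cong[\C,\E'](HF,G'K).
\]
The first and last bijections use that $H\dashv R$ lifts pointwise to functor categories, and the middle one is the universal property of $\Lan_KF$. Tracing the identity through this chain shows that the resulting unit is $H\eta$. Hence $(H\Lan_KF, H\eta)$ is a left Kan extension of $HF$ along $K$. This argument does not need the cocompleteness assumption.

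\emph{Part (3).} If $K$ is fully faithful, then $\D(KC',KC)\cong\C(C',C)$ naturally. The coend formula then gives
\[
(\Lan_KF)(KC)\cong\int^{C'}\C(C',C)\cdot FC'\cong FC,
\]
where the last step is the co-Yoneda isomorphism \eqref{eq:y}.(3), read in $\E$ with copowers in place of products. It remains to check that this composite isomorphism is the component $\eta_C$, so that $\eta$ itself is invertible. Evaluating the coprojection at the identity $KC\to KC$ shows this.

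The main obstacle is the bookkeeping in (1). One has to verify that the Fubini interchange and the co-Yoneda step in $\E$ are legitimate for copowers, which means checking dinaturality carefully. One also has to confirm that the composite bijection is induced by $\eta$, rather than merely being some natural isomorphism. I would handle the latter by Yoneda on the functor $[\C,\E](F,(-)K)$.
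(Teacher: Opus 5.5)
Your proposal is correct and follows the standard argument the paper relies on; the paper gives no proof of its own and only points to Mac Lane, Ch.~X. That argument uses the end/coend chain (hom out of a coend, copower adjunction, Fubini, Yoneda) for (1), the lifted adjunction $H\circ(-)\dashv R\circ(-)$ for (2), and co-Yoneda together with the identity coprojection for (3). Two points are worth making explicit, though neither affects correctness: your proof of (3) uses the pointwise coend formula, and hence the hypotheses of (1); and when $\D$ is large, the end over $\D$ should be read in a larger universe.
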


\begin{rem}\label{rem:psh-functor}
\begin{enumerate} \item   Left Kan extensions along the Yoneda embedding are particularly simple.
  For a small category \(\C\) and a complete category~\(\E\), the left Kan extension  of a functor \(F \colon \C^{\op} \rightarrow \E\) along $\y\colon \C^\op\to [\C, \Set]$
  simplifies by the Yoneda lemma to
  \[(\Lan_{\y}F)X =  \int^{C} XC \cdot FC.\]
  The functor $\Lan_\y F$ has a right adjoint, the \emph{nerve functor} 
  \begin{align}
    \label{eq:nerve}
    \nu_F \colon \E \rightarrow [\C, \Set], \qquad \nu_F E = \E(F(-), E).
  \end{align}
 \item  The functor $F$ is \emph{dense} if the comonad on $\E$  induced by  the adjunction $\Lan_\y F \dashv \nu_F$  is the identity comonad, that is,
  \(\Lan_\y F \cdot \nu_F \cong \Id\).
  This is equivalent to $\nu_F$ being fully faithful, or more concretely, that every $E \in \E$ is the canonical colimit of the diagram
        \((F \darr \E) \rightarrow \E \) given by \((f \colon FC \rightarrow E) \mapsto FC\).
        Dense functors are cancellable~\cite[Theorem 5.13]{k05}: if $K \cong JP$ is dense and $J$ is fully faithful then both $P$ and $J$ are dense.
  \item The construction of presheaf categories is (pseudo-)func\-to\-rial: for a functor  $G\colon \C\to \D$ of small categories, we put  \(\E = \PSh \D\) and
        \(F = \y_{\D} \cdot G^{\op} \colon \C^\op \rightarrow \D^\op \rightarrow \PSh \D \), then we have $\Lan_{\y_{\C}}F \cong \Lan_G$:
  \begin{equation}
    \label{eq:psh}
    \Lan_G \colon \PSh \C \rightarrow \PSh \D, \qquad X \mapsto \int^{C} XC \cdot F C.
  \end{equation}
  Note that \(\Lan_G\) \emph{extends} \(G\), viz.\  \(\Lan_G(\y_{\C}C) \cong \y_{\D}(GC)\), and has $G^*$ as a right adjoint.
\end{enumerate}
\end{rem}

\subparagraph*{Day Convolution.}
It is well-known that every monoid $M$ extends uniquely to a join-bilinear monoid on its powerset $\pow M$ whose multiplication agrees with $M$ on singletons.
This construction has a categorical generalization~\cite{day70}: for a small monoidal category \((\C, \otimes, I)\), the category \(\PSh \C\)
can be equipped with a unique monoidal structure \(*\) preserving colimits in both variables, called the \emph{Day convolution}~\cite{day70}. A concrete formula is therefore by
\begin{align*}
  X * Y \cong \int^{C, C' \in \C} X C \cdot X' C' \cdot \y (C \otimes C')
\end{align*}
with unit \(\y I \in \PSh \C\).
Note that the presheaf \(X * Y\) is given by the left Kan extension of \(\C \times \C \xrightarrow{X \times Y} \Set \times \Set \xrightarrow{\times} \Set\) along \(\C \times \C \xrightarrow{\otimes} \C\).

The monoidal category $(\PSh \C, *, \y I)$ is biclosed, with internal homs given by  \[(X \wand Y)(C) \;=\; \Nat(X,Y(C \otimes -)) \;\cong\; \Nat(\y C * X, Y).\]

 Equipping $\PSh \C$ with the monoidal structure of Day convolution makes \(\y \colon \C^{\op} \to \PSh \C\) a strong monoidal functor. Moreover,~\(*\) preserves colimits in each argument (due to biclosure). In fact, Day convolution is universal for these two properties~\cite{bk86}:
\begin{thm}\label{thm:day-conv-pres}
  Let \((\C, \otimes)\) be a small monoidal category and let \((\D, \otimes)\) be a cocomplete monoidal category whose tensor preserves colimits in each argument.
  Every strong monoidal functor $F \colon  \C^{\op} \rightarrow \D$ extends, via left Kan extension along $\y$, to a unique colimit-preserving strong monoidal functor $\hat{F} \colon \PSh \C \rightarrow \D$ with $\hat{F} \cdot \y \cong F$.
\end{thm}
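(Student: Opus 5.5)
The plan is to take $\hat F = \Lan_\y F$, which by \Cref{rem:psh-functor} is $\hat F X = \int^{C} XC \cdot FC$. It exists because $\D$ is cocomplete, and it is a left adjoint (with right adjoint the nerve $\nu_F$ from \eqref{eq:nerve}), so it preserves all colimits. Since $\y$ is fully faithful, \Cref{prop:kan-props}.\ref{prop:kan-props:ff} gives $\hat F\y \cong F$.

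Uniqueness is a density argument. Any colimit-preserving $G$ with $G\y \cong F$ satisfies
\[ GX \cong G\Big(\int^C XC\cdot \y C\Big) \cong \int^C XC \cdot G\y C \cong \int^C XC\cdot FC = \hat F X, \]
by the co-Yoneda lemma \eqref{eq:y}.(3). So the underlying functor is unique up to isomorphism. To get uniqueness of the monoidal structure as well, we check that a strong monoidal structure on $G$ is determined by its restriction to representables, by the same colimit argument applied to the structure maps.

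For existence of the strong monoidal structure, first set the unit comparison to $\hat F(\y I) \cong F I \cong I_\D$, using that $F$ is strong monoidal. For the tensor we compute, using the Day convolution formula, colimit preservation of $\hat F$, the isomorphism $\hat F \y\cong F$ and strength of $F$:
\[ \hat F(X*Y) \cong \int^{C,C'} XC\cdot YC'\cdot \hat F\y(C\otimes C') \cong \int^{C,C'} XC\cdot YC'\cdot (FC\otimes FC'). \]
Because $\otimes_\D$ preserves colimits in each argument, copowers and coends commute with it. Hence the last expression is isomorphic to
\[ \Big(\int^C XC\cdot FC\Big)\otimes\Big(\int^{C'} YC'\cdot FC'\Big) = \hat F X\otimes \hat F Y. \]
Each step is natural in $X$ and $Y$, which gives the comparison isomorphism $\hat FX\otimes\hat FY\cong \hat F(X*Y)$.

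The main obstacle is verifying the associativity and unit coherence axioms for these comparison maps. The plan is to reduce them to representables. Both sides of each coherence diagram are natural transformations between functors $\PSh\C^{\times 3}\to\D$ (resp.\ $\PSh\C\to\D$) that preserve colimits in each variable separately. Every presheaf is a colimit of representables, so two such transformations agree as soon as they agree on triples $(\y C,\y C',\y C'')$. On representables, the comparison maps reduce, via $\y(C)*\y(C')\cong \y(C\otimes C')$, to the structure maps of $F$ composed with those of the strong monoidal $\y$. The axioms then follow from coherence of $F$ and of $\y$. If desired, one could instead appeal directly to the universal property of Day convolution from~\cite{bk86}, but the reduction to representables is the route I would write out.
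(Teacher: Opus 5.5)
The paper does not prove this theorem itself; it quotes it from Im and Kelly~\cite{bk86}. Your argument is a correct, self-contained version of the standard proof, and each step is sound:
\begin{itemize}
\item $\hat F=\Lan_\y F$ is cocontinuous because it is left adjoint to $\nu_F$.
\item Uniqueness of the underlying functor is density of $\y$.
\item The comparison $\hat FX\otimes\hat FY\cong\hat F(X*Y)$ follows from your coend chain, using only that $\otimes_\D$ is cocontinuous in each variable.
\item Coherence does reduce to representables. The domain functors $(X,Y,Z)\mapsto(\hat FX\otimes\hat FY)\otimes\hat FZ$ and $X\mapsto\hat FX\otimes I_\D$ are cocontinuous in each variable separately. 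Hence the coprojections from representables are jointly epimorphic, one variable at a time.
\end{itemize}
Compared with the citation, your route uses only the coend toolkit of \Cref{sec:categ-presh}. It also makes visible that density of $\y$ and separate cocontinuity of $\otimes_\D$ are all that is needed. The cited result is more general: it is formulated for enriched presheaves, and as an equivalence between cocontinuous (strong) monoidal functors $\PSh\C\to\D$ and (strong) monoidal functors $\C^\op\to\D$. It therefore also controls monoidal natural transformations.

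Two compressed points should be made explicit.

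First, your uniqueness argument requires reading $\hat F\y\cong F$ as a \emph{monoidal} isomorphism, with $\y$ carrying its strong monoidal structure $\y C*\y C'\cong\y(C\otimes C')$. With only a plain functor isomorphism, the restriction of $G$'s structure maps to representables could be a different, non-isomorphic strong monoidal structure on the same underlying functor as $F$. In that case $G$ need not be monoidally isomorphic to $\hat F$. Under the monoidal reading, your ``determined by its restriction to representables'' argument is exactly right.

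Second, the claim that on representables your comparison map is the composite
\[\hat F\y C\otimes\hat F\y C'\cong FC\otimes FC'\cong F(C\otimes C')\cong\hat F\y(C\otimes C')\cong\hat F(\y C*\y C')\]
needs you to trace the co-Yoneda isomorphisms through your coend chain. The relevant coprojection is the one at $(C,C')$ indexed by the identities. This is routine, but it is where the actual verification of the coherence axioms lives.
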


\section{Monoidal Structures from Actions}
\label{sec:mon-struct-from-act}

In this section we present  our general method to construct a closed monoidal structure on a category from a suitable left action on that category. We will show in subsequent sections that this method instantiates to both categories of presheaves and nominal sets and yields the construction of their respective substitution tensors.

\begin{defn}[Left action~\cite{jk01}]
  A \emph{(left) action} of a monoidal category $(\V, \otimes, I)$ on a category $\A$ is given by a functor \(\tl \colon \V \times \A \rightarrow \A\) with isomorphisms $I \tl A \cong A$ and $V \tl (U \tl A) \cong (V \otimes U) \tl A$ satisfying coherence conditions similar to monoidal categories.
\end{defn}

\begin{expl}[$\PSh \F$]\label{ex:setf-action}
As outlined in \Cref{sec:substitution-intro}, the intention is that $\tl$ models a notion of substitution. For instance, we have seen that substitution in $\A=\PSh \F$ is captured by the left action
\( \tl\colon \F^\op \times \PSh \F \to \PSh \F\) defined by \(V\tl X = X^V\).
\end{expl}

Left actions, and more generally functors $F\colon \V\times \A\to\A$, can be extended to bifunctors on $\A$ via the following technique, mirroring the construction of the substitution tensor \eqref{eq:F-sub}.

\begin{constr}[$J$-extension]\label{constr:ext}
  Let $J \colon \V \rightarrow \A$ be a functor from a small into a cocomplete category. The \emph{$J$-extension} of a bifunctor $F \colon \V \times \A \rightarrow \A$ is the bifunctor $F_J$ given by
  \begin{align}
    \label{eq:ext}
\Lan_{J \times 1}F \colon \A \times \A \rightarrow \A, \qquad
    (A, B) \mapsto \int^{V \in \V} \A(JV, A) \cdot F(V, B).
  \end{align}
\end{constr}

\begin{expl}[$\PSh \F$]
In the setting of \Cref{ex:setf-action}, the functor~$J$ is given by the Yoneda embedding $\y\colon \F^\op\to \PSh{ \F }$, and the $\y$-extension of $\tl$ yields  precisely the substitution tensor of $\PSh\F$, that is,
\( X\sub Y \;=\; X\tl_\y Y \).
\end{expl}
We now investigate when the extension of a left action gives rise to a closed monoidal structure. The key ingredient is the notion of a \emph{well-behaved functor}~\cite[Def.~4.1]{acu14}:

\begin{defn}[Well-behaved functor]
  A functor $J \colon \V \rightarrow \A$ is \emph{well-behaved for $F \colon \V \rightarrow \A$} if is fully faithful, dense, and its nerve preserves the left Kan extension $\Lan_J F$:
  \begin{align}
    \label{eq:wb}
    \Lan_J (\A(JV, F-)) A \cong \A(JV, \Lan_J(F)A) \qquad \text{for all $V \in \V$ and $A \in \A$.}
  \end{align}
  The functor $J$ is \emph{well-behaved} if it is well-behaved for all $F \colon \V \rightarrow \A$.
\end{defn}

\begin{expl}[Well-behaved functors]\label{expl:wb}
  \begin{enumerate}
    \item\label{expl:wb:y} For every small category $\V$, the Yoneda embedding $\y\colon \V^\op \rightarrow \PSh \V$ is well-behaved because the nerve $\nu_\y \cong \Id$ is trivial by Yoneda.
    \item\label{expl:wb:fset} The inclusion $V \colon \F \incl \Set$ is well-behaved~\cite{acu14}.
  \end{enumerate}
\end{expl}

Reflective subcategories of presheafs closed under colimits weighted by the subcategory (called \emph{self-cocomplete} in \cite[Section 8]{s17}) yield a common source of well-behaved functors.
\begin{lem}\label{lem:wb-psh-subcat}
  Let $\iota \colon \A \incl \PSh \V$ be a full reflective subcategory containing all representables.
  Then the corestriction $\y' \colon \V^\op \rightarrow \A$ of the Yoneda embedding is well-behaved for $F \colon \V^\op \rightarrow \A$ iff $\Lan_{\iota F}\colon \PSh \V \rightarrow \PSh \V$ (co-)restricts to $\A$.
\end{lem}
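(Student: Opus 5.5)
The plan is to make both sides of the well-behavedness condition \eqref{eq:wb} explicit as presheaves, using the fact that colimits in $\A$ are computed by reflecting colimits in $\PSh\V$. Throughout, write $r\dashv\iota$ for the reflection. I read $\Lan_{\iota F}$ as the cocontinuous functor $\Lan_\y(\iota F)\colon \PSh\V\to\PSh\V$, $X\mapsto \int^{W} XW\cdot \iota FW$. By \Cref{rem:psh-functor}, this is the left Kan extension of $\iota F$ along $\y$.

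First, I dispose of full faithfulness and density of $\y'$, which hold independently of $F$. Full faithfulness is inherited from $\y$, because $\iota$ is full and $\iota\y'=\y$. For density, I compute the nerve \eqref{eq:nerve}: $\nu_{\y'}A=\A(\y'(-),A)\cong\PSh\V(\y(-),\iota A)\cong \iota A$ by Yoneda. So $\nu_{\y'}\cong\iota$, which is fully faithful, and hence $\y'$ is dense by \Cref{rem:psh-functor}.

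Next, I compute both sides of \eqref{eq:wb} for $A\in\A$ and $V\in\V$.
\begin{itemize}
\item The right-hand side. By \eqref{eq:kan-ext}, $\Lan_{\y'}F\,A=\int^{W}\A(\y'W,A)\cdot FW$, where the coend and copowers are taken in $\A$. Since $r$ is a left adjoint fixing $\A$, this is $r\big(\int^W (\iota A)W\cdot \iota FW\big)=r\,\Lan_{\iota F}(\iota A)$. Applying $\A(\y'V,-)\cong(\iota -)(V)$ gives $(\iota r\,\Lan_{\iota F}(\iota A))(V)$.
\item The left-hand side. Here $\Lan_{\y'}$ of the $\Set$-valued functor $W\mapsto \A(\y'V,FW)=(\iota FW)(V)$ evaluated at $A$ is $\int^W (\iota A)W\times(\iota FW)(V)$. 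Since colimits in $\PSh\V$ are pointwise, this equals $(\Lan_{\iota F}(\iota A))(V)$.
\end{itemize}

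I then have to check that the canonical comparison map in \eqref{eq:wb}, the one induced by the universal property of the Kan extension, is, under these identifications, the component at $V$ of the unit $\eta\colon \Lan_{\iota F}(\iota A)\to\iota r\,\Lan_{\iota F}(\iota A)$. This is a diagram chase through the coprojections of the coends. Once it is done, condition \eqref{eq:wb} for all $V$ and $A$ says exactly that $\Lan_{\iota F}(\iota A)$ lies in the (replete) image of $\iota$ for every $A\in\A$.

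Finally, I need to pass from "$\Lan_{\iota F}$ maps $\A$ into $\A$" to the (co-)restriction in the statement. If the statement is read as $\Lan_{\iota F}$ sending objects of $\A$ to $\A$, this is already the claim. The converse direction is trivial, since restriction implies that $\eta$ is invertible on those objects. The main obstacle I expect is the bookkeeping in identifying the canonical comparison of \eqref{eq:wb} with the reflection unit, rather than merely exhibiting some abstract isomorphism. I would handle this by checking it on the coprojections $(\iota A)W\times(\iota FW)(V)\to\cdots$, where both maps are induced by the same cowedge.
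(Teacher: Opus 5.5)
Your proposal is correct and follows essentially the same route as the paper. Both arguments use $\nu_{\y'}\cong\iota$ and the fact that colimits in $\A$ are reflections of colimits in $\PSh\V$. With these, the two sides of \eqref{eq:wb} become $\Lan_\y(\iota F)(\iota A)$ and $\iota r\,\Lan_\y(\iota F)(\iota A)$ evaluated at $V$, so well-behavedness amounts to these coends already lying in $\A$. Your explicit identification of the canonical comparison with the reflection unit $\eta$, and your explicit check of full faithfulness and density, make precise what the paper only asserts through an abstract chain of isomorphisms.
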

\begin{proof}
  That $\Lan_\y (\iota F)$ corestricts to $\A$ is equivalent to commutativity of the following diagram:
  \begin{equation}
    \label{eq:wb-psh-subcat}
    \begin{tikzcd}
      \V^\op \dar{\y'} \ar[shiftarr={xshift=-35pt}]{dd}{\y} \drar{F} & \\
      \A \dar{\iota} \rar[swap]{\Lan_{\y'} F} & \A \dar{\iota} \\
      \PSh \V \rar{\Lan_\y(\iota F)} & \PSh \V
    \end{tikzcd}
  \end{equation}
  Since the $\y'$-nerve is isomorphic to $\iota$,
  the corestriction $\y'$ of the Yoneda embedding is well-behaved with respect to $F \colon \V^\op \rightarrow \A$ iff for all $A \in \V$ and all $C \in \A$
  \begin{equation}
    \label{eq:psh-sub-lan}
    \Lan_{\y'}(\A(\y' A, F))C \cong \int^B \iota(C)B \cdot \iota(FB)A
  \end{equation}
  is isomorphic to
  \begin{equation}
    \label{eq:psh-sub-nerve}
    \A(\y' A, \Lan_{\y'}(F)C) \cong \iota(\int^{B}\iota(C)B \cdot FB)A,
  \end{equation}
  which means precisely that \eqref{eq:wb-psh-subcat} commutes.

  Recall for a diagram $X_i \in \A$ we have that  $\colim_i \iota X_i$ is again in $\A$
  (more precisely: $\iota L \colim_i \iota X_i \cong \colim_i \iota X_i$)
  iff the isomorphism $\colim_i \iota X_i \cong \iota \colim_i X_i$ holds, since
  either of these isomorphisms extends the following chain
  \begin{align*}
   \iota L (\colim_i \iota X_i) \cong \iota (\colim_i L \iota X_i) \cong \iota (\colim_i X_i).
  \end{align*}
  to the respective other isomorphism.

  Of course the same holds for coends,
  so applied to \eqref{eq:psh-sub-lan}, every coend $\Lan_\y(\iota F)(\iota C) \cong \int^{B} \iota(C)B \cdot \iota (F B)$  lies in $\A$ (is a fixed point of $\iota L$), if and only if for all $A$ \eqref{eq:psh-sub-lan} and \eqref{eq:psh-sub-nerve} are isomorphic, which by the argument above holds if and only if $\y'$ is well-behaved for $F$.
\end{proof}

We introduce criterions such that the extension of an action yields a right-closed monoidal structure.
The action has to behave similar to substitutions in the sense that $V \tl A$ models a $V$-family in $A$.

\begin{thm}[Monoidality of $\tl_J$]\label{thm:gen-mon}
  \fbnote{we previously had $J$ as a parameter with assumption $A \tl J B \cong J(A \otimes B)$, but this then gives $JA \cong J(A \otimes I) \cong A \tl JI$, so this can be simplified.
  What are conditions for when for an object $J_I$ the functor $JA = A \tl J_I$ is well-behaved as below?}
  Let $\A$ be a cocomplete category and let $\V$ be a small monoidal category $(\V, \otimes, I)$ acting on $\A$ via $\tl \colon \V \times \A \rightarrow \A$.
  Let $J_I \in \A$ be an object such that the functor
  \(J \colon \V \rightarrow \A\) defined by \(JV = V \tl J_I \) is well-behaved for all $(-) \tl A$, $A \in \A$.
  Then every natural isomorphism
  \begin{equation}
    \label{eq:J-ass}
    \aiso \colon (V \tl A) \tl_J B \cong V \tl (A \tl_J B),
  \end{equation}
  induces natural isomorphisms
  \begin{align}
    \label{eq:gen-mon-ass}
    \alpha \colon (A \tl_J B) \tl_J C &\cong A \tl_J (B \tl_J C) \\
    \lambda \colon JI \tl_J A \cong A, \quad&\quad \rho \colon A \tl_J JI \cong A \label{eq:gen-mon-units} \\
    \gamma \colon JV \tl_J A &\cong V \tl A. \label{eq:gen-mon-comp}
  \end{align}
\end{thm}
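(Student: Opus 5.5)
I would first establish the composition isomorphism $\gamma$ in \eqref{eq:gen-mon-comp}, since it underlies everything else. By definition, $JV \tl_J A = \int^{W} \A(JW, JV)\cdot (W \tl A)$. Because $J$ is fully faithful, $\A(JW,JV)\cong \V(W,V)$, so this coend becomes $\int^{W}\V(W,V)\cdot (W\tl A)$. The co-Yoneda lemma (density, \Cref{eq:y}.(3), in its form valid in any cocomplete category) collapses this to $V \tl A$. Equivalently, $(-)\tl_J A$ restricted along $J$ is $\Lan_J(-\tl A)\cdot J\cong (-\tl A)$ by \Cref{prop:kan-props}.\ref{prop:kan-props:ff}.

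The left unitor is then the special case $V=I$ of $\gamma$: $JI\tl_J A\cong I\tl A\cong A$, using the unit isomorphism of the action. For the right unitor, I note that $JI = I\tl J_I\cong J_I$. Hence $A\tl_J JI\cong \int^{V}\A(JV,A)\cdot (V\tl J_I) = \int^V \A(JV,A)\cdot JV$. This is the canonical colimit expressing $A$ in terms of $J$, so it is $\cong A$ precisely by density of $J$ (\Cref{rem:psh-functor}), i.e.\ $\Lan_J J\cong \Id$.

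For associativity I would expand the left-hand side: $(A\tl_J B)\tl_J C=\int^{W}\A(JW, A\tl_J B)\cdot (W\tl C)$. The key step, and the main obstacle, is to rewrite $\A(JW, A\tl_J B)=\A(JW,\Lan_J(-\tl B)A)$. Well-behavedness of $J$ for $(-)\tl B$, namely \eqref{eq:wb}, gives $\A(JW,\Lan_J(-\tl B)A)\cong \int^{V}\A(JV,A)\cdot\A(JW, V\tl B)$. Substituting this and using Fubini for coends, together with the fact that copowers commute with coends, yields
\[\int^{V}\A(JV,A)\cdot \int^{W}\A(JW,V\tl B)\cdot(W\tl C) = \int^V \A(JV,A)\cdot\big((V\tl B)\tl_J C\big).\]
Now the hypothesis \eqref{eq:J-ass} gives $(V\tl B)\tl_J C\cong V\tl(B\tl_J C)$. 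The expression therefore becomes $\int^V\A(JV,A)\cdot (V\tl (B\tl_J C)) = A\tl_J(B\tl_J C)$.

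Naturality of all these isomorphisms follows because each step is a natural isomorphism: Yoneda/density, \eqref{eq:wb} (natural in $W$ and $A$), Fubini, and $\aiso$. I would only remark on coherence, as the statement asks merely for the isomorphisms. The delicate point I expect is ensuring that the isomorphism \eqref{eq:wb} is natural in $W$, so that it can be pushed inside the outer coend. I would handle this by observing that both sides are functorial in $W$ and that the comparison map is the canonical one, $\Lan_J(\A(JW,-\tl B))\to\A(JW,\Lan_J(-\tl B))$, which is natural in $W$.
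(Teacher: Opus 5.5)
Your proposal is correct and follows essentially the same route as the paper. It obtains $\gamma$ from full faithfulness of $J$ and the left unitor as $\gamma$ at $V=I$ composed with the action unit. The right unitor comes from density after identifying $V \tl JI$ with $JV$: you use the action unitor $I\tl J_I\cong J_I$, while the paper uses $V\tl JI\cong J(V\otimes I)\cong JV$. Associativity comes from well-behavedness \eqref{eq:wb}, Fubini and $\mathbf{a}$, computed in the opposite direction to the paper. Your remark that the isomorphism in \eqref{eq:wb} should be taken as the canonical comparison map, hence natural in $W$, makes explicit a point the paper uses silently when moving it inside the outer coend.
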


\begin{rem}\label{rem:coherence}
  \begin{enumerate}
    \item In \Cref{thm:gen-mon} the isomorphisms $a_{A, C, D}$ have coends as domains, so by choosing the coends appropriately we may without loss of generality assume that the $a$'s are in fact {equalities}.
          If we do the same for the its extension $\alpha$ and assume that the action $\tl$ is strict, then $(\A, \tl_J, JI)$ is a strict monoidal category.
          For an arbitrary natural isomorphism $a$ however, additional compatibility conditions on $\aiso$ are required (see the full version), so that $(\A, \tl_J, JI, \alpha, \lambda, \rho)$ satisfies the monoidal axioms.
    \item Proch\'azkov\'a~\cite{p25} recently showed how to construct skew-monoidal structures from actions of a skew-monoidal category under conditions similar to~\Cref{thm:gen-mon}.
          However, \cite{p25} requires an adjoint to $J$, so to apply it one would have to lift the action from $\V$ to the presheaf category $[\V^\op, \Set]$, which does not seem easier.
  \end{enumerate}
\end{rem}

\begin{thm}[Closedness of $\tl_J$]\label{thm:gen-mon-clo}
  In the setting of \Cref{thm:gen-mon}, suppose that $\A((-) \tl A, B)$ is a fixed point of the monad $\nu_J \Lan_J$, that is, we have an isomorphism \eqref{eq:asm-ra}
   \begin{equation}
     \label{eq:asm-ra}
     \nu_J \Lan_J \A((-) \tl A, B) \cong \A((-) \tl A, b)
   \end{equation}
   for all $A, B \in \A$.
   Then defining \((A \dwand B) = \Lan_J (\A((-) \tl A, B))\) yields an adjunction 
   \begin{equation}
     \label{eq:gen-mon-adj}
     \A(A \tl_J B, C) \cong \A(A, B \dwand C)
   \end{equation}
\end{thm}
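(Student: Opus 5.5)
The plan is to compute both hom-sets as sets of natural transformations between presheaves on $\V$, and then to use the assumption \eqref{eq:asm-ra} together with density of $J$ to match them. Throughout, $\Lan_J$ applied to a presheaf $P \colon \V^\op \to \Set$ means the left adjoint of the nerve $\nu_J = \A(J-, -)$, that is,
\[ \Lan_J P = \int^{V} PV \cdot JV . \]
Its existence follows from \Cref{rem:psh-functor}, since $\A$ is cocomplete and $\V$ is small. In particular,
\[ B \dwand C = \int^V \A(V \tl B, C)\cdot JV . \]

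\textbf{Step 1: rewrite the left-hand side as an end.} Unfold the definition of the $J$-extension from \Cref{constr:ext}. Then use that $\A(-, C)$ turns coends into ends and copowers into $\Set$-homs:
\[ \A(A \tl_J B, C) \;\cong\; \A\Big(\int^V \A(JV,A)\cdot (V\tl B),\, C\Big) \;\cong\; \int_V \Set\big(\A(JV,A),\, \A(V \tl B, C)\big). \]
By the end formula for natural transformations, this end is $\Nat(\nu_J A, \A((-)\tl B, C))$.

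\textbf{Step 2: transfer to the hom into $B \dwand C$.} Apply \eqref{eq:asm-ra} with $(A,B) := (B,C)$. It gives
\[ \A((-)\tl B, C) \;\cong\; \nu_J \Lan_J \A((-)\tl B, C) \;=\; \nu_J(B \dwand C). \]
Composing with this isomorphism, the set from Step 1 becomes $\Nat(\nu_J A, \nu_J (B\dwand C))$. Since $J$ is well-behaved, it is in particular dense, so by \Cref{rem:psh-functor} the nerve $\nu_J$ is fully faithful. Hence
\[ \Nat(\nu_J A, \nu_J (B\dwand C)) \;\cong\; \A(A, B\dwand C), \]
which is the adjunction \eqref{eq:gen-mon-adj}.

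\textbf{Step 3: naturality.} Each isomorphism in the chain is natural in $A$: this holds for the coend/end manipulation, for composition with a fixed isomorphism, and for full faithfulness of $\nu_J$. So we obtain $(-)\tl_J B \dashv B \dwand (-)$.

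The main obstacle is naturality in $C$ (and, for a parametrised adjunction, in $B$). Here one needs the isomorphism \eqref{eq:asm-ra} to be natural in $B$ and $C$. I would resolve this by taking \eqref{eq:asm-ra} to be the unit of the adjunction $\Lan_J \dashv \nu_J$ at $\A((-)\tl B, C)$, which is automatically natural. With that choice, the composite in Step 2 is the usual transposition along the adjunction. If the assumed isomorphism is arbitrary, the standard fact still applies: an object is a fixed point of an idempotent monad exactly when the unit there is invertible. The monad $\nu_J\Lan_J$ is idempotent because $\nu_J$ is fully faithful. So we may replace the given isomorphism by the unit without loss of generality, and all naturality statements follow.
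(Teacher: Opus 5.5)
Your proposal is correct and follows the paper's proof essentially step for step. You rewrite $\A(A \tl_J B, C)$ as $\Nat(\nu_J A, \A((-)\tl B, C))$ via the coend/end calculus, replace the target presheaf using \eqref{eq:asm-ra}, and conclude by full faithfulness of $\nu_J$ (density of $J$). Your additional argument for naturality in $C$ and $B$ is a sound refinement that the paper leaves implicit: you replace the given isomorphism by the unit of the idempotent monad $\nu_J\Lan_J$, which must then be invertible.
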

We apply these constructions to $\PSh\F$ and related presheaf categories in the next section.

\begin{expl}[$\tl_J$ recovers Day convolution]\label{expl:gen-mon-day}
  Given a monoidal category $(\V, \otimes, I)$ we can set $\A = \PSh \V$ and take the action \(V \tl X = \y V \otimes X\)
  to recover Day convolution $\tl_J \cong \otimes_{\mathrm{Day}}$.
  Note that the assumptions of \Cref{thm:gen-mon} are satisfied since for $J_I = \y I$ we get $J = \y$ which is well-behaved by \iref{expl:wb}{y}.
\end{expl}

\begin{rem}
  \begin{enumerate}
    \item Well-behaved functors were introduced in a different way~\cite{acu14} to construct a monoidal structure on a functor category $[\D, \E]$.
          Given a well-behaved parameter $K \colon \D \rightarrow \E$, they define the tensor $\circ_K$ on $[\D, \E]$ via left Kan extension along $K$:
          \begin{equation*}
            X \circ_K Y = \Lan_K X \cdot Y \cong \int^{D} XD \cdot \E(KD, Y-) \qquad \text{for } X, Y \in [\D, \E].
          \end{equation*}
          The constructions $\circ_K$ and $\tl_J$ are orthogonal. Their intersection is the substitution tensor on $\PSh\F$, which corresponds to choosing $K=V\colon \F\to \Set$. However, unlike our $\tl_J$, the construction $\circ_K$ does not produce substitution tensors on other presheaf categories.
  \end{enumerate}
\end{rem}

\section{Substitution in Presheaves}
\label{sec:psh-substitution}
As a first family of applications of the general theory of \Cref{sec:mon-struct-from-act}, we present a uniform perspective on several substitution tensors in categories $\PSh{ \ctx }$ of presheaves over (untyped) variable contexts. We do so by focusing on the properties of index category $\ctx$. This approach is conceptually quite different from existing general accounts of substitution~\cite{p03,pt08}, where $\ctx$ is a freely generated category of contexts according to some  ``rule format'' $\mathcal{S}$ describing context manipulation.
Abstractly, this is modeled as $\ctx = \mathcal{S} 1$ for a pseudo-monad $\mathcal{S}$ on the category $\cat{Cat}$ of small categories.

\begin{nota}[Categories of contexts]\label{nota:subcat-fin-sets}
We denote by \(\I,\, \S,\, \B\hookrightarrow \F\) the non-full wide subcategories of $\F$ with morphisms given by injections, surjections, and bijections, resp.
\end{nota}

As hinted in \Cref{sec:substitution-intro}, objects of the respective categories are untyped contexts.
Morphisms represent the permitted context manipulations: exchange (\B), weakening + exchange~(\I), contraction + exchange (\S), weakening + exchange + contraction (\F).

\subparagraph*{Pullbacks.} Given a set $Y$ we denote by $\Set/ Y$ the slice category over $Y$, whose objects are maps $b \colon B \rightarrow Y$, and with morphisms $b \rightarrow c$ those maps $f \colon B \rightarrow C$ with $c \cdot f = b$.
Abusing notation, we sometimes identify a map $b \colon B \rightarrow Y$ with its family $(B_y)_{y \in Y}$ of fibres $B_y = b^{-1}(y)$.
The \emph{base change} of $b$ along $f$ is the projection \( f^* b \colon f^{*} {B} \rightarrow X\) of the pullback \eqref{eq:pullback} of \(b\) along \(f\).\vspace{-8pt}\\
\begin{minipage}{0.4\linewidth}
  \begin{equation}\label{eq:pullback}
    \begin{tikzcd}
      f^{*}B \rar{r_{f}} \pullbackangle{-45}  \ar{d}[swap]{f^* b} & B \dar{b} \\
      X \rar{f} & Y.
    \end{tikzcd}
  \end{equation}
\end{minipage}
  \hfill%
\begin{minipage}{0.4\linewidth}
  \begin{equation}
    \label{eq:dep-prod}
    p_f \colon \prod_y B_y \rightarrow \prod_x B_{f(x)}
  \end{equation}
\end{minipage}\\
It has fibers \((f^{*}B)_{x} = B_{f(x)}\), and we denote the \emph{re-indexing} projection \(f^{*}B \rightarrow B\) by \(r_{f}\).
Base change extends to a functor \(f^* \colon \Set / Y \rightarrow \Set / X\).
The functor $f^*$ has left and right adjoints given by \emph{dependent sum} and \emph{dependent product}:
the left adjoint $\sum_f \colon \Set / X \rightarrow \Set / Y$ sends $a \colon A \rightarrow X$ to $f \cdot a$, and the right adjoint $\prod_f$ is defined on fibers via $(\prod_f A)_y = \prod_{f(x) = y}A_x$.
If $f \colon X \rightarrow 1$ is the terminal map we write $\prod_f A = \prod_x A_x$. We denote by $p_f$ the map \eqref{eq:dep-prod} with $\pi_x \cdot p_f = \pi_{f(x)}$, where $\pi_{(-)}$ are product projections.

The following definition isolates the technical conditions on a category of contexts to admit a closed substitution structure:
\begin{defn}[Contextuality]\label{defn:contextual}
  A subcategory $\ctx \hookrightarrow \F$ is \emph{contextual} if (1) $\ctx$ contains all objects of $\F$, (2) $\ctx$ is closed under the monoidal structure $+$ of $\F$, and (3) $\ctx$ is pullback-stable: for every pullback square \eqref{eq:pullback} in $\F$, if $f\in \ctx$ then $r_f\in \ctx$.
\end{defn}
We emphasize that condition (2) only requires that $f, g \in \ctx$ implies $f + g \in \ctx$, \emph{not} that $+$ is a coproduct in $\ctx$.

\begin{expl}
  The categories $\F$, $\I$, $\S$, and \(\B\) are contextual.
  A more obscure example is given by the contextual subcategory of $\S$ generated by the map $3 \rightarrow 1$.
\end{expl}

An equivalent characterization of contextuality is given by:

\begin{prop}\label{prop:reindexing}
  Let \(\ctx \hookrightarrow \F\) be a wide subcategory of $\F$ containing all bijections that is closed under \(+\).
  Then $\ctx$ is contextual if and only if $\ctx$ is closed under \(\times\) and is \emph{prime}:
  \begin{center}
    for all $f, g \in \F$: \qquad $f+ g \in \ctx \Longrightarrow f, g \in \ctx$.
  \end{center}
\end{prop}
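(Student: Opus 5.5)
We prove both directions by elementary manipulations of finite sets and maps, using that bijections lie in $\ctx$, so membership in $\ctx$ is invariant under composing with isomorphisms.

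\emph{Contextual $\Rightarrow$ closed under $\times$ and prime.} For primality, let $f\colon X\to Y$ and $g\colon X'\to Y'$ with $f+g\in\ctx$. The square with top $f$, bottom $f+g$, and the coproduct injections $X\incl X+X'$, $Y\incl Y+Y'$ as vertical maps is a pullback in $\F$, because coproducts in $\F$ are disjoint and stable under pullback. The re-indexing map $r_{f+g}$ of this square is (up to bijection) $f$, so pullback-stability gives $f\in\ctx$, and symmetrically $g\in\ctx$. For closure under $\times$, factor $f\times g=(f\times 1)\cdot(1\times g)$. The map $f\times 1_{Y'}\colon X\times Y'\to Y\times Y'$ is the re-indexing map of the pullback of $f$ along the projection $Y\times Y'\to Y$, so it lies in $\ctx$. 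The same argument applies to $1\times g$, and composition finishes. Alternatively, $f\times 1_{Y'}\cong\coprod_{Y'} f$, which lies in $\ctx$ by closure under $+$ (with the empty case trivial).

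\emph{Closed under $\times$ and prime $\Rightarrow$ contextual.} Conditions (1) and (2) hold by assumption, so only pullback-stability remains. Let $f\colon X\to Y$ be in $\ctx$ and $b\colon B\to Y$ arbitrary, and let $r_f\colon f^*B\to B$ be the pullback projection. Decompose along fibres: $B\cong\coprod_{y}B_y$, $X\cong\coprod_y X_y$ with $X_y=f^{-1}(y)$, and $f^*B\cong\coprod_y X_y\times B_y$. Under these bijections
\[ f\cong\coprod_{y\in Y}\big(X_y\to 1\big),\qquad r_f\cong\coprod_{y\in Y}\big(\pi_2\colon X_y\times B_y\to B_y\big). \]
Primality, applied repeatedly to the decomposition of $f$ (finitely many summands), gives each $!_y\colon X_y\to 1$ in $\ctx$. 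Then $\pi_2=!_y\times 1_{B_y}$ up to the iso $1\times B_y\cong B_y$, which is in $\ctx$ by closure under $\times$ (identities lie in $\ctx$). Closure under $+$ then gives $r_f\in\ctx$, again up to bijections.

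\emph{Main obstacle.} The key bookkeeping is justifying that every identification is by bijections and that the fibrewise decomposition is a coproduct decomposition in the sense of $+$. Empty fibres need care: $X_y=\emptyset$ gives $\emptyset\to 1$, and primality must still be applied to it; the corresponding summand of $r_f$ is $\emptyset\to B_y$, and we need that map in $\ctx$. It is $(\emptyset\to 1)\times 1_{B_y}$, so closure under $\times$ covers it, and no special treatment is needed.
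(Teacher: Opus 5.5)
Your proof is correct and follows the paper's argument. Pullback-stability gives $\times$-closure by re-indexing along a projection, and primality by pulling back $f+g$ along a coproduct injection; the converse uses the same fibrewise decomposition $r_f\cong\coprod_y(!_y\times 1_{B_y})$, with primality supplying $!_y\in\ctx$.

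There are two small deviations. You pull back along $Y\incl Y+Y'$ directly, whereas the paper first reduces to constant maps $!+g\colon A+B\to 1+B'$. Your side remark $f\times 1_{Y'}\cong\coprod_{Y'}f$ also correctly notes that $\times$-closure already follows from $+$-closure and the presence of bijections.
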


\begin{asm}\label{asm:subcat}
 In the remainder of \Cref{sec:psh-substitution}, we fix a contextual subcategory $\ctx \hookrightarrow \F$.
The category $\ctx$ is closed under the monoidal structures $+$ (by definition) and $\times$ (by \Cref{prop:reindexing}). We
denote by \(\oplus\), $\otimes$ the Day convolution on \(\PSh{\ctx}\) w.r.t.\ \(+\), $\times$.
\end{asm}

Consider a substitution $\sigma$ of variables $a$ from a finite set~$A$ by terms $t_a$.
The substitution $\sigma$ then depends on the variables \emph{contravariantly}, and on the variables contained in the terms $t_a$ \emph{covariantly}.
This is captured abstractly by the following definition, which generalizes the left action $\tl$ from $\PSh\F$ to $\PSh{ \ctx }$:
\begin{defn}[Substitution presheaf]\label{def:sub}
  For \(A \in \ctx\) and \(X \in \PSh{ \ctx }\) we define the \emph{substitution presheaf}
  \begin{equation}
    \label{eq:sub}
    A \tl X = \int^{(B \rightarrow A) \in \F / A} \y B \cdot \prod_{a}XB_{a}.
  \end{equation}
\end{defn}
Concretely, elements of $(A \tl X)C$ are equivalence classes of pairs \(\sigma = [f, (t_a)_{a \in A}]\) of some $f \in \ctx(B, C)$ and a family of elements $t_a \in XB_a$ subject to the equations $[f' \cdot j, (t_a)] = [f', (j_a \cdot t_a)]$ for every bundle map $j \colon B \rightarrow B'$ and every $f' \in \ctx(B', C)$.
The following presentation of $\tl$ was used implicitly on $\PSh \F$ by Fiore et al.~\cite{ftp99} and explicitly on $\PSh \B$ by Tanaka~\cite{t00}:
\begin{cor}\label{cor:conv-rep-sub}
  The substitution presheaf satisfies $A \tl X \cong \bigoplus_{a \in A} X$.
\end{cor}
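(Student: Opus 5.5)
The plan is to unfold the iterated Day convolution $\bigoplus_{a \in A} X$ by its coend formula and match it with \eqref{eq:sub}. By the Day convolution formula for $+$, an $A$-fold tensor (by induction on $|A|$, using associativity of $\oplus$ and the fact that $\oplus$ preserves colimits in each variable) is
\[
  \Big(\bigoplus_{a\in A} X\Big) C \;\cong\; \int^{(B_a)_{a\in A} \in \ctx^A} \ctx\big(\textstyle\sum_a B_a, C\big) \times \prod_{a} XB_a .
\]
The induction step merges two coends with Fubini and then collapses the inner one with the co-Yoneda lemma \eqref{eq:y}.(3), using that $+$ on $\ctx$ is the restriction of $+$ on $\F$. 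This holds by contextuality (2).

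Next I rewrite \eqref{eq:sub} over the same indexing. An object $b\colon B\to A$ of $\F/A$ is, up to isomorphism, the family of fibres $(B_a)_{a\in A}$ with $B\cong\sum_a B_a$. A morphism $j\colon b\to b'$ in $\F/A$ is a map $B\to B'$ over $A$, i.e.\ a family $j_a\colon B_a\to B'_a$ with $j=\sum_a j_a$. So $\F/A\simeq\F^A$, and the coend in \eqref{eq:sub} is a coend over families. There is one subtlety: the coend in \eqref{eq:sub} must be indexed by bundle maps whose components lie in $\ctx$, since $X$ and $\y$ are only functorial on $\ctx$. I read $\F/A$ there as the category of bundles whose morphisms are bundle maps $j$ with $j\in\ctx$. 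I then use primality from \Cref{prop:reindexing}: $j=\sum_a j_a\in\ctx$ if and only if every $j_a\in\ctx$. The ``if'' direction is closure under $+$, and the ``only if'' direction is primality applied repeatedly. This identifies the indexing category with $\ctx^A$. Under this identification the integrands $\y B\cdot\prod_a XB_a$ and $\ctx(\sum_a B_a,-)\times\prod_a XB_a$ agree, naturally in the family. Hence the two coends coincide, naturally in $C$, and also naturally in $X$.

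I expect the main obstacle to be exactly this identification of the indexing categories, and with it the correct reading of \eqref{eq:sub}. Contextuality enters precisely through \Cref{prop:reindexing} (closure under $+$ and primality) at this point. The rest is bookkeeping: handling the empty family ($A=\emptyset$ gives $\y 0$, the unit of $\oplus$), and checking that choosing a coproduct representative $\sum_a B_a$ for each bundle does not affect the coend. The latter holds because isomorphic bundles are related by bijections, which lie in $\ctx$ (contextuality (1) together with wideness).
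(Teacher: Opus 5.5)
Your proposal is correct and is essentially the argument the paper leaves implicit (it records this as a corollary without proof): unfold the $A$-fold Day convolution as a coend over families $(B_a)_{a\in A}$ of objects of $\ctx$, and identify such families with bundles $B\to A$ via their fibres. Your reading of the indexing category in \eqref{eq:sub} as bundles with bundle maps lying in $\ctx$, split fibrewise via primality, fills in a point the paper glosses over. One small fix: that $\ctx$ contains all bijections does not follow from condition (1) and wideness, but from pullback-stability (3) applied to pullbacks along identities.
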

For $\PSh{ \F }$ we have $\oplus = \times$, so we recover the action $A\tl X=X^A$ considered in \Cref{sec:substitution-intro}.
The contravariant functoriality of $A \tl X$ in $A$ now corresponds to reindexing of a substitution.
\begin{prop}\label{thm:day-pow-bifunct}
  Substitution \((-) \tl (-) \colon \ctx^{\op} \times \PSh{ \ctx } \rightarrow \PSh{ \ctx }\) is bifunctorial.
\end{prop}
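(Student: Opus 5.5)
The plan is to build functoriality in each argument separately and then check that the two actions commute. I will work directly with the coend formula \eqref{eq:sub}, $A \tl X = \int^{(b\colon B\to A)\in \F/A} \y B \cdot \prod_a X B_a$. The indexing category here is the slice $\F/A$, whose objects are bundles $b\colon B\to A$ in $\F$. A morphism is a bundle map $j\colon B\to B'$ over $A$, which restricts to fibre maps $j_a\colon B_a\to B'_a$. These fibre maps must lie in $\ctx$ for $\prod_a X B_a$ to be functorial. This holds because the fibre inclusions are pullbacks of $a\colon 1\to A$, and $j$ restricted to a fibre is a pullback of $j$. So I will first note that the integrand $H(b',b) = \y B' \cdot \prod_a XB_a$ is a well-defined bifunctor on the full subcategory of $\F/A$ whose bundle maps lie in $\ctx$. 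This uses pullback-stability from \Cref{defn:contextual}.

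Covariant functoriality in $X$ is immediate. A natural transformation $\alpha\colon X\to X'$ induces $\prod_a \alpha_{B_a}$ on the second factor. This is natural in the coend variable and therefore passes to the coend, and preservation of identities and composition is clear.

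The main step is contravariant functoriality in $A$. For $g\colon A'\to A$ in $\ctx$ I need a map $g\tl X\colon A\tl X\to A'\tl X$. On representatives I will use base change:
\[
[f,(t_a)_{a\in A}] \;\mapsto\; [f\cdot r_g,\ (t_{g(a')})_{a'\in A'}],
\]
where $g^*b\colon g^*B\to A'$ is the pullback of $b$ along $g$, with fibres $(g^*B)_{a'} = B_{g(a')}$. The map $r_g\colon g^*B\to B$ lies in $\ctx$ by pullback-stability, so $f\cdot r_g\in\ctx(g^*B,C)$. The tuple $(t_{g(a')})$ is exactly $p_g$ applied to $(t_a)$, as in \eqref{eq:dep-prod}.

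To see that this respects the relation $[f'\cdot j,(t_a)] = [f',(j_a\cdot t_a)]$, I will use functoriality of base change. A bundle map $j$ induces $g^*j$ with $r_g\cdot g^*j = j\cdot r_g$, and the fibres satisfy $(g^*j)_{a'} = j_{g(a')}$. Equivalently, I will define the family of maps $\y B\cdot\prod_a XB_a \to A'\tl X$ as the coend injection at $g^*b$ precomposed with $\y(r_g)\cdot p_g$, and check that it forms a cowedge.

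Functoriality then reduces to pullback pasting. First, $\id^*$ is the identity. Second, $(g\cdot h)^*B\cong h^*g^*B$ with $r_{gh} = r_g\cdot r_h$. Since these isomorphisms are bundle maps, which are bijections and hence in $\ctx$, the two representatives agree in the coend. Finally, bifunctoriality needs the $A$-action and the $X$-action to commute. This holds because one acts only on the index $f$ and reindexes the tuple, while the other acts componentwise on the $t_a$.

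I expect the main obstacle to be the well-definedness check on the coend: the cowedge condition and the fact that base change is only pseudo-functorial. Both are handled by choosing pullbacks canonically, for example $g^*B = \{(a',x) \mid g(a') = b(x)\}$, and by absorbing the comparison bijections into the coend relation.
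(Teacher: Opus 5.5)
Your proposal is correct and follows essentially the same route as the paper. The paper defines $g \tl X$ on the coend injection at $b\colon B\to A$ as $\kappa_{g^*B}\cdot(\y r_g\cdot p_g)$, obtains $r_g\in\ctx$ from pullback-stability, and justifies well-definedness by naturality of this family in $b$. You fill in what it leaves implicit (pullback pasting with comparison bijections absorbed into the coend, functoriality in $X$, interchange); two small fixes: the indexing category is the wide, not full, subcategory of $\F/A$ with bundle maps in $\ctx$, and your cowedge check needs $g^*j\in\ctx$, which holds because $g^*j$ is the pullback of $j$ along $r_g\colon g^*B'\to B'$.
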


We record some auxiliary properties of substitutions.

\begin{lem}\label{lem:subst-props}
  Let $A \in \ctx$ be a finite set and $X \in \PSh{ \ctx }$ be a presheaf.\vspace{6pt}\\
  \begin{minipage}{0.5\linewidth}
    \begin{enumerate}
      \item\label{lem:subst-props:right-1} \(A \tl \y 1 \cong \y A\).
      \item\label{lem:subst-props:0}   \(\emptyset \tl X \cong \y 0\)
      \item\label{lem:subst-props:sum} \((A + A') \tl X \cong (A \tl X) \oplus (A' \tl X)\).
    \end{enumerate}
  \end{minipage}
  \begin{minipage}{0.5\linewidth}
    \begin{enumerate}
      \item[(4)]\label{lem:subst-props:1} \(1 \tl X \cong X\).
      \item[(5)]\label{lem:subst-props:prod}   \(A \tl (B \tl X) \cong (A \times B) \tl X\)
      \item[(6)]\label{lem:subst-props:rep} \(A \tl \y B \cong \y (A \times B)\).
    \end{enumerate}
  \end{minipage}
\end{lem}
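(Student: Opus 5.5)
Most of the six items follow from \Cref{cor:conv-rep-sub}, i.e.\ $A \tl X \cong \bigoplus_{a\in A} X$, together with standard facts about the Day convolutions $\oplus$ and $\otimes$ on $\PSh{\ctx}$. We treat the items in the order (2), (4), (3), (1), (6), (5), since the later ones reuse the earlier ones.

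For (2), the empty $\oplus$-power is the unit of $\oplus$. Day convolution along $(\ctx^\op,+,0)$ has unit $\y 0$, so $\emptyset\tl X\cong\y 0$. One can also read this off \eqref{eq:sub}: the comma category $\F/\emptyset$ has only $B=\emptyset$, so the coend is $\y\emptyset$. For (4), a one-fold $\oplus$-power is $X$ itself; alternatively, in \eqref{eq:sub} with $A=1$ the category $\F/1\cong\F$, and the coend $\int^{B}\y B\cdot XB\cong X$ is co-Yoneda (\eqref{eq:y}(3)). For (3), split the $\oplus$-power over $A+A'$ into the part over $A$ and the part over $A'$, using associativity of $\oplus$.

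For (1) and (6), Day convolution makes $\y$ strong monoidal, so $\y B\oplus\y B'\cong\y(B+B')$. Iterating, $A\tl\y B\cong\bigoplus_{a\in A}\y B\cong\y(\sum_{a\in A}B)=\y(A\times B)$, which is (6). Item (1) is the special case $B=1$ together with $A\times 1\cong A$.

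Item (5) is the main step. Unfolding the corollary gives $A\tl(B\tl X)\cong\bigoplus_{a\in A}\bigoplus_{b\in B}X$. Associativity and the coherence of $\oplus$ then reindex this as $\bigoplus_{(a,b)\in A\times B}X\cong(A\times B)\tl X$; by (3) the isomorphism can also be assembled by induction on $|A|$ from (2) and (3). The real obstacle is functoriality in $A$: the iterated-$\oplus$ description is natural only with respect to bijections, whereas $\tl$ is contravariant in all maps of $\ctx$. If naturality in $A\in\ctx^\op$ is needed, we would instead work with \eqref{eq:sub} directly. There, an element of $A\tl(B\tl X)$ is a bundle $C\to A$ with a family indexed by $a\in A$, each of which is in turn a bundle over $B$. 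Composing the bundles yields a bundle over $A\times B$, using the dependent sum/product adjunctions and the fact that pullback stability of $\ctx$ (\Cref{defn:contextual}) makes the reindexing maps $r_f$ lie in $\ctx$. We would then check that the coend relations on the two sides correspond, via a Fubini argument for coends.
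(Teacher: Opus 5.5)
Your proposal is correct and follows essentially the paper's proof. Both derive most items from $A \tl X \cong \bigoplus_{a\in A} X$, read (2) and (4) off the coend \eqref{eq:sub} (via $\F/\emptyset$ and co-Yoneda), and prove (5) by reindexing the iterated $\oplus$-power. The differences are minor: the paper proves (3) by a direct coend computation using extensiveness of $\F$ rather than associativity of $\oplus$. It also obtains (6) from (1) and (5), rather than directly from strong monoidality of $\y$. Your extra remark on naturality in $A$ goes beyond what the paper checks.
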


\begin{prop}\label{prop:day-power-lan}
  Let \(\iota \colon \ctx \hookrightarrow \ctx'\) be an inclusion of contextual subcategories.
  Then we have a natural isomorphism
  \(\Lan_{\iota} (A \tl X) \cong \iota A \tl (\Lan_{\iota} X)\).
\end{prop}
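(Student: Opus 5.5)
The plan is to reduce the statement to the facts that left Kan extension along $\iota$ is strong monoidal for Day convolution and preserves colimits. By \Cref{cor:conv-rep-sub} we have $A \tl X \cong \bigoplus_{a \in A} X$ in $\PSh{\ctx}$. Likewise, applied to $\ctx'$, we have $\iota A \tl \Lan_\iota X \cong \bigoplus'_{a \in A} \Lan_\iota X$, where $\oplus'$ is Day convolution on $\PSh{\ctx'}$ with respect to $+$ and $\iota A = A$ as a finite set. It therefore suffices to construct an isomorphism $\Lan_\iota(X \oplus Y) \cong \Lan_\iota X \oplus' \Lan_\iota Y$, natural in $X,Y$, together with $\Lan_\iota(\y_{\ctx} 0) \cong \y_{\ctx'} 0$ for the empty sum. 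Iterating this over the finite set $A$ then gives the claim.

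For the monoidality of $\Lan_\iota$ we use \Cref{thm:day-conv-pres}. The inclusion $\iota$ is strict monoidal for $+$, since both subcategories are closed under $+$ and the structure is inherited from $\F$. Hence $F = \y_{\ctx'}\cdot\iota^{\op}\colon \ctx^{\op}\to\PSh{\ctx'}$ is strong monoidal, because Yoneda is strong monoidal for Day convolution. The target $(\PSh{\ctx'},\oplus')$ is cocomplete and $\oplus'$ is cocontinuous in each variable, being biclosed. So $F$ extends uniquely to a colimit-preserving strong monoidal functor $\hat F = \Lan_{\y} F$. By \Cref{rem:psh-functor}(3), $\hat F \cong \Lan_\iota$, so $\Lan_\iota$ is strong monoidal from $(\PSh\ctx,\oplus,\y 0)$ to $(\PSh{\ctx'},\oplus',\y 0)$. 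This gives exactly the binary and nullary isomorphisms needed.

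It remains to check naturality in $A$ with respect to morphisms $f\colon A\to A'$ of $\ctx$, acting contravariantly via \Cref{thm:day-pow-bifunct}. This is where I expect the main obstacle. The isomorphism of \Cref{cor:conv-rep-sub} must be shown compatible with reindexing, which is not a mere permutation of summands when $f$ is non-injective or non-surjective. Reindexing then involves the diagonal and codiagonal-like structure maps of $\oplus$, and these must be preserved by the strong monoidal functor $\Lan_\iota$.

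To sidestep this, I would instead work directly with the coend \eqref{eq:sub}. Since $\Lan_\iota$ is a left adjoint, it commutes with coends and copowers, so
\[\Lan_\iota(A\tl X)\cong\int^{B\to A}\y_{\ctx'}(\iota B)\cdot\prod_a XB_a.\]
Also
\[\iota A\tl\Lan_\iota X=\int^{B'\to A}\y B'\cdot\prod_a\Big(\int^{C_a}\ctx'(\iota C_a,B'_a)\times XC_a\Big).\]
I would commute the finite product over $a\in A$ with these coends; since the coends over the $\ctx'(\iota C_a,-)$ are filtered-like and finite products of coends of this shape reassemble, this is plausible. Then pullback-stability of $\ctx'$ (contextuality) lets a family of maps $C_a\to B'_a$ be assembled into a single bundle map $\coprod_a C_a\to B'$ over $A$. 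The co-Yoneda lemma then collapses the $B'$-coend, leaving exactly the first expression, naturally in $A$. Justifying the interchange of the product with the coend, via the decomposition $\ctx'(\sum_a C_a,B')$ over bundles, is the delicate step.
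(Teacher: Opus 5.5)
Your coend route (the one you switch to for naturality in $A$) is essentially the paper's proof, so the strong-monoidal detour through \Cref{thm:day-conv-pres} is not needed. The paper likewise pushes $\Lan_\iota$ through \eqref{eq:sub} using $\Lan_\iota(\y_{\ctx}B)\cong\y_{\ctx'}(\iota B)$, rewrites $\y_{\ctx'}(\iota C)$ by co-Yoneda as $\int^{B'\to A}\y_{\ctx'}B'\cdot\prod_a\ctx'(\iota C_a,B'_a)$, and pulls the product inside.

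The interchange you flag as delicate has nothing to do with filteredness. In $\Set$ each functor $S\times(-)$ is a left adjoint, so by Fubini a finite product of coends is a single coend over $\prod_{a\in A}\ctx$. This product category is identified with bundles over $A$ using closure under $+$ to assemble fibrewise maps, and pullback-stability to split a bundle map into its fibre restrictions.
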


We obtain the substitution tensor on $\PSh{ \ctx }$ by instantiating \Cref{constr:ext} to the functor $J = \y\colon \ctx^\op\to \PSh{ \ctx }$:
\begin{defn}[Substitution tensor]
  The \emph{substitution tensor} on $\PSh{ \ctx }$ is given by the $\y$-extension of the action $\tl$:
  \begin{equation}\label{eq:pre-sub}
    X \sub Y := X\tl_\y Y = \int^{A \in \ctx} XA \cdot (A \tl Y).
  \end{equation}
\end{defn}
Intuitively, an element of $X \sub Y$ is a formal substitution of an $X$-term $x$ by $Y$-terms $y_{a}$, one for each variable $a$ in the context of $x$.
Explicitly, the elements of the set $(X \sub Y)(C)$ are equivalence classes $[x \in XA, \gamma \in (A \tl Y)C]$ of the equivalence relation generated by \[(j \cdot x, \gamma) \sim (x, \gamma \cdot j) \quad \text{for}\quad x \in XA,\, \gamma \in (A' \tl Y)C,\, j \in \ctx(A, A').\]

The following proposition is central:
\begin{prop}\label{prop:sub-pres-day}
  The functor \((-) \sub Y\) left-distributes over $\oplus$:
  \[(X \oplus Y) \sub Z \cong (X \sub Z) \oplus (Y \sub Z).\]
\end{prop}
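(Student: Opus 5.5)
We prove the isomorphism by a coend calculation, reducing to representables and then invoking \Cref{lem:subst-props}. Throughout we write the third argument as $Z$.

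\textbf{Step 1: $(-)\sub Z$ is cocontinuous.} By definition $X\sub Z=\int^{A}XA\cdot(A\tl Z)$. Coends and copowers commute with colimits in $X$, so $(-)\sub Z\colon\PSh{\ctx}\to\PSh{\ctx}$ preserves all colimits. By co-Yoneda \eqref{eq:y}.(3) it agrees with the left Kan extension along $\y$ of $A\mapsto A\tl Z$, so $\y A\sub Z\cong A\tl Z$.

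\textbf{Step 2: both sides are cocontinuous in each of $X,Y$.} Day convolution $\oplus$ preserves colimits in each argument, since it is biclosed. Hence
\[
(X,Y)\mapsto (X\oplus Y)\sub Z
\qquad\text{and}\qquad
(X,Y)\mapsto (X\sub Z)\oplus(Y\sub Z)
\]
are both separately cocontinuous in $X$ and in $Y$. Writing $X\cong\int^A XA\cdot\y A$ and $Y\cong\int^B YB\cdot\y B$, it therefore suffices to give an isomorphism on representables that is natural in $(A,B)\in\ctx^{\op}\times\ctx^{\op}$. Pulling the coends out on both sides then yields the claim.

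\textbf{Step 3: the representable case.} Day convolution on representables gives $\y A\oplus\y B\cong\y(A+B)$. We then compute
\[
(\y A\oplus\y B)\sub Z\cong \y(A+B)\sub Z\cong (A+B)\tl Z\cong (A\tl Z)\oplus(B\tl Z)\cong(\y A\sub Z)\oplus(\y B\sub Z).
\]
The steps use, in order: the Day convolution formula, Step 1, \iref{lem:subst-props}{sum}, and Step 1 again.

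\textbf{Main obstacle: naturality in $A,B$.} The hard part is checking that the isomorphism of \iref{lem:subst-props}{sum} is natural in $A$ and $B$ with respect to all maps of $\ctx$ (contravariantly), not just bijections. The functoriality of $A\tl Z$ from \Cref{thm:day-pow-bifunct} is defined by reindexing along bundle pullbacks. Under the identification $A\tl Z\cong\bigoplus_{a\in A}Z$ of \Cref{cor:conv-rep-sub}, a map $j\colon A\to A'$ must act on the summands compatibly with $j+k\colon A+B\to A'+B'$.

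The plan is to verify this on the explicit elements $[f,(t_a)_{a\in A+B}]$. A bundle $C\to A+B$ splits uniquely as $C_A+C_B$, and pullback along $j+k$ is computed componentwise. The resulting map $C_A+C_B\to C$ lies in $\ctx$ with components in $\ctx$; this follows from \Cref{prop:reindexing}, using that $\ctx$ is prime and closed under $+$. This ensures the decomposition respects the equivalence relation and the actions, which gives naturality.
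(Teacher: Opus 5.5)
Your proof is correct and is essentially the paper's argument. The paper writes out the same reduction as an explicit coend computation: it expands $(X\sub Z)\oplus(Y\sub Z)$ via the Day convolution formula, collapses by co-Yoneda to $\int^{C,D}((C\tl Z)\oplus(D\tl Z))\cdot XC\cdot YD$, applies \iref{lem:subst-props}{sum}, and re-expands. This is your ``cocontinuous in each variable, so check on representables'' step written out. Your extra check that \iref{lem:subst-props}{sum} is natural in the coend variables (using primality and closure under $+$, so that bundle maps over $A+B$ split in $\ctx$) is something the paper uses without comment.
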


We are ready to establish our main result on the substitution tensor on presheaves:
\Cref{thm:gen-mon} yields a closed monoidal structure.
The theorem below uniformly recovers the original instance of Johnstone and Wraith~\cite{jw78} and Fiore et.~al.~\cite{ftp99} for $\PSh \F$ and  the presentation by Kelly~\cite{k05o} (see also Tanaka~\cite{t00}) for $\PSh \B$.
In fact, for index categories such as $\B, \I, \S, \F$ that are equal to $M1$ for some pseudomonad $M$ on $\cat{Cat}$,  there exist multiple extensive accounts (e.g.~\cite{pt08,fghw08,c12}) of how to derive the substitution tensor on $[M1, \Set]$.
\begin{thm}\label{thm:subst-monoidal}
  Substitution yields a right-closed  monoidal category  $(\PSh{ \ctx },\sub,V)$  with unit $V=\ctx(1,-)$ and internal homs
  \(Y \dwand Z = \Nat((-) \tl Y, Z)\).
\end{thm}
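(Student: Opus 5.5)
We obtain \Cref{thm:subst-monoidal} by instantiating \Cref{thm:gen-mon,thm:gen-mon-clo} with $\A = \PSh{\ctx}$, $\V = (\ctx^\op, \times, 1)$ acting via the substitution presheaf $\tl$ of \Cref{def:sub}, and $J_I = \y 1$. The action axioms hold by \Cref{lem:subst-props}(4) and (5) ($1 \tl X \cong X$ and $A \tl (B \tl X) \cong (A\times B)\tl X$); the coherence conditions follow from the universal properties of the coends involved. By \Cref{lem:subst-props}(1) we have $JA = A \tl \y 1 \cong \y A$, so $J$ is the Yoneda embedding. Hence $J$ is well-behaved for every functor by \iref{expl:wb}{y}, in particular for each $(-)\tl X$, and the $J$-extension $\tl_J$ is exactly $\sub$ as defined in \eqref{eq:pre-sub}. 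The unit is then $JI = \y 1 = \ctx(1,-) = V$, and \eqref{eq:gen-mon-units} supplies $\lambda$ and $\rho$.

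The main obstacle is the isomorphism $\aiso \colon (A \tl X)\sub Y \cong A \tl (X \sub Y)$ required in \eqref{eq:J-ass}. I would derive it from \Cref{cor:conv-rep-sub} and \Cref{prop:sub-pres-day}. First, $A \tl X \cong \bigoplus_{a\in A} X$. Iterating the left distributivity of $(-)\sub Y$ over $\oplus$ then gives
\[(A\tl X)\sub Y \cong \Big(\bigoplus_{a\in A} X\Big)\sub Y \cong \bigoplus_{a\in A}(X\sub Y) \cong A\tl(X\sub Y).\]
The case $A=\emptyset$ needs separate treatment. There the left side is $\y 0 \sub Y \cong 0\tl Y \cong \y 0$, using Yoneda inside the coend together with \Cref{lem:subst-props}(2), and the right side is $\y 0$ as well.

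Naturality in $X$ and $Y$ is inherited from the naturality of the isomorphisms used. Naturality in $A \in \ctx^\op$ is more delicate, because reindexing along a morphism $f\colon A\to A'$ of $\ctx$ must commute with the decomposition into $\oplus$-summands. I would verify it on representatives $[f,(t_a)]$ using the explicit coend description following \Cref{def:sub}. Here contextuality enters: pullback stability ensures that the fibrewise reindexing maps lie in $\ctx$. By \Cref{rem:coherence} we may take $\aiso$ to be strict after choosing the coends appropriately, so the monoidal axioms hold.

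For closedness we check \eqref{eq:asm-ra}. Since $J = \y$, we have $\nu_J \cong \Id$ and $\Lan_J \cong \Id$ on presheaves over $\ctx$, so the condition holds trivially. \Cref{thm:gen-mon-clo} then yields the internal hom $Y\dwand Z = \PSh{\ctx}((-)\tl Y, Z) = \Nat((-)\tl Y, Z)$. As a sanity check, the adjunction can also be verified directly:
\[\Nat(X\sub Y, Z)\cong \int_A \Set\big(XA, \Nat(A\tl Y, Z)\big)\cong \Nat(X, Y\dwand Z),\]
where the first isomorphism comes from the coend formula and copower adjunctions.
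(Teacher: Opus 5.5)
Your proposal is essentially the paper's own proof. Like the paper, you instantiate \Cref{thm:gen-mon} with $\V=(\ctx,\times,1)^{\op}$ and $J\cong\y$ (well-behaved by \iref{expl:wb}{y}), obtain $\aiso$ from \Cref{cor:conv-rep-sub} together with \Cref{prop:sub-pres-day}, and read off closedness from \Cref{thm:gen-mon-clo} because $\nu_\y\cong\Id$ and $\Lan_\y\y\cong\Id$. You are slightly more careful than the paper: you handle the nullary case $A=\emptyset$, address naturality of $\aiso$ in $A$, and derive $J\cong\y$ from $J_I=\y 1$. The one soft spot, which the paper shares, is that \Cref{rem:coherence} only gives the monoidal axioms for a strict action, and $\tl$ is not strict. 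So the pentagon and triangle identities are asserted in your proposal rather than verified.
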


\begin{rem}
  If one works with the skeleton $\mathbb{O} \subseteq \F $ of finite ordinals, then the requirement on $\ctx$ to contain all bijections can be dropped, and one still obtains a substitution tensor:
  Tronin~\cite{t02} studied \emph{verbal} subcategories of $\mathbb{O}$, which are closed only under $+$ and pullbacks along \emph{partitions} (viz.\ non-decreasing maps in $\mathbb{O}$). 
\end{rem}

Substitution is respected by extending the category of contexts:

\begin{prop}\label{thm:sub-pres}
  Let \(\iota \colon \ctx \rightarrow \ctx'\) be an inclusion of contextual subcategories. Then \(\Lan_{\iota} \colon \PSh{ \ctx }\to \PSh{ \ctx' }\) is strong monoidal for their substitution tensors.
\end{prop}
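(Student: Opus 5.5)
We have to show that $\Lan_\iota$ carries the substitution tensor of $\PSh{\ctx}$ to that of $\PSh{\ctx'}$. Concretely, we want natural isomorphisms
\[\Lan_\iota(X \sub Y) \cong \Lan_\iota X \sub \Lan_\iota Y, \qquad \Lan_\iota V \cong V'\]
satisfying the usual coherence conditions.

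\textbf{Unit.} By \Cref{rem:psh-functor}, $\Lan_\iota$ extends $\iota$ on representables. Hence
\[\Lan_\iota V = \Lan_\iota(\y_{\ctx} 1) \cong \y_{\ctx'}(\iota 1) = V'.\]

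\textbf{Tensor.} I would compute in three moves:
\begin{align*}
\Lan_\iota(X \sub Y)
&\cong \int^{A \in \ctx} XA \cdot \Lan_\iota(A \tl Y) \\
&\cong \int^{A \in \ctx} XA \cdot \big(\iota A \tl \Lan_\iota Y\big) \\
&\cong \int^{A' \in \ctx'} (\Lan_\iota X)A' \cdot \big(A' \tl \Lan_\iota Y\big)
= \Lan_\iota X \sub \Lan_\iota Y.
\end{align*}
The justifications are as follows.
\begin{enumerate}
\item The first step holds because $\Lan_\iota$ is a left adjoint (\Cref{rem:psh-functor}), so it preserves coends and copowers.
\item The second step is exactly \Cref{prop:day-power-lan}.
\item The third step uses that $G := (-) \tl \Lan_\iota Y \colon \ctx'^{\op} \to \PSh{\ctx'}$ is a functor by \Cref{thm:day-pow-bifunct}. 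The functor tensor product $\int^{A} XA \cdot G(\iota A)$ equals $\int^{A'} (\Lan_\iota X)A' \cdot GA'$. To see this, substitute the coend formula
\[(\Lan_\iota X)A' = \int^{A} \ctx'(\iota A, A') \times XA,\]
use Fubini, and apply the co-Yoneda lemma \eqref{eq:y}(3) in the variable $A'$.
\end{enumerate}

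\textbf{Naturality and coherence.} All the isomorphisms above are natural in $X$ and $Y$, since each step is. It remains to check compatibility with the associators and unitors. Here I would use \Cref{rem:coherence}: the monoidal structures come from \Cref{thm:gen-mon}, with associator induced by $\aiso$ and unitors induced by $\gamma$. It therefore suffices to check two things.
\begin{itemize}
\item The isomorphism of \Cref{prop:day-power-lan} is compatible with the action isomorphisms $A \tl (B \tl X) \cong (A \times B) \tl X$ and $1 \tl X \cong X$ from \Cref{lem:subst-props}.
\item The isomorphism of \Cref{prop:day-power-lan} is compatible with $\aiso$.
\end{itemize}
Both are checked on generators. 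Every element of $\Lan_\iota(A \tl Y)$ is represented by a pair consisting of a context map in $\ctx'$ and a family $(t_a)$, and both sides act on such representatives by the same reindexing. By density, it suffices to verify the compatibilities on representables. On representables they reduce, via \Cref{lem:subst-props}(6), to the equalities $\iota(A \times B) = \iota A \times \iota B$ and $\iota 1 = 1$. These hold because $\iota$ is an inclusion of subcategories of $\F$ that are both closed under $\times$ (\Cref{prop:reindexing}).

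\textbf{Main obstacle.} I expect this coherence verification to be the main difficulty, specifically making the isomorphism of \Cref{prop:day-power-lan} explicit enough to compare with $\aiso$. The first fallback would be to note that $\Lan_\iota$ is strong monoidal for the Day convolutions $\oplus$ (by \Cref{thm:day-conv-pres}, since $\iota$ preserves $+$). Then \Cref{cor:conv-rep-sub} and \Cref{prop:sub-pres-day} reduce the comparison to coherence of Day convolution.
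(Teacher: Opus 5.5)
Your proposal is correct and follows the paper's proof. The paper runs the same chain in the opposite direction: it unfolds $\Lan_\iota X$ as a coend and applies co-Yoneda, then uses \Cref{prop:day-power-lan}, then cocontinuity of $\Lan_\iota$, and it gives the same unit argument via $\Lan_\iota(\y_{\ctx}1)\cong\y_{\ctx'}(\iota 1)$. The paper does not check coherence at all, so your extra discussion goes beyond it. One caveat there: the reduction ``by density'' to representables is not valid as stated, because $Y\mapsto A\tl Y\cong Y^{\oplus A}$ is not cocontinuous. What actually carries that part is the elementwise check on representatives that you also mention, or alternatively a reduction to coproducts of representables using sifted colimits.
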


\subsubsection*{Uniform Substitution via Day Convolution}
We conclude this section with a connection between the substitution tensor to another, restricted kind of substitution on $\PSh{ \ctx }$.
The perspective of substitution as induced by an action of $(\ctx, \times, 1)^\op$ on $\PSh{ \ctx }$ leads us to interpret Day convolution $\otimes$ w.r.t.~$\times$ as \emph{uniform substitution} of terms.

For motivation, recall that a finitary monad on $\Set$ is the same as a monoid in the monoidal category $([\Set, \Set]_{\mathrm{fin}}, \circ, \Id)$ of finitary $\Set$-endofunctors with composition as tensor.
The latter category is, under $\Lan_\iota \dashv \iota^*$, monoidally equivalent to the category $(\PSh \F, \sub, \y 1)$.
Consequently, we have:
\begin{notheorembrackets}
  \begin{prop}[\cite{jw78,ftp99}]\label{prop:finitary}
    The category of finitary monads on $\Set$ is equivalent to the category of monoids in $(\PSh \F, \sub, \y 1)$.
  \end{prop}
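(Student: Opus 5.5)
The plan is to reduce the statement to the classical correspondence between finitary monads and monoids in $([\Set,\Set]_{\mathrm{fin}},\circ,\Id)$, and then to transport monoids along the monoidal equivalence mentioned just before the proposition.

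\textbf{Step 1 (finitary monads as monoids).} A monad on $\Set$ is by definition a monoid in $([\Set,\Set],\circ,\Id)$. Finitary endofunctors, i.e.\ those preserving filtered colimits, contain $\Id$ and are closed under composition. Hence they form a full monoidal subcategory $[\Set,\Set]_{\mathrm{fin}}$, whose monoids are exactly the finitary monads. Monad morphisms are exactly monoid morphisms.

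\textbf{Step 2 (the equivalence of categories).} Let $V\colon\F\hookrightarrow\Set$ be the inclusion; the preceding text calls it $\iota$. The functor $\iota^*\colon[\Set,\Set]_{\mathrm{fin}}\to\PSh\F$ restricts a finitary functor to finite sets. In the other direction, $\Lan_V\colon\PSh\F\to[\Set,\Set]$ is given by $\Lan_V X = \int^{A} \Set(A,-)\cdot XA$.
\begin{itemize}
\item Since $V$ is fully faithful, \iref{prop:kan-props}{ff} gives $\iota^*\Lan_V\cong\Id$.
\item $\Lan_V X$ is finitary, because each $\Set(A,-)$ with $A$ finite preserves filtered colimits and colimits commute with colimits.
\item Conversely, $\Lan_V\iota^* F\cong F$ for finitary $F$: every set is the filtered colimit of its finite subsets, so $F$ is determined by its values on $\F$.
\end{itemize}
Together these show that $\Lan_V\dashv\iota^*$ restricts to an equivalence $\PSh\F\simeq[\Set,\Set]_{\mathrm{fin}}$.

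\textbf{Step 3 (the equivalence is monoidal).} For the unit, $\Lan_V(\y 1)=\Set(1,-)\cong\Id$. For the tensor I compute, using that $\Lan_V Y$ is finitary and the co-Yoneda lemma:
\[
\iota^*(\Lan_V X\circ\Lan_V Y)(C)=\int^{A} XA\times \Set(A,\Lan_V Y(C)) \cong \int^A XA\times (\Lan_V Y\, C)^A \cong \int^A XA\times (YC)^A .
\]
Here $A$ is finite, so $(\Lan_V Y\,C)^A\cong(YC)^A$ for finite $C$, and $(YC)^A = ((A\tl Y)C)$ with $A\tl Y = Y^A$ by \Cref{cor:conv-rep-sub} and $\oplus=\times$ in $\PSh\F$. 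The last expression is exactly $(X\sub Y)(C)$ by \eqref{eq:pre-sub}. Applying $\Lan_V$ and Step 2 then gives $\Lan_V(X\sub Y)\cong\Lan_V X\circ \Lan_V Y$.

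\textbf{Step 4 (transport of monoids).} A strong monoidal equivalence induces an equivalence between the categories of monoids. Combining this with Step 1 yields the claim.

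\textbf{Main obstacle.} The delicate part is checking that the isomorphisms of Step 3 are coherent with the associator and unitors, since the category of monoids requires a genuine strong monoidal functor. I would avoid a direct computation by using the monoidal structure on $\PSh\F$ from \Cref{thm:subst-monoidal}, with $J=\y$ and the extension formula. The comparison maps are canonical maps between coends, so coherence reduces to naturality of the co-Yoneda isomorphisms. Alternatively, one can define the monoidal structure on $\PSh\F$ by transport along the equivalence and check that it agrees with $\sub$.
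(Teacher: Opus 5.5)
Your proposal is correct and takes the same route as the paper. The paper justifies the statement only by the preceding remark: finitary monads are the monoids in $([\Set,\Set]_{\mathrm{fin}},\circ,\Id)$, and $\Lan_\iota \dashv \iota^*$ gives a monoidal equivalence of this category with $(\PSh\F,\sub,\y 1)$. Your Step 3 spells out the tensor comparison $\iota^*(\Lan_\iota X\circ\Lan_\iota Y)\cong X\sub Y$, which the paper leaves implicit, and, as in the paper, coherence of the comparison isomorphisms is left to standard arguments.
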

\end{notheorembrackets}
For a contextual subcategory $\ctx \hookrightarrow \F$, monoids in the category $(\PSh{ \ctx }, \sub, \y 1)$ thus correspond to a subclass of finitary monads. For example, for $\ctx = \B, \S$ monoids in $\PSh \ctx$  correspond precisely to \emph{analytic monads}~\cite{j81} and \emph{preimage preserving} or \emph{regular}~\cite{sz15}  monads on $\Set$, respectively.
A related characterization is the following (see \cite[Sec.~6.3]{l21} for the case $\ctx = \B$):
\begin{prop}\label{prop:sub-monoids-char}
  The category of monoids in $(\PSh{ \ctx }, \sub, \y 1)$ is equivalent to the category of monads on \(\PSh{ \ctx }\) preserving colimits and Day convolution $\oplus$.
\end{prop}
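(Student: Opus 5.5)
We construct an equivalence in both directions between monoids $(M,\mu\colon M\sub M\to M,\eta\colon \y 1\to M)$ and colimit- and $\oplus$-preserving monads on $\PSh{\ctx}$. The guiding idea is that every colimit-preserving functor $T\colon\PSh{\ctx}\to\PSh{\ctx}$ is, by density of $\y$ (\Cref{rem:psh-functor}), determined by its restriction $T\y\colon\ctx^\op\to\PSh{\ctx}$, that is, $T\cong\Lan_\y(T\y)$. If $T$ moreover preserves $\oplus$, then $T\y$ is strong monoidal from $(\ctx,+,0)^\op$ to $(\PSh{\ctx},\oplus,\y 0)$. Since every finite set is a finite coproduct of copies of $1$, and by \Cref{cor:conv-rep-sub} and \iref{lem:subst-props}{sum}, such a $T\y$ is determined by the single presheaf $M:=T(\y 1)$ via $T(\y A)\cong\bigoplus_{a\in A}M\cong A\tl M$. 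Hence $T\cong\Lan_\y((-)\tl M)$, and by \eqref{eq:pre-sub} this gives
\[ TX\cong\int^{A}XA\cdot(A\tl M)=X\sub M. \]
Conversely, for any $M$ the functor $(-)\sub M$ preserves colimits, since it is a left adjoint by \Cref{thm:subst-monoidal}, and it preserves $\oplus$ by \Cref{prop:sub-pres-day}.

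The plan is therefore to prove that $M\mapsto(-)\sub M$ is a fully faithful, essentially surjective strong monoidal functor from $(\PSh{\ctx},\sub,\y 1)$ to the strict monoidal category $\mathcal{E}$ of colimit- and $\oplus$-preserving endofunctors under composition. Strong monoidality follows from associativity and the unitor of \Cref{thm:subst-monoidal}: $(X\sub M)\sub N\cong X\sub(M\sub N)$ and $X\sub\y 1\cong X$. Essential surjectivity is the argument of the first paragraph. For full faithfulness, a natural transformation $\theta\colon(-)\sub M\to(-)\sub N$ is determined by its components at representables, because both functors are colimit-preserving. I would show it is determined by $\theta_{\y1}\colon M\to N$, using \iref{lem:subst-props}{right-1} in the form $\y A\sub M\cong A\tl M$ together with compatibility of $\theta$ with the $\oplus$-decompositions.

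Once this monoidal equivalence is in place, monoids correspond to monoids: monoids in $(\mathcal{E},\circ,\Id)$ are exactly monads whose underlying functor preserves colimits and $\oplus$, so the two categories of monoids are equivalent. This mirrors the proof of \Cref{prop:finitary}.

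The main obstacle is the meaning of ``preserving $\oplus$'' for natural transformations. A natural transformation between $\oplus$-preserving functors need not respect the comparison isomorphisms, so full faithfulness can fail. I would resolve this by reading the statement as concerning monads that are strong monoidal for $\oplus$, with monad maps that are monoidal natural transformations; monoidality of $\theta$ is exactly what makes it determined by $\theta_{\y1}$. I would then check that $\mu$ and $\eta$ of a monad of the form $(-)\sub M$ are automatically monoidal, because they are induced from the canonical isomorphisms of \Cref{prop:sub-pres-day}. Verifying this compatibility, and the coherence of \Cref{prop:sub-pres-day} with the associator, is the tedious but routine core of the argument.
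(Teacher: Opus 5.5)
\textbf{Verdict.} Your plan follows the paper's proof, and it inherits a real gap at the step $T\cong(-)\sub T\y1$. That step fails for $\ctx=\S$.

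\textbf{What matches the paper.} Both arguments use the two directions $M\mapsto(-)\sub M$ and $T\mapsto T\y1$, and both rest on $T\cong(-)\sub T\y1$. Packaging this as a monoidal functor into endofunctors is fine, provided composition is taken in diagrammatic order, since $(-)\sub(M\sub N)\cong((-)\sub N)\circ((-)\sub M)$; this is harmless for monoids. Your caution about morphisms in $\mathcal{E}$ is justified, and the paper does not address morphisms at all.

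\textbf{Where the argument breaks.} The gap is your essential-surjectivity step. Strong monoidality of $T\y\colon(\ctx,+,0)^{\op}\to(\PSh{\ctx},\oplus,\y0)$ gives isomorphisms $T\y A\cong M^{\oplus A}$ that are natural only for maps of the form $f+g$ (and, if you also assume symmetry, for bijections). To conclude $T\cong\Lan_\y((-)\tl M)=(-)\sub M$ you need $T\y\cong(-)\tl M$ as functors on $\ctx^{\op}$. That means naturality also for non-invertible maps such as $0\to1$ or $\nabla\colon 1+1\to1$. These are not generated by $+$, because $+$ is not a coproduct in $\ctx$, so ``every finite set is a sum of copies of $1$'' says nothing about $T\y$ on them. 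For $\B$, $\I$ and $\F$ this can be repaired: $\B$ is free symmetric monoidal on one object, $\y0$ is terminal in $\PSh{\I}$, and comonoids for $\oplus=\times$ in $\PSh{\F}$ are unique. For $\ctx=\S$ it cannot be repaired.

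\textbf{Counterexample for $\S$.} Let $M=\y1+\y1$.
\begin{itemize}
\item Then $M^{\oplus A}\cong 2^A\cdot\y A$ has elements $(\lambda,h)$ with $\lambda\colon A\to\{0,1\}$ and $h\in\S(A,C)$. The standard reindexing $g\tl M$ is $(\lambda,h)\mapsto(\lambda g,hg)$.
\item Define $F(A)=2^A\cdot\y A$ with the twisted reindexing $F(g)(\lambda,h)=(\lambda_g,hg)$. Here $\lambda_g(a')=\lambda(g a')$ if the fibre of $g$ through $a'$ is a singleton, and $\lambda_g(a')=0$ otherwise.
\item Since $(\lambda_g)_{g'}=\lambda_{gg'}$ and reindexing along $g+g'$ is componentwise, $F$ is a symmetric strong monoidal functor $\S^{\op}\to\PSh{\S}$. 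By \Cref{thm:day-conv-pres}, $T=\Lan_\y F$ preserves colimits and $\oplus$.
\item Write $[x,\lambda,k]\in TX(B)$ for the class of $x\in XA$ and $(\lambda,k)\in F(A)(B)$. Then $\eta(x)=[x,0,\id]$ and $\mu[[x,\lambda,k],\lambda',k']=[x,\lambda\vee\lambda'_k,k'k]$ make $T$ a monad. Well-definedness and the laws use only functoriality and $(\lambda\vee\lambda')_g=\lambda_g\vee\lambda'_g$. Moreover $\eta$ and $\mu$ are monoidal.
\item Now $T\y1=M$, but $T\not\cong(-)\sub M$. Indeed $F(\nabla)\colon M\to M\oplus M$ sends both summands into the $(0,0)$-summand, whereas $\nabla\tl M$ is injective, so $F\not\cong(-)\tl M$.
\end{itemize}
So the comparison $M\mapsto(-)\sub M$ is not essentially surjective, under either reading of ``preserving $\oplus$''.

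\textbf{The paper has the same hole, and the fix.} The paper's proof pulls $A\tl TX\cong T(A\tl X)$ through a coend over $A$, which silently requires naturality in $A\in\ctx^{\op}$. That is exactly what fails above. The repair for both arguments is to assume $T(A\tl X)\cong A\tl TX$ naturally in $A$ and $X$ (equivalently $T\y\cong(-)\tl T\y1$), and to take morphisms that respect these isomorphisms. With that hypothesis your argument goes through.
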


We integrate the Day convolution $\otimes$ with respect to $\times$ on $\ctx$ into this setting:
\begin{defn}[Uniform substitution]\label{def:uniform-sub-presheaves}
  The \emph{uniform substitution tensor} $\otimes$ on $\PSh{ \ctx }$ is given by Day convolution with respect to~$(\ctx, \times, 1)$.
\end{defn}
To understand how $\otimes$ captures uniform substitution, consider two operations $s,t$ from a signature $\Sigma$ with respective arities $n$ and $m$.
Then we can substitute~$t$ at every position of $s$ to obtain the composite $s[t]$ of arity $nm$, which, applied to an $n \times m$-matrix $M_{ij}$ of variables, produces the term \[s[t](M) = s(t(M_1), \ldots, t(M_n)), \text{ where } t(M_i) = t(M_{i1}, \ldots, t(M_{in}))\]
inheriting the symmetries of both $s$ and $t$.

\begin{prop}\label{prop:unif-sub-embed}
  There is a morphism $\varphi \colon X \otimes Y \rightarrow X \sub Y$ of monoidal structures.
\end{prop}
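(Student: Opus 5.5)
Since both $X \otimes Y$ and $X \sub Y$ are coends over representables, the natural plan is to construct $\varphi$ on representables and then extend by the universal property of Day convolution. Both tensors share the unit $\y 1 = V$, because $1$ is the unit of $(\ctx,\times,1)$. By \Cref{thm:day-conv-pres} (with $\D = \PSh\ctx$ and the monoidal structure on $\D$ taken to be $\otimes$ itself), $X \otimes Y \cong \int^{A,B} XA \cdot YB \cdot \y(A\times B)$. By \iref{lem:subst-props}{6}, $\y(A\times B) \cong A \tl \y B$. I will therefore define the component at $(A,B)$ as the composite
\[
XA \cdot YB \cdot \y(A\times B) \;\cong\; XA \cdot YB\cdot (A \tl \y B) \xrightarrow{\;XA \cdot (A \tl \hat y)\;} XA \cdot (A \tl Y) \xrightarrow{\kappa_A} X \sub Y,
\]
where for $y \in YB$ the map $\hat y\colon \y B \to Y$ is its Yoneda transpose. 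Concretely, $[x \in XA,\, y\in YB,\, h\colon A\times B \to C]$ is sent to $[x, [h, (y)_{a\in A}]]$, i.e.\ $y$ is substituted uniformly for every variable of $x$ along the bundle $\pi_A\colon A\times B\to A$. This is a cowedge: dinaturality in $B$ follows from naturality of $\hat y$ together with bifunctoriality of $\tl$ (\Cref{thm:day-pow-bifunct}), and dinaturality in $A$ uses the coend relation $(j\cdot x,\gamma)\sim(x,\gamma\cdot j)$ defining $\sub$, together with naturality of the isomorphism in \iref{lem:subst-props}{6} in $A$. Because $\y(A\times B)\cong A\tl\y B$ is natural in both $A$ and $B$, I expect these checks to be routine, and they yield $\varphi\colon X\otimes Y\to X\sub Y$ natural in $X$ and $Y$.

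By a ``morphism of monoidal structures'' I take the statement to mean that the identity functor is lax monoidal from $(\PSh\ctx,\sub,\y1)$ to $(\PSh\ctx,\otimes,\y1)$, i.e.\ that $\varphi$ together with the identity on the unit commutes with the associators and unitors. The unit axioms reduce, via density (every presheaf is a colimit of representables and both tensors are cocontinuous in the left variable), to the case $X=\y A$ or $Y=\y 1$. There $\varphi$ becomes the isomorphisms $\y A\otimes Y\cong A\tl Y$ from \eqref{eq:gen-mon-comp} and \iref{lem:subst-props}{4}, and the axioms follow by unfolding these.

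I expect associativity to be the main obstacle, since the associator of $\sub$ (\Cref{thm:subst-monoidal}, via \Cref{thm:gen-mon}) is given only abstractly through $\aiso$. I would handle it as follows. Both composites $(X\otimes Y)\otimes Z\to X\sub(Y\sub Z)$ are natural in $X$. The functor $(-)\otimes Y\otimes Z$ is cocontinuous. Moreover, $(-)\sub W$ is cocontinuous, because it is a coend of copowers in its first variable. Hence it suffices to check the axiom for $X=\y A$. Both sides then live in $A\tl(-)$ by \eqref{eq:gen-mon-comp}. Because $\aiso$ is the isomorphism of \iref{lem:subst-props}{5} extended by cocontinuity, the check reduces to a diagram involving $A\tl(B\tl Z)\cong (A\times B)\tl Z$. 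Here I would reduce once more to $Y=\y B$ and $Z=\y C$ by cocontinuity in each variable; in that case every object is representable, namely $\y(A\times B\times C)$. By Yoneda, all maps are then determined by where they send the identity, so the remaining verification becomes an equality of canonical isomorphisms built from products in $\ctx$.
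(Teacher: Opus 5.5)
\textbf{Verdict: correct, by a different route for the coherence check; two stated justifications are inaccurate but do not break the argument.}

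Your $\varphi$ is exactly the paper's map, $\kappa_{A,B}(j,x,y)\mapsto\kappa_A(x,\kappa_{\pi_A}(j,(y)_a))$. Your derivation of the cowedge condition, from naturality of $\y(A\times B)\cong A\tl\y B$ and bifunctoriality of $\tl$, is a structured form of the paper's element check (which uses $r_f=f\times 1$ for the constant bundle $\pi_A$).

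The real difference is how you verify the coherence axioms. The paper chases a general element $[f,[g,x,y],z]$ around the associativity square, using an explicit element formula for the associator of $\sub$ and reindexing of substitutions. You instead argue as follows:
\begin{itemize}
\item both composites are natural in $X,Y,Z$;
\item the common domain $(X\otimes Y)\otimes Z$ is cocontinuous in each variable separately;
\item hence it suffices to check the square on $\y A,\y B,\y C$.
\end{itemize}
There every vertex is isomorphic to $\y(A\times B\times C)$, and $\varphi_{\y A,\y B}$ corresponds to the identity of $\y(A\times B)$. Yoneda then turns the square into coherence for $\times$ in $\ctx$. This reduction is sound, since only cocontinuity of the domain is needed. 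It spares you a general element formula for $\alpha$, but you still have to identify $\alpha$ on representables. The paper's formula, in exchange, shows concretely what the comparison does.

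Two of your justifications are inaccurate.
\begin{itemize}
\item $\aiso$ is not item (5) of \Cref{lem:subst-props} ``extended by cocontinuity''. The functor $Y\mapsto A\tl Y\cong Y^{\oplus A}$ does not preserve coproducts, so neither $(A\tl Y)\sub Z$ nor $A\tl(Y\sub Z)$ is cocontinuous in $Y$. What you actually need is only $\aiso_{A,\y B,\y C}$. This must be read off from its construction via \Cref{cor:conv-rep-sub} and \Cref{prop:sub-pres-day}. Doing so gives the canonical isomorphism $\y((A\times B)\times C)\cong\y(A\times(B\times C))$, which is what your last step requires.
\item $\y A\otimes Y\cong A\tl Y$ is false in general; \eqref{eq:gen-mon-comp} concerns $\y A\sub Y$. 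For example, in $\PSh{\F}$ with $Y$ the constant presheaf $2$, $\varphi_{\y 2,Y}$ is the diagonal $2\to 2\times 2$. For the unit axioms you only need that $\varphi_{\y A,Y}$ is invertible when $A=1$ or $Y=\y 1$, and that does hold.
\end{itemize}
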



This relation between $\otimes$ and $\sub$ allows us to characterize commutative \emph{finitary} monads as commutative monoids for the substitution tensor:
By precomposition with the morphism $\varphi \colon \otimes \rightarrow \sub$ from \Cref{prop:unif-sub-embed}, every $\sub$-monoid induces a $\otimes$-monoid.
This leads to the following characterization of commutative finitary monads in terms of (uniform) substitution:

\begin{notheorembrackets}
  \begin{thm}[{\cite[Corollary 40]{gf16}}]\label{thm:com-mon}
    Let $\mathcal{T}$ be a finitary monad on $\Set$, and let $M$ be the corresponding monoid in $(\PSh \F, \sub, \y 1)$ (\Cref{prop:finitary}). Then $\mathcal{T}$ is commutative iff the $\otimes$-monoid induced by $M$ is commutative.
  \end{thm}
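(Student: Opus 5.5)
We need to prove: for a finitary monad $\mathcal{T}$ on $\Set$ with corresponding $\sub$-monoid $M$ in $\PSh\F$, $\mathcal{T}$ is commutative iff the induced $\otimes$-monoid is commutative. Here $MA = TA$ for finite $A$. The $\otimes$-multiplication is $\mu\cdot\varphi\colon M\otimes M\to M\sub M\to M$.

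The plan is to unfold everything concretely and reduce both conditions to the same family of equations. First I describe $M\otimes M$. By the coend formula for Day convolution w.r.t.\ $\times$,
\[
(M\otimes M)C=\int^{A,B} MA\times MB\times\F(A\times B,C),
\]
so by Yoneda a map $M\otimes M\to M$ amounts to a family $TA\times TB\to T(A\times B)$ natural in $A,B\in\F$. Next I describe $\varphi$ from \Cref{prop:unif-sub-embed} on generators. It sends $[s\in MA,\,t\in MB,\,\id_{A\times B}]$ to $[s,\sigma]\in M\sub M$, where $\sigma\in(A\tl M)(A\times B)=(T(A\times B))^A$ is given by $a\mapsto T(\langle a,-\rangle)(t)$; this encodes ``substitute $t$ at every position''. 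Composing with the bind $\mu\colon M\sub M\to M$ then yields $T(\langle -,-\rangle)$ applied Kleisli-wise, namely
\[
s\mapsto s \gg\!= (a\mapsto T\langle a,-\rangle t),
\]
which is exactly the \emph{double strength} $\mathrm{dst}\colon TA\times TB\to T(A\times B)$ built from the left strength followed by the right strength. This identification is the main technical step. I would prove it by checking the formula for $\varphi$ against its definition in \Cref{prop:unif-sub-embed}, which is presumably induced by Lemma~\ref{lem:subst-props}(6) $A\tl\y B\cong\y(A\times B)$, and by using the correspondence in \Cref{prop:finitary}, where $\mu$ is Kleisli substitution.

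Commutativity of the $\otimes$-monoid means that the multiplication is invariant under the symmetry of $\otimes$. On generators this says
\[
\mathrm{dst}_{A,B}(s,t)=T(\mathrm{sw})\,\mathrm{dst}_{B,A}(t,s),
\]
and the right-hand side is precisely the other order of applying the strengths. So commutativity of the $\otimes$-monoid is exactly Kock's commutativity condition for $\mathcal{T}$, restricted to finite sets $A,B$.

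The remaining step is to extend this equation from finite to arbitrary sets by finitarity. Any $s\in TX$ lies in the image of $TA$ for some finite $A\subseteq X$, and both strength composites are natural in the argument sets. Hence the equation holds on all sets, and conversely it restricts to finite sets. The main obstacle I expect is the careful verification that $\mu\circ\varphi$ really is the double strength, which requires tracking the coend identifications; everything after that is routine.
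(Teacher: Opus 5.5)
Your proposal is correct and follows essentially the paper's route. Both arguments reduce commutativity of the $\otimes$-monoid to symmetry of the induced natural family $TA\times TB\to T(A\times B)$, which the paper obtains from the universal property of Day convolution and you obtain by an equivalent coend/Yoneda unfolding, and then identify that symmetry with commutativity of $\mathcal{T}$. Your version is more explicit on two points the paper only asserts: you identify $\mu\cdot\varphi$ as the double strength, and you give the finitarity argument that extends the equation from finite sets to all sets.
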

\end{notheorembrackets}

\begin{rem}
  Garner and Franco~\cite{gf16} developed a general framework for commutativity based on \emph{duoidal categories}, from which they derived \Cref{thm:com-mon}.
  This suggests a deeper connection via duoidal categories between the monoidal structures as induced by actions and Day convolution.
\end{rem}

\section{Substitution in Nominal Sets}
\label{sec:subst-nomin-sets}
Guided by intuition about the substitution tensor on presheaves,
we next apply the method of \Cref{sec:mon-struct-from-act} to introduce the substitution tensors for nominal sets and renaming sets.

\subsection{Nominal Sets}
\label{sec:little-dict-betw-1}

We recall the basics of the theory of nominal sets; see Pitts' book~\cite{p13} for an introduction.
Fix a countably infinite set \names of \emph{names} or \emph{atoms}, and denote by $\Perm \names$ the group of finite permutations on $\names$, i.e.\ bijections $\pi \colon \names \to \names$ fixing all but finitely many names. A \emph{$\Perm \names$-set} is a set $X$ equipped with a group action $\Perm \names \times X\to X$, denoted $(\pi,x)\mapsto \pi\cdot x$. The \emph{orbit} of an element $x\in X$ is the set $\{\pi\cdot x \mid \pi\in\Perm\names \}$. A map $f \colon X \rightarrow Y$ between $\Perm \names$-sets is \emph{equivariant} if $\pi \cdot f(x) = f(\pi \cdot x)$ for all $x \in X$ and $\pi \in \Perm \names$.
$\Perm \names$-sets and equivariant maps form a category.
A $\Perm \names$-set is \emph{nominal} if every element $x \in X$ is \emph{finitely supported}, that is, there exists a finite subset $S \subseteq_\f \names$ that \emph{supports $x$}:
whenever $\pi \in \Perm \names$ fixes $S$, viz.\ $\pi_S = \id$, then $\pi \cdot x = x$.
If $S, T \subseteq_\f \names$ support $x$, then so does their intersection $S \cap T$; in particular, every element $x$ has a \emph{least} finite support, denoted $\supp x$.

The idea is that an element $x$ of a nominal set $X$ is some syntactic object containing free names, and $\supp x$ is the set of free names in~$x$. For example, the set of $\lambda$-terms modulo $\alpha$-equivalence with free variables from $\names$ forms a nominal set whose action permutes free variables (e.g.\ $(a\, b)\cdot \lambda a.\, a\, b = \lambda c.\, c\, a$). The least support of a $\lambda$-term is its set of free variables (e.g.\ $\supp (\lambda a.\, a\, b) = \{b\}$).
 
We let $\Nom$ denote the full subcategory of $\Perm \names$-sets given by nominal sets. It has colimits and limits, with colimits and finite limits constructed like in $\Set$, and it is cartesian closed (in fact a topos).
Another important closed monoidal structure on nominal sets is the \emph{fresh product} $X * Y \subseteq X \times Y$ containing only pairs with disjoint support.
If $(x, y) \in X * Y$ we write $x\#y$  and say that $x$ is \emph{fresh} for $y$.
The fresh product induces the functor \eqref{eq:y-nom}
\vspace{-4pt}\\
\begin{minipage}{0.45\linewidth}
\vspace{-7pt}
  \begin{equation}
    \label{eq:y-nom}
    \psi \colon \I^\op \rightarrow \Nom, \;\;\psi(A) =  \names^{*A}
  \end{equation}
\end{minipage}
\hfill%
\begin{minipage}{0.5\linewidth}
  \begin{equation}
    \label{eq:nom-adjs}
    \begin{tikzcd}
      \PSh \I
      \ar[yshift=-2pt, bend left=20]{r}{\sh}
      \ar[yshift=2pt, bend right=20, leftarrow, swap]{r}{\unsh}
      \ar[phantom]{r}{\rotatebox{-90}{\(\dashv\)}}
      & \Nom
      \ar[r, "\iota" {xshift=4pt}, yshift=-2pt, bend left=20]
      \ar[yshift=2pt, bend right=20, leftarrow, swap]{r}{(-)_\fs}
      \ar[r, phantom, "\rotatebox{-90}{\(\dashv\)}" {xshift=3pt}]
      & {\Perm \names\text{-set}}
    \end{tikzcd}
  \end{equation}
\end{minipage}\\
sending $A$ to the nominal set $\names^{*A} = \names * \cdots * \names$ with $|A|$ factors.

Nominal sets are encased between $\Perm \names$-sets and presheaves:
\begin{nota}
  For convenience, we identify the category $\F$ of finite sets with its equivalent full subcategory of finite subsets of $\names$.
  If $A \subseteq C$ we also write $(A \subseteq C)$ for the inclusion map.
\end{nota}

Nominal sets are related to both $\Perm \names$-sets and presheaves over $\I$ via the two adjunctions from \eqref{eq:nom-adjs}.
The right-hand adjunction expresses that nominal sets form a full coreflective subcategory of $\Perm \names$-sets, with the coreflector $(-)_\fs$ simply picking out the finitely supported elements of a $\Perm \names$-set.

The adjunction between $\Nom$ and $\PSh \I$~\cite[Chapter 6]{p13} is central for the understanding of substitution on nominal sets.
It is somewhat similar to the (discrete) \emph{Grothendieck construction} between fibrations and indexed categories:
A nominal set $X\in \Nom$ acts as the (fiber-discrete) ``total category'' over $\I$, with fibration $\supp \colon X \rightarrow \I$. The corresponding (discrete) ``indexed category'' is the presheaf $I_* X \colon \I \rightarrow \Set$. More precisely:

\begin{constr}[Adjunction $I^*\! \dashv I_*$]\label{constr:nom-psh}~
  \begin{enumerate}
    \item \(\unsh \colon \Nom \rightarrow \PSh \I\) takes a nominal set \(X\) to
          \(\unsh X \colon A \mapsto \{x \in X \mid A \text{ supports } x\}\).
    \item \(\sh \colon \PSh \I \rightarrow \Nom\) takes a presheaf \(F\) to 
          \( \{(A, x) \mid A \subseteq_{\f} \names,\, x \in FA\} /\!\! \sim
          \),
          where the equivalence relation $\sim$ is given by \((A, x) \sim (A', x')\) if there is a finite set \(C\) with \(A, A' \subseteq C\) and $(A\subseteq C) \cdot x = (A'\subseteq C) \cdot x'$ in \(FC\).
          We denote the equivalence class of \((A, x)\) by \([A, x]\).
  \end{enumerate}
\end{constr}

\begin{notheorembrackets}
  \begin{thm}[{\cite[Thm.~6.8]{p13}}]\label{thm:nom-sh}
    The functor \(\unsh\) is a fully faithful right adjoint. Its left adjoint \(\sh\) preserves finite limits.
    The image of $\unsh$ is the full subcategory $\PShpb \I \subseteq \PSh \I$ of intersection-preserving presheaves.
  \end{thm}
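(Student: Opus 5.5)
The plan is to establish the adjunction $\sh \dashv \unsh$ by an explicit unit/counit construction, then show that $\unsh$ is fully faithful by checking that the counit is an isomorphism. After that, left exactness of $\sh$ is obtained by presenting $\sh$ as a filtered colimit. Finally, the image of $\unsh$ is identified using the intersection property of supports.

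\emph{Well-definedness and adjunction.} In $\sh F$, the relation $\sim$ is an equivalence relation because any two finite sets have a common upper bound (their union) in the poset of finite subsets. The $\Perm\names$-action is given by $\pi\cdot[A,x]=[\pi A, \pi|_A\cdot x]$, where $\pi|_A\colon A\to\pi A$ is a bijection in $\I$. Each $[A,x]$ is supported by $A$: if $\pi$ fixes $A$ pointwise, then $\pi|_A=\id$. Hence $\sh F$ is nominal.
\begin{itemize}
\item The unit $\eta_F\colon F\to\unsh\sh F$ sends $x\in FA$ to $[A,x]$; this is natural in $\I$ because for an injection $j\colon A\to A'$ one writes $j$ as a bijection followed by an inclusion.
\item The counit $\varepsilon_X\colon \sh\unsh X\to X$ sends $[A,x]\mapsto x$.
\end{itemize}
The triangle identities are immediate. $\varepsilon_X$ is a bijection: it is surjective because every $x$ has a finite support, and injective because $[A,x]\sim[A',x]$ via $C=A\cup A'$. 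Thus $\unsh$ is fully faithful.

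\emph{Finite limits.} The key observation is that, restricted to the finite subsets of $\names$ with inclusions, $(\sh F)$ as a set is the colimit $\colim_{A\subseteq_\f\names} FA$ over a directed poset. Filtered colimits commute with finite limits in $\Set$. Finite limits in $\Nom$ are computed as in $\Set$ with the induced action, and the colimit description is compatible with the actions. Together these show that $\sh$ preserves finite limits.

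\emph{Image.} A presheaf $F$ on $\I$ is intersection-preserving if it sends pullbacks of injections (intersections) to pullbacks. For any $X$, the presheaf $\unsh X$ is intersection-preserving exactly because $S,T$ supporting $x$ implies $S\cap T$ supports $x$. Conversely, given intersection-preserving $F$, one must show that $\eta_F$ is an isomorphism. Injectivity uses that the maps $FA\to FC$ along inclusions are monic; this follows from intersection preservation applied to the square with $A\cap A=A$, i.e.\ the pullback of $A\subseteq C$ with itself. Surjectivity is the main obstacle: one must show that if $A$ supports $[C,x]$ with $x\in FC$, then $x$ comes from $FA$. Since the set of supports of $[C,x]$ is closed under intersection, it suffices to treat the case $C\setminus A$ small. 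The plan here is to take a fresh copy $C'$ of $C$ over $A$ via a permutation $\pi$ fixing $A$. Then $[C,x]=[C',\pi\cdot x]$, so $x$ and $\pi\cdot x$ agree in $F(C\cup C')$. Intersection preservation for $C\cap C'=A$ then yields an element of $FA$ restricting to $x$. The full presheaf-level details are in \cite[Thm.~6.8]{p13}.
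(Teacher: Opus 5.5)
Your proof is correct. The paper gives no argument of its own beyond citing \cite[Thm.~6.8]{p13} and recording Construction~\ref{constr:nom-psh}, and your verification is essentially the standard proof from that source: the counit $\varepsilon_X$ is bijective; the underlying set of $\sh F$ is a directed colimit, so finite limits commute with it; and the fresh-copy trick plus intersection preservation shows $\eta_F$ is invertible. The only blemish is the reduction to ``$C\setminus A$ small'', which is unnecessary and does not follow from closure of supports under intersection: after replacing $(C,x)$ by $(C\cup A, (C\subseteq C\cup A)\cdot x)$, your fresh-copy argument works for arbitrary $C\supseteq A$, using monicity of $F$ on inclusions to transfer the equation from some $F(D)$, $D\supseteq C\cup C'$, down to $F(C\cup C')$.
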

\end{notheorembrackets}
Intersection preservation captures precisely the corresponding property of finite supports.
\begin{nota}
  Let $X$ be a nominal set.
  By \Cref{constr:nom-psh}, every \(x \in X\) is an element of every set \((\unsh X)A\) with \(\supp x \subseteq A\).
  We abuse notation to write $j \cdot x = \pi_j \cdot x$ for $j \colon A \rightarrow B$ and $x \in X$, where $\pi_j \in \Perm \names$ is any permutation agreeing with $j$ on $\supp x$.
  Conversely, for \(F \in \PSh \I\) with \(x \in FA\) and \(\pi \in \Perm \names\) we write \[\pi \cdot x = \pi|^B_{A} \cdot x \in FB \]
whenever the bijection $\pi$ (co-)restricts to an injection $\pi|^B_A\colon A \mono B$.
\end{nota}

\begin{lem}\label{lem:nom-nerve}
\Cref{constr:nom-psh} is precisely the nerve \eqref{eq:nerve} of the functor $\psi\colon \I^\op\to \Nom$ defined in \eqref{eq:y-nom}:
  We have isomorphisms $\unsh \cong \nu_\psi$ and $\sh = \Lan_\y \psi$.
\end{lem}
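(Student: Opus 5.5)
We have to show two things: $\unsh \cong \nu_\psi$, i.e.\ $(\unsh X)A \cong \Nom(\names^{*A}, X)$ naturally in $A \in \I$ and $X \in \Nom$; and $\sh \cong \Lan_\y \psi$. We prove the first and deduce the second.

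\textbf{The nerve.} Fix $A \subseteq_\f \names$. The canonical element of $\names^{*A}$ is the tuple $\vec a = (a)_{a\in A}$ of pairwise distinct names. Its support is $A$, and its orbit is all of $\names^{*A}$: any two injective $A$-tuples of names are related by a finite permutation. Given an equivariant map $f\colon \names^{*A}\to X$, we send it to $f(\vec a)$. Equivariance gives $\supp f(\vec a)\subseteq \supp \vec a = A$, so $f(\vec a)\in (\unsh X)A$.

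Conversely, given $x$ supported by $A$, we define $f_x(\pi\cdot\vec a)=\pi\cdot x$. This is well defined: if $\pi\cdot\vec a=\pi'\cdot\vec a$, then $\pi^{-1}\pi'$ fixes $A$ pointwise, so it fixes $x$. The map $f_x$ is equivariant by construction, and the two assignments are mutually inverse because $\names^{*A}$ is a single orbit.

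Naturality in $X$ is immediate. For naturality in $A$, take an injection $j\colon A\mono B$. Then $\psi(j)\colon \names^{*B}\to\names^{*A}$ restricts a tuple along $j$, sending $\vec b \mapsto (j(a))_{a\in A} = \pi_j\cdot\vec a$. So precomposition sends $f_x$ to the element $\pi_j\cdot x = j\cdot x$, which is exactly the action of $(\unsh X)(j)$ in the paper's notation. Hence $\nu_\psi \cong \unsh$.

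\textbf{The left Kan extension.} By \Cref{rem:psh-functor}, $\Lan_\y\psi \dashv \nu_\psi$. Since $\nu_\psi\cong\unsh$ and $\sh\dashv\unsh$ (\Cref{thm:nom-sh}), uniqueness of left adjoints gives $\sh\cong\Lan_\y\psi$.

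As a sanity check, one can also compute directly:
\[
(\Lan_\y\psi)F=\int^A FA\cdot\names^{*A}.
\]
Its elements $[x\in FA,\ \vec c\in\names^{*A}]$ correspond to $[\pi\cdot A,\ \pi\cdot x]$ in the quotient of \Cref{constr:nom-psh}, using the coend relation to identify $x$ along inclusions.

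\textbf{Main point of care.} The only delicate step is the naturality in $A$ of the nerve isomorphism. It requires matching the contravariant action $\psi(j)$ on fresh tuples with the renaming $j\cdot x$. That renaming is defined using an arbitrary permutation $\pi_j$ extending $j$, and we need the result to be independent of that choice. This holds by the same support argument used for well-definedness above: two extensions of $j$ differ by a permutation fixing $A$ pointwise, which fixes $x$.
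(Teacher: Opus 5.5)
Your proof is correct. The paper states this lemma without a separate proof. The justification it relies on is visible in the proof of \Cref{thm:nom-sub}, which asserts $\unsh\cdot\psi=\y$, and in the proof of \Cref{lem:wb-psh-subcat}, where the nerve of a corestricted Yoneda embedding is identified with the inclusion.

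That route is structural. First one observes $\unsh\psi\cong\y$, i.e.\ $\unsh(\names^{*A})(B)=\{c\colon A\mono\names \mid c[A]\subseteq B\}\cong\I(A,B)$, so $\psi$ is just the Yoneda embedding corestricted to $\PShpb{\I}\simeq\Nom$. Then full faithfulness of $\unsh$ (\Cref{thm:nom-sh}) and the Yoneda lemma give $\nu_\psi X=\Nom(\psi A,X)\cong\Nat(\y A,\unsh X)\cong(\unsh X)A$, with naturality in $A$ and $X$ automatic. Finally, $\sh\cong\Lan_\y\psi$ follows either by uniqueness of left adjoints, exactly as you argue, or because $\sh$ is cocontinuous with $\sh\y\cong\sh\unsh\psi\cong\psi$.

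Your single-orbit argument is the elementary content of this, a direct ``nominal Yoneda lemma'' for $\Nom(\names^{*A},X)$. It has two advantages: it does not need full faithfulness of $\unsh$ for the first isomorphism, and it gives the explicit formula $f\mapsto f(\vec a)$. The cost is that naturality in $A$ must be checked by hand through the choice of $\pi_j$, which you handle correctly. The structural route, in turn, makes immediately visible that $\psi$ is fully faithful and dense, which is what the paper needs afterwards.
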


\subsection{Substitution Tensor in Nominal Sets}

In this section we construct the substitution tensor on nominal sets.
Inspired by the corresponding tensor on presheaves we derive it via~\Cref{thm:gen-mon}, and show that it has an explicit simple description due to the finite support property of nominal sets.
We start by defining the left action on $\Nom$.
Since \Nom is a subcategory of $\PSh \I$, the candidate is clear, as the action $\tl\colon \I^\op \times \PSh \I \rightarrow \PSh \I$ over presheaves is given by iterated Day convolution $\oplus$ (\Cref{cor:conv-rep-sub}).
It has been noted informally by several authors~\cite{m20, c13} that Day convolution $\oplus$ on $\PSh \I$ is closely related to the fresh product $*$ on nominal sets.
The precise connection is as follows:

\begin{prop}\label{prop:day-fresh}
  Both the left $\sh \colon (\PSh \I, \oplus) \rightarrow (\Nom, *)$ and the right adjoint functor $\unsh \colon (\Nom, *) \rightarrow (\PSh \I, \oplus)$ are strong monoidal.
\end{prop}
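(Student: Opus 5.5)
We prove the two claims separately, handling the left adjoint $\sh$ first and then transferring the result to $\unsh$ via \Cref{thm:nom-sh}.

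\emph{Step 1: $\sh$ is strong monoidal.} We compare both sides on representables and extend by colimits. Recall that $\sh = \Lan_\y \psi$ (\Cref{lem:nom-nerve}). Its domain $(\PSh\I,\oplus,\y\emptyset)$ is the Day convolution for $(\I,+,\emptyset)$.

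The target $(\Nom,*,1)$ is cocomplete. Moreover $*$ preserves colimits in each argument, since it is closed: the fresh product has the magic-wand right adjoint, as recalled in the paper.

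By \Cref{thm:day-conv-pres}, it therefore suffices to show that $\psi\colon \I^\op\to\Nom$ is strong monoidal from $(\I,+,\emptyset)^\op$ to $(\Nom,*,1)$. The colimit-preserving extension $\Lan_\y\psi$ is then strong monoidal, and it is unique up to isomorphism, so it coincides with $\sh$.

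For $\psi$ we need two facts:
\begin{itemize}
\item $\psi(\emptyset)=\names^{*0}=1$.
\item $\psi(A+B)=\names^{*(A+B)}\cong \names^{*A}*\names^{*B}$, natural in injections $A\mono A'$, $B\mono B'$.
\end{itemize}
The second holds because a tuple of pairwise distinct names indexed by $A+B$ is exactly a pair consisting of an $A$-tuple and a $B$-tuple whose supports (their sets of entries) are disjoint. The coherence of these isomorphisms with the associator and unitors is immediate, since all of them are the canonical rebracketings of tuples.

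\emph{Step 2: $\unsh$ is strong monoidal.} Here we use that $\unsh$ is fully faithful with $\sh\unsh\cong\Id$, and that its essential image is $\PShpb\I$ (\Cref{thm:nom-sh}). It suffices to prove two things:
\begin{itemize}
\item $\PShpb\I$ is closed under $\oplus$ and contains the unit $\y\emptyset$.
\item $\sh$ restricted to $\PShpb\I$ is inverse to $\unsh$.
\end{itemize}
Given these, for $X,Y\in\Nom$ we obtain
\[
\unsh(X*Y)\cong \unsh(\sh\unsh X * \sh\unsh Y)\cong \unsh\sh(\unsh X\oplus\unsh Y)\cong \unsh X\oplus \unsh Y,
\]
where the middle isomorphism uses Step 1 and the last uses closure of $\PShpb\I$ under $\oplus$. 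Coherence is inherited from Step 1, because $\unsh$ is an equivalence onto $\PShpb\I$. The unit condition $\unsh 1\cong\y\emptyset$ is clear, since both are the terminal presheaf on $\I$.

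\emph{Main obstacle: closure of $\PShpb\I$ under $\oplus$.} We would compute $\oplus$ concretely rather than abstractly. For $X,Y\in\PSh\I$, the coend formula for Day convolution over $(\I,+)$ gives
\[
(X\oplus Y)C\cong\coprod_{C_1\sqcup C_2\subseteq C} XC_1\times YC_2\big/\!\sim .
\]
When $X$ and $Y$ preserve intersections (equivalently, pullbacks of injections), each element of $XC_1$ or $YC_2$ has a least support. We then claim that every class has a canonical representative, namely the pair of least supports. This gives
\[
(X\oplus Y)C\cong\{(x,y)\mid x\in\textstyle\bigcup_{C_1} XC_1,\ y\in \textstyle\bigcup_{C_2} YC_2,\ \supp x\cap\supp y=\emptyset,\ \supp x\cup \supp y\subseteq C\}.
\]
This description visibly preserves intersections: an element lies in both $(X\oplus Y)C$ and $(X\oplus Y)C'$ exactly when $\supp x\cup\supp y\subseteq C\cap C'$. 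The same computation also identifies the presheaf directly with $\unsh(X'*Y')$, where $X'=\sh X$ and $Y'=\sh Y$.

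The delicate point is checking that the coend identifications collapse exactly to the least-support representatives. The relation $\sim$ is generated by injections on $C_1$ and $C_2$, and we must show it neither identifies too much nor too little. We would verify this using intersection preservation of $X$ and $Y$ on a common superset.
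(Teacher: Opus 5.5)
Your proof is correct, but it is organized differently from the paper's. The paper simply writes down both comparison isomorphisms elementwise. For $\unsh$ it sends $\kappa_{A,B}(j,x,y)$ to $(j|_A\cdot x,\, j|_B\cdot y)$, with inverse given by the representative indexed by $\supp x\cup\supp y$. For $\sh$ it sends $([A,x],[B,y])$ to $[A\cup B,\kappa_{A,B}(\id,x,y)]$. Coherence with associators and unitors is not discussed.

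You instead obtain the $\sh$ part abstractly. You apply the universal property of Day convolution (\Cref{thm:day-conv-pres}) to the evidently strong monoidal $\psi\colon(\I,+,\emptyset)^\op\to(\Nom,*,1)$. This is exactly the argument the paper itself uses later for the uniform tensor in \Cref{prop:nom-unif-sub-mon}. You then transfer the result to $\unsh$ using $\sh\unsh\cong\Id$ and closure of $\PShpb\I$ under $\oplus$.

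Your least-support normal form for $(X\oplus Y)C$ carries the same content as the paper's explicit isomorphism for $\unsh$, since its inverse picks exactly that canonical representative. So the concrete work coincides. What your route buys is the following.
\begin{itemize}
\item Coherence comes for free: from \Cref{thm:day-conv-pres} for $\sh$, and by transport along the equivalence $\PShpb\I\simeq\Nom$ for $\unsh$. Equivalently, the mate of the strong structure on $\sh$ is a coherent lax structure on $\unsh$, and it is invertible because the unit is invertible at $\unsh X\oplus\unsh Y$.
\item You avoid the representative-juggling hidden in the paper's formula for $\sh$. That formula only makes sense after choosing representatives $[A,x],[B,y]$ with $A\cap B=\emptyset$, which requires a freshness argument.
\end{itemize}
The price is that you depend on closedness of $*$, which gives the colimit preservation needed in \Cref{thm:day-conv-pres}. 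You also do not get explicit formulas for the $\sh$-comparison maps, which the paper's version provides.
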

We therefore define the left action $\tl$ via iterated fresh product:
\begin{equation}
  \label{eq:nom-act}
 \tl \colon \I^\op \times \Nom \rightarrow \Nom, \quad A \tl X = X^{* A} = \{f \in \Set(A, X) \mid \forall (a \ne b) \colon f(a) \# f(b)\}.
\end{equation}
\Cref{constr:ext} applied to the action $\tl$ and the functor $J=\psi\colon \I^\op\to \Nom$ of \eqref{eq:y-nom} yields:
\begin{defn}[Substitution tensor]\label{defn:nom-sub}
  The \emph{substitution tensor} on $\Nom$ is given by the \(\psi\)-extension of $\tl$:
  \begin{equation}
    \label{eq:nom-sub-abstr}
    X\sub Y := X \tl_\psi Y = \int^{A \in \I} \nu_\psi X A \cdot Y^{*A}.
  \end{equation}
\end{defn}

We give an elementary description of \eqref{eq:nom-sub-abstr}, which also makes it easier to verify the conditions of \Cref{thm:gen-mon}.
Generalizing \eqref{eq:nom-act} to nominal sets $Y$ and arbitrary sets $A$ we write $Y^{*A}$ for the $\Perm \names$-set
\begin{equation}
  \label{eq:fresh-power}
  Y^{* A} = \{f \colon A \rightarrow Y \mid a \ne b \text{ implies } f(a) \# f(b)\}
\end{equation}
with the pointwise action.
Note that $Y^{*A}$ is nominal if $A$ is finite.

\begin{thm}\label{thm:nom-sub-pres}
  The substitution tensor for $X,Y\in \Nom$ is given by
  \begin{equation}\label{eq:sub-pres} X \sub Y = \{(x, \gamma) \mid x \in X,\, \gamma \in Y^{* \supp x}\}/\!\! \sim, \qquad \text{ where } (\pi \cdot x, \gamma) \sim (x, \gamma \pi),
  \end{equation}
  for \(x \in X,\, \gamma \in Y^{*\supp(\pi \cdot x)}\).
  We write \(x[\gamma]\) for the equivalence class of \((x, \gamma)\).
  The $\Perm \names$-action and least supports of elements of $X \sub Y$ are given by
  \[\pi \cdot x[\gamma] = x[\pi \cdot \gamma], \qquad \supp x[\gamma] = \supp \gamma = \bigcup_{a \in \supp x} \supp \gamma(a).\]
\end{thm}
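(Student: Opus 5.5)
We start from the abstract formula \eqref{eq:nom-sub-abstr}, $X \sub Y = \int^{A\in\I} (\nu_\psi X)A \cdot Y^{*A}$, and compute the coend in $\Nom$. Colimits in $\Nom$ are formed as in $\Set$ with the induced action, so the coend is the quotient of $\coprod_{A} (\nu_\psi X)A \times Y^{*A}$ by the relation generated by $(j\cdot x, \gamma) \sim (x, \gamma \cdot j)$ for $j\colon A\mono A'$ in $\I$, $x \in (\nu_\psi X)A$, $\gamma\in Y^{*A'}$. By \Cref{lem:nom-nerve}, $(\nu_\psi X)A \cong (\unsh X)A = \{x \in X \mid A \text{ supports } x\}$. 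Restriction along $j$ acts on $\gamma\in Y^{*A'}$ by precomposition, $\gamma\cdot j = \gamma j \in Y^{*A}$. The action of $\Perm\names$ on a class $[A, x, \gamma]$ is $[A, x, \pi\cdot\gamma]$, acting only on the $Y$-component. This is because $A$ is an object of $\I$ indexing the coend, not a subset being moved, and the $\Perm\names$-structure on $\nu_\psi X A$ is trivial while that on $Y^{*A}$ is pointwise.

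Next we normalise representatives. For $x$ supported by $A$, the inclusion $i\colon \supp x \subseteq A$ gives $(A,x,\gamma) = (A, i\cdot x, \gamma) \sim (\supp x, x, \gamma i)$. Hence every class has a representative $(\supp x, x, \gamma)$ with $\gamma \in Y^{*\supp x}$, and we obtain a surjection from the set in \eqref{eq:sub-pres} onto $X\sub Y$.

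The main step is to show that the induced equivalence on normalised pairs is exactly $(\pi\cdot x,\gamma)\sim(x,\gamma\pi)$. In one direction, any injection $j\colon A\mono A'$ of finite subsets of $\names$ extends to a finite permutation $\pi_j$. Then $j\cdot x = \pi_j\cdot x$, and after normalising, a generating step $(j\cdot x,\gamma)\sim(x,\gamma j)$ becomes $(\pi_j\cdot x, \gamma|_{\supp(\pi_j x)}) \sim (x, \gamma|\circ\pi_j)$, using $\supp(\pi\cdot x) = \pi(\supp x)$. Conversely, each relation $(\pi\cdot x,\gamma)\sim(x,\gamma\pi)$ is the generating step for $j = \pi|\colon \supp x \mono \supp(\pi x)$. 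To handle zig-zags we check that normalisation is compatible with generating steps. Both sides of a step normalise to pairs related by one permutation step, and the permutation relation is already an equivalence (it is closed under composition and inverses). I expect this bookkeeping to be the main obstacle: the notation $j\cdot x=\pi_j\cdot x$ depends only on $j$ restricted to $\supp x$, and we must check that it is independent of the choice of $\pi_j$.

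Finally, equivariance $\pi\cdot x[\gamma] = x[\pi\cdot\gamma]$ holds by the description of the action above. For supports, $\supp\gamma = \bigcup_{a}\supp\gamma(a)$ clearly supports $x[\gamma]$. For minimality, suppose $S$ supports $x[\gamma]$ and $b\in\supp\gamma(a)\setminus S$. Pick a fresh $c$ and set $\tau=(b\ c)$. Then $x[\tau\gamma]=x[\gamma]$ forces $\tau\gamma = \gamma\pi$ for some $\pi$ fixing $x$, with $\pi$ a permutation of $\supp x$. Because $\gamma$ has pairwise disjoint supports, $\gamma(\pi a)=\tau\gamma(a)$ contains $c$, which is fresh for all values of $\gamma$. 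This is a contradiction, so $\supp x[\gamma] = \supp\gamma$.
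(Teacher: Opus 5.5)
Your proposal is correct and follows the same route as the paper. You identify $\nu_\psi X$ with $\unsh X$, compute the coend as in $\Set$, and use the normalisation $\kappa_A(x,\gamma)\mapsto x[\gamma|_{\supp x}]$ to match the coend relation with the permutation relation; the paper only asserts that the two relations correspond, and your zig-zag argument (the permutation relation is an orbit relation, hence already an equivalence) together with your explicit check of the action and least support fill in details the paper leaves implicit.
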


\begin{thm}\label{thm:nom-sub}
 $(\Nom,\sub,\names)$ is a monoidal category with unit $\psi 1\cong \names$.
\end{thm}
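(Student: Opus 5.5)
We derive the statement from \Cref{thm:gen-mon}, applied to $\A=\Nom$, the monoidal category $\V=(\I^\op,+,\emptyset)$ acting via the iterated fresh product $A\tl X=X^{*A}$ from \eqref{eq:nom-act}, and $J_I=\names$. One could also take $\V=(\I^\op,\times,1)$, mirroring the presheaf case. The difference is not essential, and we use whichever makes the action axioms hold. Then $JA=A\tl\names=\names^{*A}=\psi A$, so the $J$-extension is exactly $\sub$ from \Cref{defn:nom-sub}. The unit is $J1=\psi1=\names$.

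\textbf{Step 1: action axioms.} We check that $\tl$ is an action. We have $X^{*1}\cong X$. The iterated fresh power $(X^{*B})^{*A}\cong X^{*(A\times B)}$ holds because freshness of families means pairwise disjointness of all supports. Functoriality in $A\in\I^\op$ is given by precomposition with injections. Coherence is inherited from the evident bijections of sets.

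\textbf{Step 2: well-behavedness of $\psi$ for every $(-)\tl A$.}
\begin{itemize}
\item Full faithfulness: $\Nom(\names^{*B},\names^{*A})\cong\I(A,B)$, since equivariant maps can only select distinct names.
\item Density: by \Cref{lem:nom-nerve}, $\nu_\psi\cong\unsh$ is fully faithful by \Cref{thm:nom-sh}.
\item Nerve condition \eqref{eq:wb}: we need $\unsh\Lan_\psi(X^{*(-)})\cong \Lan_\psi(\unsh(X^{*(-)}))$.
\end{itemize}
For the nerve condition we would argue in the spirit of \Cref{lem:wb-psh-subcat}, with $\Nom\simeq\PShpb\I$ now a coreflective-then-reflective situation. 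Concretely, we use the elementary description from \Cref{thm:nom-sub-pres}: an element of $\Lan_\psi(Y^{*(-)})$ at $X$ is $x[\gamma]$ with $\gamma\in Y^{*\supp x}$, and its support is $\supp\gamma$. Hence the elements supported by a finite $C$ are exactly the classes $[x,\gamma]$ with $\gamma\in \unsh(Y^{*\supp x})C$. This is precisely the presheaf coend $\int^{A}\unsh XA\times\unsh(Y^{*A})C$; that the coend relations agree uses the normal form of representatives with $A=\supp x$.

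\textbf{Step 3: the isomorphism $\aiso$.} We need $\aiso\colon(V\tl X)\sub Y\cong V\tl(X\sub Y)$, i.e.\ $(X^{*V})\sub Y\cong (X\sub Y)^{*V}$. Using \Cref{thm:nom-sub-pres}, we send $f[\gamma]$, with $f\colon V\to X$ pairwise fresh and $\gamma\in Y^{*\supp f}$ where $\supp f=\biguplus_v\supp f(v)$, to the family $v\mapsto f(v)[\gamma|_{\supp f(v)}]$. Freshness of the image family follows because $\supp(f(v)[\gamma|])=\bigcup_{a\in\supp f(v)}\supp\gamma(a)$, and these sets are disjoint for distinct $v$ since $\gamma$ is a fresh family. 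The inverse glues the restrictions $\gamma_v$ into one $\gamma$ on the disjoint union of supports. For this to be well defined, we first use the relation $\sim$ to rename the $f(v)$ apart: we choose $\pi_v$ so that the supports $\supp(\pi_v\cdot x_v)$ are disjoint, and replace $\gamma_v$ by $\gamma_v\pi_v^{-1}$. Well-definedness on $\sim$-classes and naturality in $V,X,Y$ are then routine. Once $\aiso$ is available, \Cref{thm:gen-mon} gives $\alpha,\lambda,\rho$, and by \Cref{rem:coherence} coherence follows; alternatively, the pentagon and triangle can be checked on representatives $x[\gamma]$, since all maps are explicit.

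\textbf{Main obstacle.} We expect the main difficulty to be the well-behavedness condition \eqref{eq:wb}, because $\Nom$ is not closed under colimits in $\PSh\I$. The plan is to reduce it to showing that the presheaf coend $\int^A \unsh X A\times \unsh(Y^{*A})$ is already intersection-preserving. We would verify this directly: the least context supporting $[x,\gamma]$ is $\supp\gamma$, and this set is computable from any representative.
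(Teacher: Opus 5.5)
Your proposal is correct and is essentially the paper's proof. Both apply \Cref{thm:gen-mon} with $J=\psi$ and $J_I=\names$, establish full faithfulness and density of $\psi$ (the paper gets both from $\unsh\psi\cong\y$), reduce the nerve condition to the presheaf coend $\int^{A}\unsh X A\times\unsh(Y^{*A})$ already being $\unsh(X\sub Y)$, and use the same explicit $\aiso$. That nerve reduction is exactly what \Cref{lem:wb-psh-subcat} packages, and your normal-form argument supplies the detail the paper only asserts. Your renaming apart of the $x_v$ in $\aiso$ is what makes the paper's gluing formula well defined.

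There are two small corrections. First, $(\I^\op,+,\emptyset)$ is not an alternative: $X^{*\emptyset}=1\not\cong X$, and its unit would be $\psi\emptyset=1$ rather than $\names$. So $\V$ must be $(\I,\times,1)^\op$, which is what your Step~1 actually verifies. Second, \Cref{rem:coherence} does not give coherence for free; it lists conditions on $\aiso$ that must be checked. Your fallback of verifying the pentagon and triangle on representatives $x[\gamma]$ is therefore the step actually needed, though the paper is no more explicit there.
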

\begin{proof}[Proof sketch]
  We apply \Cref{thm:gen-mon} to the setting where
  $\names$ is the dual of the monoidal category $(\I, \times, 1)$;
    $\A$ is the category $\Nom$;
    $J=\psi\colon \I^\op\to \Nom$ is given by \eqref{eq:y-nom};
    and the action $\tl\colon \I^\op\times \Nom\to \Nom$ is given by \eqref{eq:nom-act}.
\end{proof}
\begin{rem}
  Since $I_* (\Nom, *) \rightarrow (\PSh \I, \oplus)$ is monoidal by \Cref{prop:day-fresh}, the embedding $\unsh \colon \Nom \incl \PSh \I$ makes $\Nom$ a monoidal subcategory of $\PSh \I$.
  This yields an alternative way to calculate the substitution of nominal sets $X, Y$, namely by calculating it in presheaves: \[X \sub Y =\sh (\unsh X \sub \unsh Y).\]
\end{rem}
It remains to show that the substitution tensor gives a \emph{closed} monoidal structure. By \Cref{thm:gen-mon-clo}, this requires for every presheaf of the form $\Nom(X^{*(-)}, Y) $  an isomorphism
\[(\nu_\psi \cdot \Lan_\y \psi)(\Nom(X^{*(-)}, Y)) \cong \Nom(X^{*(-)},Y).\]
By \Cref{lem:nom-nerve} and \Cref{thm:nom-sh}, this is equivalent to the presheaf $\Nom(X^{*(-)}, Y)$ preserving intersections.
We provide an explicit description of the nominal set corresponding to that presheaf. It is based on the following concept:


\begin{defn}[Finitely reducible map]\label{def:nom-sub-clo}
  Let $X$ and $Y$ be $\Perm \names$-sets.
  A map $f \colon X^{*\names} \rightarrow Y$ is \emph{(finitely) reducible} if there exists a finite subset $A \subseteq \names$ such that $f$ factors as \[f = g \cdot p_A \colon X^{*\names} \rightarrow X^{*A} \rightarrow Y\] for some $g \colon X^{*A} \rightarrow Y$, where $p_A = X^{(A \subseteq \names)}$ is projection.
  Finite reducibility for maps $X^\names \rightarrow Y$ is defined analogously.
  We define
\begin{equation}\label{eq:dwand} X \dwand Y = \{f \in \Perm \names\text{-set}(X^{*\names}, Y) \mid f \text{ is finitely reducible}\}.
\end{equation}
\end{defn}

The intuition behind a map \(f \colon X^{*\names} \rightarrow Y\) being finitely reducible is that the definition of $f$ only uses finitely many input components of an input sequence $\gamma \in X^{*\names}$.
\begin{prop}\label{prop:dwand}
  Let $Y$ and $Z$ be nominal sets.
  \begin{enumerate}
    \item The set \(Y \dwand Z\) is a nominal set under the action \[(\pi \cdot f)(\gamma) = f(\gamma \pi) \quad \text{ for } \quad \pi \in \Perm \names \text{ and } f \in Y \dwand Z.\]
          It satisfies that $\supp f \subseteq A$ if and only if $f$ factors through $p_A$.
    \item\label{prop:dwand:iso} We have $I_*(Y \dwand Z) \cong \Nom(Y^{*(-)}, Z)$.
  \end{enumerate}
\end{prop}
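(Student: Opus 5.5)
Both parts reduce to one observation: an equivariant map on $Y^{*\names}$ is determined by its restrictions to the finite fresh powers $Y^{*A}$. For part (1), I first check that the action is well defined, i.e.\ that $\pi\cdot f$ is again equivariant and finitely reducible. For $\gamma\in Y^{*\names}$, precomposition $\gamma\pi$ is again injective-fresh because $\pi$ is a bijection. Equivariance of $\pi\cdot f$ follows from $(\sigma\cdot\gamma)\pi = \sigma\cdot(\gamma\pi)$, since the action on $Y^{*\names}$ is pointwise and so commutes with precomposition. If $f = g\cdot p_A$, then $f(\gamma\pi)$ depends only on $\gamma\pi|_A$, i.e.\ on $\gamma|_{\pi A}$. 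Hence $\pi\cdot f$ factors through $p_{\pi A}$, via $g\cdot Y^{\pi|_A}$ composed with the reindexing $Y^{*\pi A}\cong Y^{*A}$. The group action laws are immediate.

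Next I prove the support characterization; finite support then follows at once, since every $f$ factors through some $p_A$. For ($\Leftarrow$): if $f=g\cdot p_A$ and $\pi$ fixes $A$ pointwise, then $(\gamma\pi)|_A=\gamma|_A$, so $\pi\cdot f=f$, and $A$ supports $f$. For ($\Rightarrow$): suppose $A$ supports $f$, and pick some finite $B\supseteq A$ with $f$ factoring through $p_B$. I show that $f(\gamma)=f(\gamma')$ whenever $\gamma|_A=\gamma'|_A$. Choose a finite permutation $\pi$ fixing $A$ pointwise and mapping $B\setminus A$ to names outside $B$ where $\gamma$ and $\gamma'$ can be made to agree. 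Concretely, build an intermediate $\delta\in Y^{*\names}$ with $\delta|_A=\gamma|_A$ and with $\delta$ agreeing with $\gamma$ on $\pi(B)$ resp.\ with $\gamma'$ on $\pi'(B)$ for suitable $\pi,\pi'$. Then use $f(\gamma)=(\pi\cdot f)(\gamma)=f(\gamma\pi)$, which depends only on $\gamma$ restricted to $\pi(B)$.

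The delicate point, and the main obstacle, is the freshness constraint. The component values of $\gamma$ on $B\setminus A$ must be pairwise fresh and fresh from the values on $A$, so I cannot simply copy components. Instead I use that $\gamma\in Y^{*\names}$ has infinitely many components. Only finitely many names occur in $\supp\gamma(a)$ for $a\in B$, but every component has its own support, and these supports are pairwise disjoint. So infinitely many indices carry components that are fresh for any given finite set. I would therefore pick $\pi$ so that $\pi(B\setminus A)$ lands on indices whose components, in both $\gamma$ and $\gamma'$, are disjoint from everything relevant. Then I chain through a $\delta$ that agrees with $\gamma$ on $A\cup\pi(B\setminus A)$ and with $\gamma'$ on $A\cup\pi'(B\setminus A)$; by disjointness of the two index sets, this mixed family is still fresh. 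Once this holds, $f$ descends along $p_A$, giving the factorization, which is the ($\Rightarrow$) direction.

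For part (2), define $\Phi_A\colon I_*(Y\dwand Z)(A)\to\Nom(Y^{*A},Z)$ by sending $f=g\cdot p_A$ to $g$. This is well defined because $p_A$ is surjective: any $\gamma\in Y^{*A}$ extends to $Y^{*\names}$ by choosing fresh components, since $Y^{*\names}$ is built from nominal sets with infinitely many names available. It is equivariant because $f$ and $p_A$ are, and $p_A$ is a split epi in the appropriate sense. The inverse sends $g$ to $g\cdot p_A$, which is finitely reducible and has support $\subseteq A$ by part (1). Naturality in injections $j\colon A\mono B$ amounts to checking that the presheaf action of $\Nom(Y^{*(-)},Z)$, namely precomposition with $Y^{*j}$, matches the inclusion $I_*(-)(A)\subseteq I_*(-)(B)$ up to the permutation $\pi_j$. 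This follows from the formula $(\pi\cdot f)(\gamma)=f(\gamma\pi)$.
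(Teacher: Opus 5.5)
Your proposal is correct and follows the paper's proof. Well-definedness and the easy direction of the support characterization come from $\pi\cdot f$ factoring through $p_{\pi A}$. The converse uses permutations fixing $A$ to move the dependence on $B\setminus A$ to indices whose components are fresh, and then interpolates between $\gamma$ and $\gamma'$. The paper does this for a single index with a transposition $\tr{b}{c}$, while you handle all of $B\setminus A$ at once. Part (2) is the bijection $g\cdot p_A\leftrightarrow g$, using extensions $\hat\gamma$ of finite fresh families, and you also check naturality, which the paper skips.

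One caveat, which the paper shares: surjectivity of $p_A$ (extendability of finite fresh families) needs $Y\neq\emptyset$. For $Y=\emptyset$ we have $Y^{*\names}=\emptyset$, so $Y\dwand Z$ is a singleton, while $\Nom(Y^{*\emptyset},Z)=\Nom(1,Z)$ is the set of equivariant elements of $Z$. Part (2) therefore genuinely fails at $A=\emptyset$ unless $Z$ has exactly one equivariant element.
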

From \iref{prop:dwand}{iso} can now conclude:
\begin{thm}\label{thm:nom-sub-closed} The monoidal category $(\Nom,\sub,\names)$ is right closed with internal hom given by $\dwand$, that is:
  \(\Nom(X \sub Y, Z) \cong \Nom(X, Y \dwand Z)\).
\end{thm}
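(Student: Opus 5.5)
The plan is to derive the statement from \Cref{thm:gen-mon-clo}, applied in the setting already used for \Cref{thm:nom-sub}. That setting is: $\V=(\I,\times,1)^\op$, $\A=\Nom$, $J=\psi$, and the action $A\tl X = X^{*A}$. The monoidal structure is supplied by \Cref{thm:nom-sub}, so only the hypothesis \eqref{eq:asm-ra} remains to be checked. We must show that for all nominal sets $Y,Z$ the presheaf $\Nom(Y^{*(-)},Z)\in\PSh\I$ is a fixed point of $\nu_\psi\Lan_\y\psi$.

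By \Cref{lem:nom-nerve} we have $\nu_\psi\cong\unsh$ and $\Lan_\y\psi=\sh$, so the monad in question is $\unsh\sh$. By \Cref{thm:nom-sh}, $\unsh$ is fully faithful, hence the counit $\sh\unsh\cong\Id$ is invertible. It follows that every presheaf of the form $\unsh W$ is a fixed point: $\unsh\sh\unsh W\cong \unsh W$. \iref{prop:dwand}{iso} supplies exactly such a $W$, namely
\[
\Nom(Y^{*(-)},Z)\cong \unsh(Y\dwand Z).
\]
Thus \eqref{eq:asm-ra} holds, and \Cref{thm:gen-mon-clo} gives the adjunction $\Nom(X\sub Y,Z)\cong\Nom(X,\Lan_\y\psi\,\Nom(Y^{*(-)},Z))$.

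It remains to identify this internal hom with $Y\dwand Z$. We compute
\[
\Lan_\y\psi\,\Nom(Y^{*(-)},Z)\;\cong\;\sh\unsh(Y\dwand Z)\;\cong\;Y\dwand Z,
\]
using the counit isomorphism once more. Naturality in $X$ and $Z$ is inherited from \Cref{thm:gen-mon-clo} and from the naturality of the isomorphism in \iref{prop:dwand}{iso}, which we take to be natural in $Y$ and $Z$.

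The main obstacle lies in \iref{prop:dwand}{iso}, which we are allowed to assume. Its content is that a natural family $Y^{*A}\to Z$, indexed by injections of finite sets, glues to a finitely reducible equivariant map $Y^{*\names}\to Z$ supported by $A$. Conversely, a map supported by $A$ factors through $p_A$ by part (1) of \Cref{prop:dwand}. If we were to unfold this ourselves, the delicate point would be the equivariance needed to extend along arbitrary injections $A\mono B$ via permutations. Given the earlier results, however, the theorem reduces to the formal argument above.
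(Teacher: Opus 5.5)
Your proposal is correct and follows the paper's own argument. Like the paper, you apply \Cref{thm:gen-mon-clo} in the setting of \Cref{thm:nom-sub}, use \Cref{lem:nom-nerve} and \Cref{thm:nom-sh} to reduce \eqref{eq:asm-ra} to $\Nom(Y^{*(-)},Z)$ lying in the essential image of $\unsh$ (which is exactly \iref{prop:dwand}{iso}), and read off the internal hom as $\sh\unsh(Y\dwand Z)\cong Y\dwand Z$; your closing gloss on \iref{prop:dwand}{iso} is slightly off, since elements of $\Nom(Y^{*A},Z)$ are single equivariant maps corresponding to elements of $Y\dwand Z$ supported by $A$ rather than natural families over injections, but this does not affect the argument.
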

  The isomorphisms send equivariant maps \(f \colon X \sub Y \rightarrow Z\) and
  \(g \colon X \rightarrow (Y \dwand Z)\) to the maps $\curry_{\sub}(f)$ and $\uncurry_{\sub}(g)$, respectively, defined by
  \[\curry_{\sub}(f)(x)(\gamma) = f(x[\gamma|_{\supp x}]) \qquad \text{ and }\qquad  \uncurry_{\sub}(g)(x[\gamma]) = g(x)(\hat{\gamma}),\]
  where \(\hat{\gamma} \in Y^{* \names}\) is any function with \(\hat{\gamma}|_{\supp x} = \gamma\).

\begin{rem}\label{rem:nom-sub-cap}
  There also exists a ``captureful'' substitution functor on \(\Nom\) given by
  \[X \,\hat{\sub}\, Y = \{(x \in X, \,\gamma \in Y^{\supp x})\}/ \sim \quad \text{ where } \quad (\pi \cdot x, y) \sim (x, \gamma \pi),\]
  and \((-) \,\hat{\sub}\, Y\) has a right adjoint defined analogous to $\dwand$ in \eqref{eq:dwand}, with $Y^\names$ in lieu of $Y^{*\names}$.
  However, this structure is not monoidal as there is no right unit.
  For example, $\names^{*2} \,\hat{\sub}\, \names \cong \names^2$.
\end{rem}

\subsubsection*{Uniform Substitution}
We have seen in \Cref{sec:psh-substitution} that the product monoidal structure $\times$ on $\I$ induces, via Day convolution, a monoidal structure $\otimes$
on presheaves that models \emph{uniform} substitution (\Cref{def:uniform-sub-presheaves}) and naturally relates to the substitution tensor (\Cref{prop:unif-sub-embed}).
An analogous uniform substitution structure can also be introduced to the world of nominal sets:

\begin{defn}[Uniform substitution]
  The \emph{uniform substitution tensor} of nominal sets \(X\) and \(Y\) is the nominal set given by
  \[X \otimes Y = \{x[\gamma] \in X \sub Y \mid \gamma[\supp x] \text{ is contained in a single orbit}\}.\]
\end{defn}
While this definition of $\otimes$ is useful to see the connection to the substitution tensor,
 it hides the symmetry of $\otimes$. To expose it, we relate the nominal uniform substitution tensor $\otimes$ to its counterpart on the presheaf category $\PSh \I$.
 
For \(x[\gamma] \in X \otimes Y\), we can choose a fresh $y$ in the orbit containing all elements of $\gamma[\supp x]$. By abuse of notation we write $x[y]$ for $x[\gamma]$. Given $x\in X$ and $y\in Y$ with support sizes $|\supp x| = n, |\supp y| = m$,
the support of \(x[y] \in X \otimes Y\) may be pictured as a \(n\times m\)-matrix
\[
  \begin{pmatrix}
    j_{a_1}(b_1) & \cdots & j_{a_1}(b_m) \\
    \vdots & \ddots & \vdots \\
    j_{a_n}(b_1) & \cdots & j_{a_n}(b_m) 
  \end{pmatrix}
\]
with pairwise fresh names $j_{a_i}(b_j)$ as entries.
The element $x[y]$ then corresponds to this matrix modulo the internal symmetries of $x$ and $y$ allowing permutations of the rows and columns. 
We have essentially described the isomorphism of the following proposition:

\begin{prop}\label{prop:nom-univ-sub-from-presh}
  Uniform substitution is induced by Day convolution on presheaves:
  \[X \otimes Y \cong \sh(\unsh X \otimes \unsh Y) \cong \int^{A, B \in \I} \names^{*(A \times B)} \cdot \unsh X A \cdot \unsh Y B.\]
  In particular, $\otimes$ is symmetric.
\end{prop}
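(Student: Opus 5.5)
The plan is to establish the second isomorphism by a direct computation in presheaves, then identify the result with the elementary description of $X \otimes Y$. Symmetry will follow from the symmetric form of the coend.

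\textbf{Step 1: the coend formula.} Since $\sh = \Lan_\y \psi$ by \Cref{lem:nom-nerve}, it preserves colimits, and in particular coends. Writing the Day convolution with respect to $\times$ as
\[
\unsh X \otimes \unsh Y = \int^{A,B} \unsh X A \cdot \unsh Y B \cdot \y(A\times B)
\]
and using $\sh(\y C) \cong \psi C = \names^{*C}$ (which holds because $\Lan_\y$ extends $\psi$ along the fully faithful $\y$), we obtain
\[
\sh(\unsh X \otimes \unsh Y) \cong \int^{A,B\in\I} \names^{*(A\times B)}\cdot \unsh X A\cdot \unsh Y B .
\]
This formula is visibly symmetric in $X$ and $Y$, via the swap $A\times B\cong B\times A$, which lies in $\I$. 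Hence $\otimes$ is symmetric once the first isomorphism is established.

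\textbf{Step 2: a comparison map into $X \otimes Y$.} Following the matrix intuition, send a representative $(j, x, y)$, with $x \in \unsh X A$, $y \in \unsh Y B$ and $j \in \names^{*(A\times B)}$ an injection $A\times B \mono \names$, to $x[\gamma]$. Here $\gamma(a) = j_a \cdot y$ for $a\in\supp x$, and $j_a = j(a,-)\colon B \mono \names$ is the $a$-th row. The following checks are needed:
\begin{enumerate}
\item The values $\gamma(a)$ lie in the single orbit of $y$.
\item For $a\ne a'$ the values are fresh for each other, since the rows of $j$ have disjoint images.
\item The map respects the coend relations. Precomposing $j$ with $f\times g$ for $f\colon A\mono A'$, $g\colon B\mono B'$ corresponds to acting with $f$ on $x$ and $g$ on $y$, so this is a direct unfolding of the relation $(\pi\cdot x,\gamma)\sim(x,\gamma\pi)$ from \Cref{thm:nom-sub-pres}.
\item The map is equivariant.
\end{enumerate}

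\textbf{Step 3: bijectivity.} Surjectivity is straightforward. Given $x[\gamma]\in X\otimes Y$, pick $y$ in the common orbit, set $B=\supp y$ and $A=\supp x$, and choose permutations $\pi_a$ with $\gamma(a)=\pi_a\cdot y$. Then $j(a,b)=\pi_a(b)$ is injective on $A\times B$ because the $\gamma(a)$ have pairwise disjoint supports.

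Injectivity is the main obstacle. If two triples $(j,x,y)$ and $(j',x',y')$ give the same $x[\gamma]$, one must produce a zig-zag of coend relations between them. The plan is to reduce both triples to least supports, $A=\supp x$ and $B=\supp y$, using the coend relation along the inclusions. After that the two triples differ by a choice of orbit representative $y$ versus $y'$, and by the ambiguity in each $\pi_a$. That ambiguity is exactly an element of the stabilizer of $y$ restricted to $\supp y$, which can be absorbed by a bijection $B\to B$ acting on $y$. Row permutations coming from the relation on $x$ are absorbed similarly by bijections $A\to A$. So the proof reduces to showing that the residual freedom in $\gamma$ is generated by the symmetries of $x$ (bijections of $A$) and of $y$ (bijections of $B$) acting on $j$. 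That statement is precisely the ``matrix modulo row and column symmetries'' description.
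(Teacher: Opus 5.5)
\textbf{There is a genuine gap at the injectivity step, and the statement cannot be rescued as stated.}

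Your Step~1 is correct and is exactly how the paper obtains the second isomorphism. Steps~2--3 follow the paper's route too: the paper writes down your map together with a candidate inverse (choose $y$ in the common orbit and bijections $j_a$ with $\gamma(a)=j_a\cdot y$). It then asserts that the result is independent of the choice of the $j_a$. That assertion is precisely your injectivity step, and it fails.

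Fix $A=\supp x$ and $B=\supp y$. The element $x[\gamma]$ determines each row $j_a$ only up to $j_a\circ\sigma_a$, where $\sigma_a\colon B\to B$ is any bijection with $\sigma_a\cdot y=y$. These $\sigma_a$ can be chosen \emph{independently for each row} $a$. The coend relations, however, only identify $j$ with $j\circ(f\times g)$. A single $g\colon B\to B$ acts on all rows simultaneously, and $f$ (a symmetry of $x$) merely permutes rows. So your claim that the stabilizer ambiguity ``can be absorbed by a bijection $B\to B$ acting on $y$'' holds only when all the $\sigma_a$ coincide. In general your comparison map is surjective but not injective.

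This cannot be repaired, because the first isomorphism is false for $\otimes$ as defined on $\Nom$. Take $X=\names^{*2}$ and $Y$ the nominal set of two-element subsets of $\names$.
\begin{itemize}
\item \emph{The nominal side.} Since $Y$ is a single orbit, $X\otimes Y=X\sub Y$. Moreover $x[\gamma]$ is determined by the pair $(\gamma(a_1),\gamma(a_2))$, so $X\otimes Y\cong Y^{*2}$: ordered pairs of disjoint two-element sets.
\item \emph{The coend side.} We have $\unsh X\cong\y 2$ and $\unsh Y\cong \y 2/S_2$ (quotient by the swap). Day convolution is cocontinuous in each variable with $\y A\otimes\y B\cong\y(A\times B)$, and $\sh$ is cocontinuous with $\sh\y\cong\psi$. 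Hence $\sh(\unsh X\otimes\unsh Y)\cong\names^{*(2\times 2)}/S_2$: injective $2\times 2$ matrices of names modulo swapping the two columns \emph{simultaneously}.
\item \emph{Counting.} Over a fixed four-element support, $X\otimes Y$ has $6$ elements while the coend has $24/2=12$. So no isomorphism of nominal sets exists at all.
\item \emph{Your map concretely.} With $x=(a_1,a_2)$ and $y=\{b_1,b_2\}$, your map sends both $\begin{pmatrix}c_1&c_2\\c_3&c_4\end{pmatrix}$ and $\begin{pmatrix}c_1&c_2\\c_4&c_3\end{pmatrix}$ to $x[a_1\mapsto\{c_1,c_2\},\,a_2\mapsto\{c_3,c_4\}]$.
\item \emph{Symmetry also fails.} $Y\otimes X$ (unordered pairs of disjoint elements of $\names^{*2}$) has $12$ elements over such a support. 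So the nominal $\otimes$ is not even symmetric.
\end{itemize}

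What your argument does establish:
\begin{itemize}
\item the coend formula;
\item a surjection from the coend onto $X\otimes Y$;
\item bijectivity whenever no element of $Y$ is fixed by a non-trivial permutation of its support (then every $\sigma_a=\id$), e.g.\ for $Y=\names^{*n}$.
\end{itemize}
For the statement to hold in general, the nominal $\otimes$ would have to be \emph{defined} as the coend (matrices modulo simultaneous row and column symmetries), rather than as a subset of $X\sub Y$.
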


As a consequence, we get an analogue of \Cref{prop:day-fresh}:
\begin{prop}\label{prop:nom-unif-sub-mon}
  Both adjoints $\sh \colon (\PSh \I, \otimes) \rightarrow (\Nom, \otimes)$ and $\unsh \colon (\Nom, \otimes) \rightarrow (\PSh \I, \otimes)$ are strong monoidal.
\end{prop}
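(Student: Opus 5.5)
We prove the two claims separately, starting with the left adjoint $\sh$. The conceptual route is \Cref{thm:day-conv-pres}: the functor $\psi\colon \I^\op\to\Nom$ of \eqref{eq:y-nom} should be strong monoidal from $(\I,\times,1)^\op$ to $(\Nom,\otimes)$. Its left Kan extension along $\y$ is $\sh$ by \Cref{lem:nom-nerve}. Moreover $\Lan_\y\psi$ is colimit preserving, being a left adjoint.

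To invoke \Cref{thm:day-conv-pres} we need the following three facts.
\begin{enumerate}
\item \(\psi A\otimes\psi B\cong\psi(A\times B)\), naturally in $A,B$, and \(\psi 1=\names\) is the unit for $\otimes$. Both follow from the coend formula of \Cref{prop:nom-univ-sub-from-presh}: take $X=\psi A$ and $Y=\psi B$, so that $\unsh\psi A=\y A$ and $\unsh\psi B=\y B$. Co-Yoneda then collapses the coend to $\names^{*(A\times B)}=\psi(A\times B)$.
\item $\otimes$ on $\Nom$ is cocomplete and preserves colimits in each variable. This again comes from the coend description $X\otimes Y\cong\sh(\unsh X\otimes\unsh Y)$. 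The only nontrivial point is that $\unsh$ does not preserve colimits. We bypass this by writing the coend as $\int^{A,B}\names^{*(A\times B)}\cdot \unsh XA\cdot\unsh YB$ and arguing elementwise: a colimit of nominal sets is computed as in $\Set$, and $\unsh X A$ is the set of elements supported by $A$. Supports of elements of a colimit are witnessed by supports of representatives, so the coend commutes with colimits in $X$.
\item The Day-convolution structure map is compatible with the one produced by \Cref{thm:day-conv-pres}.
\end{enumerate}
Given these, \Cref{thm:day-conv-pres} makes $\sh=\Lan_\y\psi$ strong monoidal. Alternatively, and more directly, the first isomorphism of \Cref{prop:nom-univ-sub-from-presh} together with the counit iso $\sh\unsh\cong\Id$ gives
\[\sh(F\otimes G)\;\cong\;\sh(\unsh\sh F\otimes\unsh\sh G)\;\cong\;\sh F\otimes \sh G.\]
The first step needs $\sh(F\otimes G)\cong\sh(\unsh\sh F\otimes \unsh\sh G)$. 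This holds because $\sh$ inverts the unit $F\to\unsh\sh F$ and $\otimes$ preserves the relevant maps, for instance since $\sh$ is the sheafification for the atomic topology and Day convolution preserves local isomorphisms.

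For $\unsh$ we must show \(\unsh X\otimes\unsh Y\) is already intersection-preserving; by \Cref{thm:nom-sh} it then equals $\unsh\sh(\unsh X\otimes\unsh Y)\cong\unsh(X\otimes Y)$. Concretely, an element of $(\unsh X\otimes \unsh Y)C$ is a class $[x,y,j\colon A\times B\mono C]$. Its minimal stage is the image $j(\supp x\times\supp y)$. One checks that if such a class lives over both $C_1$ and $C_2$, then it lives over $C_1\cap C_2$, using that $x,y$ have least supports. This mirrors the proof of \Cref{prop:day-fresh}, with $+$ replaced by $\times$. Finally we verify that the unit and associativity coherences hold; they are inherited from those of $\sh$ because $\unsh$ is fully faithful.

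The main obstacle is the intersection-preservation step for $\unsh X\otimes\unsh Y$. The coend quotient identifies triples along injections $A\times B\to A'\times B'$ of product form only. We must show that two representatives over different stages admit a common refinement over the intersection. I would first try to normalise each class to the canonical representative with $A=\supp x$, $B=\supp y$, using injectivity. Then I would show this representative is unique up to $\mathrm{Aut}(A)\times\mathrm{Aut}(B)$, which makes the intersection claim immediate.
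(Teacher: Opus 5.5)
Your decomposition is the paper's: \Cref{thm:day-conv-pres} applied to the strong monoidal $\psi$ for $\sh$, and $\unsh$ applied to $X\otimes Y\cong\sh(\unsh X\otimes\unsh Y)$ for $\unsh$. Your $\unsh$ half is sound and more careful than the paper. The paper uses without comment that $\unsh X\otimes\unsh Y$ is already intersection-preserving, and your normal form over $\supp x\times\supp y$, unique up to $\mathrm{Aut}$, proves exactly that.

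The gap is in the one nontrivial hypothesis of \Cref{thm:day-conv-pres}: that $\otimes$ on $\Nom$ preserves colimits in each variable. The paper supplies this by writing down an elementwise inverse of the comparison map. Your justification relies on ``supports of elements of a colimit are witnessed by supports of representatives'', and this is false. The coequaliser in $\Nom$ of the two projections $\names^{*2}\rightrightarrows\names$ is $1$. Its element has empty support, while every representative $a\in\names$ has support $\{a\}$. Accordingly, $\colim_i\unsh X_i\to\unsh(\colim_i X_i)$ is $\emptyset\to 1$ at stage $\emptyset$. So your ``bypass'' only restates that $\unsh$ is not cocontinuous. Your alternative route needs $\sh$ to invert $\eta_F\otimes G$, where $\eta_F\colon F\to\unsh\sh F$ is the unit. ``Day convolution preserves local isomorphisms'' merely asserts this. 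That is the real content, and it is not automatic for a localisation.

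It can be supplied by Day's reflection theorem, because $\PShpb\I$ is an exponential ideal. If $H$ preserves intersections, then so does $(G\wand H)(C)=\Nat(G,H(C\times -))$. This holds because $(C_1\cap C_2)\times B=(C_1\times B)\cap(C_2\times B)$, limits in $\PSh\I$ are pointwise, and $\Nat(G,-)$ preserves limits. Hence, for intersection-preserving $H$, precomposition with $\eta_F\otimes G$ is the composite of bijections $\Nat(\unsh\sh F\otimes G,H)\cong\Nat(\unsh\sh F,G\wand H)\cong\Nat(F,G\wand H)\cong\Nat(F\otimes G,H)$. So $\sh(\eta_F\otimes G)$ is invertible, and by symmetry of $\otimes$ the same holds in the second variable. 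This makes your direct chain $\sh(F\otimes G)\cong\sh(\unsh\sh F\otimes\unsh\sh G)\cong\sh F\otimes\sh G$ go through. It also yields the cocontinuity the paper's route needs: $(\colim_i X_i)\otimes Y\cong\sh\bigl(\unsh\sh(\colim_i\unsh X_i)\otimes\unsh Y\bigr)\cong\sh\bigl((\colim_i\unsh X_i)\otimes\unsh Y\bigr)\cong\colim_i(X_i\otimes Y)$.
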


The internal hom of $\otimes$ restricts to the internal hom of $\sub$:
\begin{defn}
  Let \(Y^{*_{\orb} \names} \subseteq Y^{*\names}\) be the \(\Perm \names\)-set of all maps $f\in Y^{*\names}$ whose image $f[\names]$ is contained in a single orbit.
  \[Y \uswand Z = \{f \in \Perm \names\text{-set}(Y^{*_{\orb} \names},Z) \mid f \text{ is finitely reducible}\}\]
  with double precomposition as action: $(\pi \cdot f)(\gamma) = f(\gamma \cdot \pi)$.
\end{defn}

\begin{prop}
  $(\Nom,\otimes,\names, \uswand)$ is a symmetric monoidal closed.
\end{prop}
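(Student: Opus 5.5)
The plan is to transport the symmetric monoidal closed structure of Day convolution $\otimes$ on $\PSh \I$ along the reflection $\sh \dashv \unsh$, using \Cref{prop:nom-univ-sub-from-presh} and \Cref{prop:nom-unif-sub-mon}.

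\emph{Monoidal structure.} By \Cref{prop:nom-univ-sub-from-presh} we have $X \otimes Y \cong \sh(\unsh X \otimes \unsh Y)$. Since $\unsh$ is fully faithful (\Cref{thm:nom-sh}), $\Nom$ is a reflective subcategory of $\PSh \I$ with reflector $\sh$. By \Cref{prop:nom-unif-sub-mon}, $\sh$ is strong monoidal for Day convolution $\otimes$ on $\PSh \I$; this is the classical condition (Day's reflection theorem) under which the monoidal structure descends to the reflective subcategory. Concretely, I would define the associator, unitors and symmetry of $(\Nom,\otimes)$ as the images under $\sh$ of those of $(\PSh\I,\otimes,\y 1)$, composed with the counit isomorphisms $\sh\unsh \cong \Id$. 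The monoidal and symmetry axioms are then inherited from $\PSh \I$. The unit is $\sh(\y 1) \cong \psi 1 = \names$ by \Cref{lem:nom-nerve}.

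\emph{Closedness.} The internal hom of Day convolution is $(P \wand Q)(C) = \Nat(P, Q(C\times -))$. For $Y,Z\in\Nom$ I would show that $\unsh Y \wand \unsh Z$ is intersection preserving, i.e.\ lies in the image of $\unsh$. For any monoidal reflective subcategory with strong monoidal reflector, the reflective subcategory is closed under $P\wand \unsh Z$ whenever $\unsh Z$ is local. Granting that, we obtain
\[
\Nom(X\otimes Y,Z)\cong \PSh\I(\unsh X\otimes\unsh Y,\unsh Z)\cong \PSh\I(\unsh X,\unsh Y\wand \unsh Z)\cong \Nom(X,W),
\]
where $\unsh W\cong \unsh Y\wand\unsh Z$. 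The first isomorphism uses the adjunction together with $\sh\unsh\cong\Id$.

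\emph{Identifying the hom.} It remains to identify $W$ with $Y\uswand Z$. I would mimic \Cref{prop:dwand}. An element of $(\unsh Y\wand\unsh Z)(C)$ is a natural family $\unsh Y B\to \unsh Z(C\times B)$. Choosing a single $y$ and renaming its support $B$ along each row $c\in C$ into pairwise fresh copies produces exactly a finitely reducible equivariant map on $Y^{*_{\orb}\names}$ restricted to $C$: the value depends only on the components indexed by $C$, and these lie in one orbit. This is the main obstacle. It requires checking that naturality in $B$ (injections) corresponds precisely to well-definedness on orbit-constant fresh families, and that supports match ($\supp f\subseteq C$ iff $f$ factors through $p_C$). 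I would carry this out by the same argument as \Cref{prop:dwand}, replacing $Y^{*A}$ with its single-orbit part.
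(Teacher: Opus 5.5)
\textbf{Verdict: there is a genuine gap, and it cannot be closed.} The step you flag as the main obstacle fails, and the isomorphism you import from \Cref{prop:nom-univ-sub-from-presh} also fails.

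\textbf{Identifying the hom.} Naturality of $\alpha\colon\unsh Y\to\unsh Z(C\times -)$ in $B$ only gives invariance under re-indexing all rows \emph{simultaneously} by one symmetry of $y$. Precisely, $(\unsh Y\wand\unsh Z)(C)\cong\Nom(Q_C,Z)$, where $Q_C=\sh(\y C\otimes\unsh Y)=\int^{B}\names^{*(C\times B)}\cdot\unsh Y B$. Its elements are classes $[f\colon C\times B\mono\names,\,y]$, and two classes with $B=\supp y$ agree iff they differ by a single symmetry of $y$ applied to every row. So the map $Q_C\to Y^{*_{\orb}C}$, $[f,y]\mapsto(f(c,-)\cdot y)_{c\in C}$, is in general not injective.

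Counterexample: take $Y$ the nominal set of two-element subsets of $\names$, $C=\{1,2\}$, $y=\{a,b\}$. Let $f,f'$ have rows $(p,q),(r,s)$ resp.\ $(p,q),(s,r)$. Both give $(\{p,q\},\{r,s\})$, but $[f,y]\neq[f',y]$, since neither $\id$ nor $(a\,b)$ relates them.

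Worse, there is no equivariant map $g\colon Y^{*_{\orb}A}\to Q_2$ for any finite $A$. It would send $u$ to a point $g(u)$ with the following properties:
\begin{itemize}
\item its four-element support lies in $\supp u$;
\item that support is a union of pairs of $u$, being stable under the transposition of each pair (which fixes $u$);
\item $g(u)$ is fixed by the transpositions of those pairs.
\end{itemize}
But points of $Q_2$, like $[f,y]$, are fixed on their support only by $\id$ and a fixed-point-free involution such as $(p\,q)(r\,s)$.

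Every element of $Y\uswand Q_2$ factors as $g\circ p_A$ with such a $g$ (automatically equivariant, as $p_A$ is an equivariant surjection). Hence $Y\uswand Q_2=\emptyset$, so $\Nom(\names^{*2},Y\uswand Q_2)=\emptyset$. On the other hand, $\Nom(\names^{*2}\otimes Y,Q_2)=\Nom(Q_2,Q_2)\ni\id$ for the transported tensor. Thus $\uswand$ is not its right adjoint.

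\textbf{The imported tensor identification.} The isomorphism $X\otimes Y\cong\sh(\unsh X\otimes\unsh Y)$ from \Cref{prop:nom-univ-sub-from-presh} fails for the elementwise definition of $\otimes$, for two reasons:
\begin{itemize}
\item the same correlation issue: elementwise $\names^{*2}\otimes Y\cong Y^{*_{\orb}2}$, not $Q_2$;
\item empty supports: $\sh(\y\emptyset\otimes\unsh Y)$ is the discrete set of orbits of $Y$, whereas the elementwise $1\otimes Y$ is a point.
\end{itemize}
In fact the elementwise $\otimes$ is not symmetric: $1\otimes 2\cong 1$ but $2\otimes 1\cong 2$ for discrete $1,2$. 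Nor is it left adjoint to $\uswand$: $\Nom(1\otimes 2,2)$ has two elements, while $\Nom(1,2\uswand 2)$ has four. So the proposition as stated cannot be proved by any route.

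\textbf{Relation to the paper.} The paper gives no separate proof. The route its surrounding text indicates is exactly yours: \Cref{prop:nom-univ-sub-from-presh}, \Cref{prop:nom-unif-sub-mon}, and an analogue of \Cref{prop:dwand}. It breaks at the same two places.

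\textbf{What your argument does prove.} The tensor $\sh(\unsh X\otimes\unsh Y)$ makes $\Nom$ symmetric monoidal closed, with unit $\names$ and internal hom the nominal set corresponding to $C\mapsto\Nom(Q_C,Z)$. For this you need not cite \Cref{prop:nom-unif-sub-mon}, whose proof presupposes that $(\Nom,\otimes)$ is already monoidal. Day's reflection condition can be checked directly: $\unsh Z$ preserves intersections and $(C_1\cap C_2)\times B=(C_1\times B)\cap(C_2\times B)$, so $C\mapsto\Nat(P,\unsh Z(C\times -))$ preserves intersections for every $P$.
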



\subsection{Substitution Tensor in Renaming Sets}

Renaming sets were introduced~\cite{gh08} as an alternative to nominal sets which allows for renamings that are not necessarily injective.
Many of the basic definitions are similar to nominal sets:
A \emph{renaming} of \(\names\) is a map \(\rho \colon \names \rightarrow \names\) with \( \rho a = a\) for all but finitely many \(a \in \names\). The monoid of all renamings is denoted \(\rens \names\).
A \emph{(nominal) renaming set} consists of a set \(X\) together with a monoid action \(\rens \names \times X \rightarrow X\) such that every \(x \in X\) has a finite support, that is,
there exists a finite set \(A\) such that whenever $\rho \in \rens \names$ satisfies $\rho|_A = \id_A$ then $\rho \cdot x = x$.
This implies that $x$ has a \emph{least} finite support $\supp x$. The category of renaming sets and \(\rens \names\)-equivariant maps is denoted \(\Ren \).

There is an adjunction $\sh \dashv \unsh \colon \Ren \rightarrow \PSh \F$, defined analogous to \Cref{constr:nom-psh}, that co-restricts to an equivalence of categories:

\begin{notheorembrackets}
  \begin{thm}[{\cite[Sec.~5]{gh08}}]
    The category $\Ren$ is equivalent to the full subcategory \(\PShpb \F \subseteq \PSh \F\) of intersection-preserving presheaves.
  \end{thm}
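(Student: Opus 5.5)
The plan is to mirror the proof of \Cref{thm:nom-sh} (Pitts' Theorem~6.8), replacing permutations by renamings and $\I$ by $\F$. Define $\unsh\colon\Ren\to\PSh\F$ by $(\unsh X)A=\{x\in X\mid A \text{ supports } x\}$. On a map $f\colon A\to B$ of finite subsets of $\names$, $\unsh X$ acts by $f\cdot x=\rho_f\cdot x$, where $\rho_f\in\rens\names$ is any renaming agreeing with $f$ on $A$.

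The first step is to check that this action is well defined. If $\rho,\rho'$ agree on $A$ and $A$ supports $x$, we want $\rho\cdot x=\rho'\cdot x$. This is the one place where non-injectivity needs care, since we cannot simply write $\rho'=\rho\circ\tau$ with $\tau$ fixing $A$. Instead, let $Z$ be the finite set of atoms moved by $\rho$ or $\rho'$. Choose a renaming $\sigma$ that is the identity on $A$ and sends $Z\setminus A$ to fresh atoms outside $Z\cup A$. Then $\rho\circ\sigma=\rho'\circ\sigma$, because both agree on $A$ and both act as the identity on the fresh atoms and elsewhere. Since $\sigma$ fixes $A$, we have $\sigma\cdot x=x$, and hence $\rho\cdot x=\rho\sigma\cdot x=\rho'\sigma\cdot x=\rho'\cdot x$. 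We also check that $\rho_f\cdot x$ is supported by $B=f[A]\cup\cdots$. Functoriality then follows from the monoid action laws.

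The second step is to show that $\unsh X$ preserves intersections, i.e.\ pullbacks of monos in $\F$. Concretely, $(\unsh X)(A\cap B)=(\unsh X)A\cap(\unsh X)B$ inside $(\unsh X)(A\cup B)$, and this is exactly the fact that supports are closed under intersection. This fact is already stated for renaming sets, since least supports exist. For a general pullback of injections along arbitrary maps, we reduce to inclusions using the action of bijections.

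The third step is the inverse construction. Given $F\in\PShpb\F$, form $\sh F$ as in \Cref{constr:nom-psh}: the colimit over the inclusions of finite subsets of $\names$, with $\rho\cdot[A,x]=[\rho[A],\rho|_A\cdot x]$. We verify the following:
\begin{enumerate}
\item This is a $\rens\names$-action.
\item $A$ supports $[A,x]$, so $\sh F$ is a renaming set.
\item The map $x\mapsto [A,x]$ gives $F\cong \unsh\sh F$ whenever $F$ preserves intersections.
\end{enumerate}
For item 3, injectivity of the colimit injections uses that inclusions are sent to injections (monos split, since $A\ne\emptyset$ or is handled separately). Surjectivity onto the elements supported by $A$ uses intersection preservation: if $[C,y]$ is supported by $A$, a renaming argument places $y$ in the image of $F(A\cap C\subseteq C)$.

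Finally, $X\cong\sh\unsh X$ holds via $[A,x]\mapsto x$, and both constructions are functorial and mutually inverse on morphisms, because equivariant maps correspond to natural transformations.

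The main obstacle is the surjectivity part of item 3. There we must show that an element of $FC$ which is "supported by $A$" in the colimit actually comes from $F(A\cap C)$. This is where intersection preservation is essential, and where the non-injective renamings in $\F$ must be used to exhibit the fixed-point condition as an equalizer/pullback in $\F$. I would follow Gabbay--Hofmann's argument: use two renamings that agree exactly on $A\cap C$ and collapse the rest, and show that their equalizer in $\F$ is the intersection.
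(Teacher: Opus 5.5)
Your $\unsh$ and $\sh$ are the functors the paper intends. The paper gives no proof of its own: it cites \cite[Sec.~5]{gh08} and says the adjunction is defined as in \Cref{constr:nom-psh}. Your Step~1 argument (precomposing with a renaming $\sigma$ that fixes $A$ and sends $Z\setminus A$ to fresh atoms), the checks on $\sh F$, and $X\cong\sh\unsh X$ are fine.

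In Step~2, however, you must not claim preservation of pullbacks of injections along \emph{arbitrary} maps. That is preimage preservation, which by \Cref{prop:preim-relev-equiv} characterises relevance sets and fails in general. Take $X=\sh G$ for the free-group presheaf $G$ and $\rho=[a,b\mapsto c]$. The element $ab^{-1}\in(\unsh X)\{a,b\}$ is sent by $\rho$ to $1$, which lies in $(\unsh X)\emptyset$. Yet $ab^{-1}\notin(\unsh X)(\rho^{-1}[\emptyset])=(\unsh X)\emptyset$. Intersection preservation only concerns pairs of injections, and your reduction via bijections handles exactly those. (For $A=\emptyset$ in item~3, injectivity of $F(\emptyset\subseteq C)$ follows from preserving the intersection $\emptyset\cap\emptyset$.)

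The genuine gap is in the step you call the main obstacle. Objects of $\PShpb{\F}$ need not preserve equalizers of non-injective maps. So ``two renamings that agree exactly on $A\cap C$ and collapse the rest, whose equalizer is the intersection'' is not justified by intersection preservation. Take $g,h\colon\{a,b\}\to\{c,d\}$ constant at $c$ and at $d$ respectively. Their equalizer is $\emptyset$, and $G(g)(ab^{-1})=1=G(h)(ab^{-1})$, but $ab^{-1}$ does not come from $G(\emptyset)$.

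Knowing that $y$ is equalized by two renamings is too weak. You must use that $[C,y]$ is \emph{fixed} by a renaming, so that one leg is an inclusion and the limit becomes a genuine intersection:
\begin{enumerate}
\item Take $\pi\in\Perm\names$ fixing $A$ pointwise and moving $C\setminus A$ to atoms outside $A\cup C$.
\item Since $A$ supports $[C,y]$, you get $[\pi[C],\pi|_C\cdot y]=[C,y]$. So $y$ and $\pi|_C\cdot y$ have the same image in $F(D)$ for some $D\supseteq C\cup\pi[C]$.
\item Since $C\cap\pi[C]=A\cap C$, intersection preservation yields $z\in F(A\cap C)$ with $(A\cap C\subseteq C)\cdot z=y$.
\item Then $x=(A\cap C\subseteq A)\cdot z$ satisfies $[A,x]=[C,y]$.
\end{enumerate}
This is Pitts' argument for $\I$ verbatim; compare the proof of \Cref{thm:int-pres-sheaves}. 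Non-injective renamings are not needed here. A renaming collapsing $C\setminus A$ onto one fresh atom would also work, provided you intersect $C$ with its image rather than take an equalizer.
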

\end{notheorembrackets}

The substitution tensor is even simpler for renaming sets than for nominal sets:
\begin{rem}
  Day convolution $\oplus$ in \(\PSh \F\) is just cartesian product since $+$ is a coproduct in $\F$.
  Thus, since the functor $\sh$ preserves finite limits, we have for all $X, Y \in \Ren$ that
  \[\sh(\unsh X \oplus \unsh Y) \cong \sh (\unsh X \times \unsh Y) \cong \sh \unsh X \times \sh \unsh Y \cong X \times Y\]
  is also the cartesian product.
  This answers the question~\cite{m20} why there is no fresh product on renaming sets.
  Moreover, it yields a conceptual argument for the result \cite[Thm~3.7]{m20} that the free-renaming-set construction \((\Nom, *) \rightarrow (\Ren, \times)\) is strong.
\end{rem}
We therefore choose the left action
\(\tl \colon \F^\op \times \Ren \rightarrow \Ren, (A \tl X) = X^A\).
 Extending it under the dense embedding
\(\psi \colon \F^\op \rightarrow \Ren, A \mapsto \names^A\),
satisfying $\nu_\psi \cong \unsh$,
yields the substitution tensor
for renaming sets:
Its definition is slightly more complicated than in $\Nom$ since renamings may not preserve supports, see~\Cref{sec:relev}.

\begin{defn}[Substitution tensor] The \emph{substitution tensor} on $\Ren$ is given by the \(\psi\)-extension of $\tl$:
  \begin{equation}X \sub Y := X\tl_\psi Y = \{(x \in X, \gamma \in Y^{\supp x})\}/ \sim,  \quad \text{where $(\sigma \cdot x, \gamma|_{\supp \sigma \cdot x}) \sim (x, \gamma \sigma)$} \end{equation}
  for $\sigma \in \rens \names$ and $\gamma \in Y^{\rho [\supp x]}$ generates the equivalence.
  We write \(x[\gamma]\) for the equivalence class of \((x, \gamma)\).
  Its action and support are given by
  \[\sigma \cdot x[\gamma] = x[\sigma \cdot \gamma] \quad \text{ and } \quad  \supp x[\gamma] = \bigcup_{a \in \supp x}\supp \gamma_{a}.\]
  The internal substitution hom for renaming sets is given by
  \begin{equation}Y \dwand Z = \{f \in (\rens \names\text{-set})(Y^{\names}, Z) \mid f \text{ is finitely reducible}\} \end{equation}
\end{defn}
An argument analogous to that for nominal sets shows that:
\begin{thm}
  $(\Ren,\sub,\names,\dwand)$ is a right closed monoidal category.
\end{thm}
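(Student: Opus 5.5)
We prove that $(\Ren,\sub,\names,\dwand)$ is right closed monoidal by following the nominal case. We apply \Cref{thm:gen-mon} and \Cref{thm:gen-mon-clo} with the following data:
\begin{itemize}
\item $\V = (\F,\times,1)^\op$ acting on $\A=\Ren$ via $A\tl X = X^A$;
\item $J_I=\names$, so that $J(A) = A\tl\names = \names^A = \psi(A)$.
\end{itemize}
That $\tl$ is an action is immediate, since $X^1\cong X$ and $(X^B)^A\cong X^{A\times B}$ coherently. $\Ren$ is cocomplete because it is equivalent to the reflective subcategory $\PShpb\F$ of $\PSh\F$.

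\emph{Well-behavedness of $\psi$.} Transport along the equivalence $\Ren\simeq\PShpb\F$. Since $\unsh\psi A=\unsh(\names^A)\cong\F(A,-)$, the functor $\psi$ becomes the corestriction $\y'$ of the Yoneda embedding, and $\PShpb\F$ contains all representables. We then invoke \Cref{lem:wb-psh-subcat}. It reduces well-behavedness of $\psi$ for $(-)\tl A$ to showing that $\Lan_\y(\unsh((-)\tl A))$ maps intersection-preserving presheaves to intersection-preserving ones. By \Cref{cor:conv-rep-sub}, together with $\oplus=\times$ on $\PSh\F$, the functor $\unsh((-)\tl A)$ is $B\mapsto (\unsh A)^{B}$, i.e. $(-)\tl\unsh A$ in $\PSh\F$. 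So $\Lan_\y$ of it is $X\mapsto X\sub\unsh A$ in presheaves. It therefore suffices to show that the presheaf substitution tensor $X\sub Y$ is intersection-preserving whenever $X$ and $Y$ are.

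I would prove this elementwise, using the explicit description $x[\gamma]$ with $\gamma\in Y^{\supp x}$. First, every class has a representative where $x$ is taken at its least support. Then $\supp x[\gamma]=\bigcup_a\supp\gamma_a$, and this least support is computed from the least supports of $x$ and of the $\gamma_a$. Intersection preservation of $\PSh\F$-presheaves is exactly the existence of such least supports. I expect this to be the main obstacle, because renamings can identify names. The generating relation $(\sigma\cdot x,\gamma)\sim(x,\gamma\sigma)$ is non-injective, so showing that two representatives supported in $C$ and $C'$ are already equivalent within $C\cap C'$ needs care. The plan is to normalise both representatives to the least-support form via the canonical epi-mono factorisation of renamings and compare them there.

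\emph{The isomorphism \eqref{eq:J-ass}.} We need $(V\tl X)\sub Y\cong V\tl(X\sub Y)$, i.e. $(X^V)\sub Y\cong (X\sub Y)^V$. Using the explicit description, send a class $(x_v)_v[\gamma]$ to the family $(x_v[\gamma|_{\supp x_v}])_v$. The support of a tuple is the union of the supports of its components, so the inverse glues the $\gamma|_{\supp x_v}$ together. These restrictions agree on overlaps, so the gluing is well defined. Naturality and well-definedness modulo $\sim$ are routine. Once more, $\oplus=\times$ makes this simpler than in the nominal case, because no freshness constraints are involved. \Cref{thm:gen-mon} then gives $\alpha,\lambda,\rho$, with unit $J1=\names$. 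Coherence follows as in \Cref{rem:coherence}, since the chosen $\aiso$ is canonical.

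\emph{Closedness.} By \Cref{thm:gen-mon-clo}, and since $\nu_\psi\cong\unsh$, we must show that the presheaf $A\mapsto\Ren(Y^{A},Z)$ is intersection-preserving. Mirroring \Cref{prop:dwand}, we show that $\unsh(Y\dwand Z)\cong\Ren(Y^{(-)},Z)$. In one direction, a map $g\colon Y^A\to Z$ is sent to $g\cdot p_A$, which is finitely reducible with support contained in $A$. In the other direction, if $f$ factors through $p_A$, then $f$ determines $g$ uniquely, because $p_A$ is a split epi via the constant extension by some element. Intersection preservation then says that if $f$ factors through both $p_A$ and $p_B$, it factors through $p_{A\cap B}$. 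To show this, feed $f$ the sequences $\gamma$ and $\gamma'$ that agree on $A\cap B$, and pass through a sequence agreeing with $\gamma$ on $A$ and with $\gamma'$ on $B$. In the renaming setting there is no freshness constraint, so such an intermediate sequence always exists. This yields the adjunction $\Ren(X\sub Y,Z)\cong\Ren(X,Y\dwand Z)$.
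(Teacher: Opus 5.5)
Your route is the paper's own: the paper proves the $\Ren$ case ``analogously to nominal sets'', via \Cref{thm:gen-mon} with $J=\psi$, well-behavedness via \Cref{lem:wb-psh-subcat}, and closedness via an explicit $\dwand$. However, two steps fail as written.

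\emph{The closedness step.} You say $p_A$ is split epi ``via the constant extension by some element'', which requires $Y\neq\emptyset$. For $Y=\emptyset$ and $A=\emptyset$, the map $p_\emptyset\colon\emptyset\to 1$ is not epi. Hence $\unsh(\emptyset\dwand Z)(\emptyset)=1$, whereas $\Ren(\emptyset^{\emptyset},Z)=\Ren(1,Z)=Z_0$, the set of elements with empty support. The presheaf $A\mapsto\Ren(\emptyset^{A},Z)$ is $1$ on nonempty $A$ and $Z_0$ at $\emptyset$, so it is not intersection-preserving (test $\{a\}\cap\{b\}=\emptyset$). This cannot be patched, because the statement is false in this case. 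Only pairs with $\supp x=\emptyset$ survive, so $X\sub\emptyset\cong X_0$ with trivial action. Suppose $(-)\sub\emptyset$ had a right adjoint $R$, and let $2$ be the discrete two-element renaming set. Then $\Ren(1,R2)\cong\Ren(1\sub\emptyset,2)=\Ren(1,2)$ has two elements, and $\Ren(\names,R2)\cong\Ren(\emptyset,2)$ has one. But precomposition with the surjection $\names\to 1$ embeds the former into the latter, a contradiction. So your argument establishes a right adjoint to $(-)\sub Y$ only for nonempty $Y$. The paper's argument has the same defect: its nominal \Cref{prop:dwand} uses an extension $\hat\gamma$, which does not exist for empty $Y$. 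The same counterexample applies verbatim to $\Nom$.

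\emph{The well-behavedness step.} Your elementwise plan rests on $\supp x[\gamma]=\bigcup_a\supp\gamma_a$. This fails in $\Ren$ even when $x$ is taken at its least support. Take the free-group renaming set $\sh G$ of \Cref{sec:relev}, $Y=\names$, $x=ab^{-1}$ and $\gamma=[a\mapsto c,b\mapsto c]$. The generating relation with $\sigma=[a\mapsto c,b\mapsto c]$ gives $x[\gamma]\sim 1[\,]$, whose support is $\emptyset\neq\{c\}$. So normal forms via epi-mono factorisations do not compute least supports, which is exactly what your comparison over $C\cap C'$ needs. The claim itself is true and is easiest to prove functorially. Write $(X\sub Y)C=\int^A XA\times(YC)^A=\hat X(YC)$ with $\hat X=\Lan_{\F\hookrightarrow\Set}X$. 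If $X$ preserves intersections, so does $\hat X$, since filtered colimits commute with pullbacks; $\hat X$ also preserves injections. Hence $\hat X(Y(C_1\cap C_2))=\hat X(YC_1\cap YC_2)=\hat X(YC_1)\cap\hat X(YC_2)$.

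\emph{The associativity isomorphism.} The inverse cannot glue arbitrary representatives $x_v[\gamma_v]$, because their $\gamma_v$ need not agree on shared names. You must first rename the $x_v$ apart by permutations and then check that the result is independent of that choice.
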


\section{Nominal Sets and Presheaves: A Taxonomy}
\label{sec:taxon-nomin-sets}

In the previous section we used the correspondence between pre\-shea\-ves and nominal sets to discover new structural operations on nominal sets, such as substitution.
It is therefore worthwhile to study the connection between both worlds in more detail.
Our work builds on the note~\cite{f01} investigating presheaves over contextual categories, particularly \Cref{sec:nom-b}.
We first consider the nominal models corresponding to the two remaining presheaf categories $\PSh \B$ and $\PSh \S$ (representing abstract linear and relevant syntax), and will show that they both admit natural characterizations as subcategories of $\Nom$ or $\Ren$, respectively.
After considering alternative descriptions via sheaves and the related category of supported sets, we paint a complete picture relating the various arising variants of nominal sets. The results obtained in this section are summarized by the diagram in \Cref{sec:nominal-world}.

\subsection{Support Preservation and $\PSh \B$}
\label{sec:nom-b}

The presheaf category $\PSh \B$, also known as the category of \emph{(combinatorial) species}~\cite{j81}, serves as a natural model of \emph{linear} structures, e.g.~the linear $\lambda$-calculus~\cite{t00}. Here variables are interpreted as resources that are supposed to be used exactly once by terms in a given linear context. It turns out that on the side of nominal sets, linearity is precisely captured by the requirement that maps are not allowed to discard any names occurring in their input. More precisely, we consider the following non-full subcategory of $\Nom$:
\begin{defn}[$\Nom_=$]
  The category $\Nompres$ has nominal sets as objects, and morphisms are those equivariant maps $f\colon X\to Y$ that are \emph{support-preserving}, that is, $\supp f(x) = \supp x$ for all $x\in X$.
\end{defn}
Note that $\supp f(x) \subseteq \supp x$ holds for all equivariant maps. We consider the following modification of \Cref{constr:nom-psh}:

\begin{constr}[$\coprod$, $\mathcal{S}$]\label{constr:psh-nom-pres}
  \begin{enumerate}
    \item The functor $\coprod \colon \PSh \B \rightarrow \Nompres$ maps a presheaf $F \in \PSh \B$ to the nominal set $\coprod_{A \in \B}FA$ with action $\pi \cdot \kappa_A(x) = \kappa_{\pi[A]}(\pi|_A^{\pi[A]} \cdot x)$.
    \item Conversely, the functor $\mathcal{S} \colon \Nompres \rightarrow \PSh \B$ maps $X$ to the presheaf $\mathcal{S} X$ with
          \[\mathcal{S}X(A) = \{x \in X \mid \supp x = A\}, \quad  \mathcal{S}X(f)(x) = \pi_f \cdot x \text{ for } \pi_f \in \Perm \A \text{ extending } f \colon A \cong B.\]
          Note that defining  $\mathcal{S}$ on morphisms $f \colon X \rightarrow Y$ requires support-preservation, so that $f$ (co-)restricts for $A \in \B$ to a map \(\mathcal{S}(f)_A \colon \mathcal{S}X(A) \rightarrow \mathcal{S}X(B)\).
  \end{enumerate}
\end{constr}

\begin{lem}\label{lem:coprod-supp}
  For $F \in \PSh \B$ we have $\supp \kappa_A(x)= A$ for $x \in FA$.
\end{lem}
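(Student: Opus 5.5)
We show the two inclusions $\supp \kappa_A(x) \subseteq A$ and $A \subseteq \supp \kappa_A(x)$ directly from the definition of the action in \Cref{constr:psh-nom-pres}, where $\pi \cdot \kappa_A(x) = \kappa_{\pi[A]}(\pi|_A^{\pi[A]} \cdot x)$. Throughout we use that elements of a coproduct in different summands are distinct, i.e.\ $\kappa_A(x) = \kappa_{A'}(x')$ iff $A = A'$ and $x = x'$.

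For $\supp \kappa_A(x) \subseteq A$, it suffices to show that $A$ supports $\kappa_A(x)$. Let $\pi \in \Perm\names$ fix $A$ pointwise. Then $\pi[A] = A$ and the restriction $\pi|_A^{A}$ is the identity bijection $\id_A$ in $\B$. Functoriality of $F$ gives $\pi|_A^A \cdot x = x$, hence $\pi\cdot\kappa_A(x) = \kappa_A(x)$. As a side check, this argument also shows that $\coprod_A FA$ is indeed a nominal set.

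For $A \subseteq \supp \kappa_A(x)$, write $S = \supp\kappa_A(x)$ and suppose some $a \in A$ lies outside $S$. Choose a name $b \notin A \cup S$, which is possible since $\names$ is infinite. The transposition $\pi = \tr{a}{b}$ fixes $S$ pointwise, because $a, b \notin S$. Since $S$ supports $\kappa_A(x)$, we get $\pi \cdot \kappa_A(x) = \kappa_A(x)$. On the other hand, $\pi\cdot\kappa_A(x) = \kappa_{\pi[A]}(\dots)$ with $\pi[A] = (A\setminus\{a\})\cup\{b\} \neq A$, because $b \notin A$. Distinctness of the coproduct summands now gives a contradiction. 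Hence $A \subseteq S$.

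Both inclusions are routine. The only point needing care is the second one: we must use that elements lying in different coproduct summands are distinct, so that it is the index set $\pi[A]$ alone that detects whether a name belongs to the support.
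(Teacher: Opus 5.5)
Your proof is correct and follows essentially the same route as the paper: the inclusion $\supp\kappa_A(x)\subseteq A$ is immediate (the paper simply says ``clearly''), and the reverse inclusion uses a transposition $\tr{a}{b}$ with $b$ fresh, which must fix $\kappa_A(x)$ yet sends it into the different summand indexed by $(A\setminus\{a\})\cup\{b\}$. You just spell out the details the paper leaves implicit, namely why $A$ supports $\kappa_A(x)$, and that choosing $b\notin A\cup S$ makes $\tr{a}{b}$ fix $S$ pointwise.
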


\begin{prop}\label{prop:spec-nompres-equiv}
  The functors $\coprod$ and $\mathcal{S}$ are an equivalence of categories: $\PSh \B \simeq \Nompres$.
\end{prop}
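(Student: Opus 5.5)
I will verify directly that $\coprod$ and $\mathcal{S}$ are well-defined functors and that the composites $\mathcal{S}\coprod$ and $\coprod\mathcal{S}$ are naturally isomorphic to the identities.

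\emph{Well-definedness.} For $\coprod F$, the action $\pi\cdot\kappa_A(x)=\kappa_{\pi[A]}(\pi|_A^{\pi[A]}\cdot x)$ is a group action by functoriality of $F$ on $\B$. Finite support of $\kappa_A(x)$ is witnessed by $A$: if $\pi$ fixes $A$ pointwise then $\pi|_A^{\pi[A]}=\id_A$. \Cref{lem:coprod-supp} sharpens this to $\supp\kappa_A(x)=A$. For a natural transformation $\alpha\colon F\to G$, the map $\coprod\alpha=\coprod_A\alpha_A$ is equivariant by naturality, and it is support-preserving because it sends $\kappa_A(x)$ to $\kappa_A(\alpha_A x)$, both with support $A$ by \Cref{lem:coprod-supp}.

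For $\mathcal{S}X$, the value $\pi_f\cdot x$ must be independent of the chosen extension $\pi_f$ of the bijection $f\colon A\cong B$. Two such extensions differ by a permutation fixing $A=\supp x$, so they act equally on $x$. Moreover $\supp(\pi_f\cdot x)=\pi_f[A]=B$, so $\pi_f\cdot x\in\mathcal{S}X(B)$. Functoriality is immediate, and support preservation makes $\mathcal{S}f$ well-defined and natural, as the construction already notes.

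\emph{Unit $F\cong\mathcal{S}\coprod F$.} By \Cref{lem:coprod-supp}, the elements of $\coprod F$ with support exactly $A$ are precisely the $\kappa_A(x)$ with $x\in FA$. So $x\mapsto\kappa_A(x)$ is a bijection $FA\to\mathcal{S}(\coprod F)(A)$. Naturality in bijections $f\colon A\cong B$ is the defining formula of the action on $\coprod F$, using that $\pi_f|_A^B=f$.

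\emph{Counit $\coprod\mathcal{S}X\cong X$.} Define $\kappa_A(x)\mapsto x$. This is a bijection because every $x\in X$ lies in exactly one summand, namely $A=\supp x$. It is equivariant, since $\pi\cdot\kappa_A(x)=\kappa_{\pi[A]}(\pi\cdot x)$, and it is support-preserving. Naturality in support-preserving $f\colon X\to Y$ is clear.

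The main obstacle is the claimed equality $\supp\kappa_A(x)=A$ in \Cref{lem:coprod-supp}, which I take as given. Without it, supports could shrink and the unit would fail to be a bijection. The key point behind it is that the summand index $A$ is itself moved by any permutation that moves an element of $A$. So any support $S\subsetneq A$ fails: take a transposition swapping some $a\in A\setminus S$ with a fresh name outside $A$. This fixes $S$ but changes the index $A$, so it does not fix $\kappa_A(x)$.
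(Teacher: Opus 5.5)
Your proof is correct and follows the paper's route: the unit $F \cong \mathcal{S}\coprod F$ comes from \Cref{lem:coprod-supp}, and the counit $\coprod\mathcal{S}X \cong X$ from the fact that every element has a unique least support. You additionally spell out the well-definedness, equivariance and naturality checks that the paper's short proof leaves implicit, and your sketch of why $\supp\kappa_A(x)=A$ is the paper's argument for that lemma.
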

\begin{proof}
  The isomorphism \(\coprod \mathcal{S} X \cong X\) may be seen as the definition of a nominal set: every element has a (necessarily unique) least finite support. The isomorphism \(\mathcal{S}\coprod F \cong F\) holds by \Cref{lem:coprod-supp}:
  \begin{align*}
    \mathcal{S}(\coprod_{A \in \B} FA)(B) = \{\kappa_{A}(x) \mid x \in FA,\, \supp \kappa_{A}(x) = B\} \cong FB. \tag*{\qedhere}
  \end{align*}
\end{proof}

We thus obtain the following square of adjunctions:
\[
  \begin{tikzcd}[column sep=3em, row sep=3em]
    \Nompres
    \ar[yshift=0pt, bend left=20]{d}{\mathcal{S}}
    \ar[yshift=0pt, bend right=20, leftarrow, swap]{d}{\coprod}
    \ar[phantom]{d}{\rotatebox{-90}{\(\simeq\)}}
    \ar[yshift=-2pt, bend left=20, hook]{r}{i}
    \ar[yshift=2pt, bend right=20, leftarrow, swap]{r}{R}
    \ar[phantom]{r}{\rotatebox{-90}{\(\dashv\)}}
    & \Nom
    \ar[yshift=0pt, bend left=20]{d}{\unsh}
    \ar[yshift=0pt, bend right=20, leftarrow, swap]{d}{\sh}
    \ar[phantom]{d}{\dashv}
    \\ \PSh \B
    \ar[yshift=-2pt, bend left=20]{r}{\PSh{ \iota }}
    \ar[yshift=2pt, bend right=20, leftarrow, swap]{r}{\iota^*}
    \ar[phantom]{r}{\rotatebox{-90}{\(\dashv\)}}
    & \PSh \I
  \end{tikzcd}
\]
Here $\iota\colon \B\hookrightarrow \I$ is the inclusion, and the adjunction $i\dashv R$ on top is the composite of the three other adjunctions. Moreover:

\begin{prop}\label{prop:nom-psh-square}
  \begin{enumerate}
    \item We have $\PSh{ \iota }(F)B \cong \coprod_{A \subseteq B}FA$.
    \item The right adjoint $R = \coprod \cdot \iota^* \cdot \unsh$ is given by
          \[RX \cong \{(x, A) \mid x \in X,\, \supp x \subseteq A\}.\]
    \item The left adjoint $i = \sh \cdot \Lan_{ \iota } \cdot \mathcal{S}$ is isomorphic to inclusion.
    \item $\Nom$ is isomorphic to the Kleisli category of $\mathcal{T}$, that is,
  \( \Nom \cong (\Nom_=)_\mathcal{T}\).
  \end{enumerate}
\end{prop}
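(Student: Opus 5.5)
Each of the four items reduces to computing composites of the functors in the square, using the explicit descriptions in Construction~\ref{constr:nom-psh}, Construction~\ref{constr:psh-nom-pres} and Proposition~\ref{prop:spec-nompres-equiv}. Throughout, $\mathcal{T}$ is the monad $Ri$ on $\Nompres$ induced by the top adjunction $i \dashv R$; I use this reading of $\mathcal{T}$ in item~(4).

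\textbf{Item (1).} Apply the coend formula \eqref{eq:psh} for $\Lan_\iota$:
\[(\Lan_\iota F)B = \int^{A\in\B} \I(A,B)\times FA.\]
Every injection $A \mono B$ factors uniquely as a bijection $A \cong A'$ followed by an inclusion $A' \subseteq B$. Using the coend relation, each class $[j, x]$ therefore has a unique representative $[(A'\subseteq B), x']$ with $x' \in FA'$. This gives
\[(\Lan_\iota F)B \cong \coprod_{A\subseteq B} FA.\]
The action of an injection $g\colon B\to B'$ sends a summand $A$ to $g[A]$, transporting $x$ along $g|_A$.

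\textbf{Item (2).} The left adjoints compose as $i = \sh\cdot\Lan_\iota\cdot\coprod^{-1}$. Since $\mathcal{S}$ is inverse to $\coprod$, the right adjoint is
\[R = \coprod\cdot\iota^*\cdot\unsh.\]
Evaluating, $\iota^*\unsh X$ sends $A$ to $\{x \mid \supp x\subseteq A\}$, with bijections acting by permutations. Applying $\coprod$ gives the pairs $(A, x)$ with $\supp x\subseteq A$. The permutation action is $\pi\cdot(x,A) = (\pi x, \pi[A])$, and by Lemma~\ref{lem:coprod-supp} the support of $(x,A)$ is $A$.

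\textbf{Item (3).} For $X\in\Nompres$, item~(1) gives
\[\Lan_\iota\mathcal{S}X(B) \cong \coprod_{A\subseteq B}\{x\mid \supp x = A\} = \{x \mid \supp x\subseteq B\} = \unsh X(B).\]
Hence $\Lan_\iota\mathcal{S} \cong \unsh$ on objects, and then $i \cong \sh\unsh \cong \Id$ because $\unsh$ is fully faithful (Theorem~\ref{thm:nom-sh}). This needs a check that the isomorphism is natural for support-preserving maps, which is routine because both sides act by restricting $f$.

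\textbf{Item (4).} Since $i$ is bijective on objects, the comparison functor from the Kleisli category $(\Nompres)_{\mathcal{T}}$ to $\Nom$ is bijective on objects and fully faithful. Explicitly, $\Nom(iX, Y) \cong \Nompres(X, RY)$, and this bijection is compatible with composition by the standard Kleisli comparison argument. Concretely, an equivariant map $f$ corresponds to the support-preserving map $x\mapsto (f(x), \supp x)$.

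The main obstacle is item~(1), specifically confirming that the coend relation identifies exactly the bijection-reparametrisations. I would settle it by exhibiting the cowedge into $\coprod_{A\subseteq B}FA$ and checking its universality directly, using the unique image factorisation of injections.
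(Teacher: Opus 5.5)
Your proposal is correct. Items (1) and (2) match the paper's argument: you use the unique image factorisation of injections to pick canonical coend representatives, and then evaluate $\coprod\cdot\iota^*\cdot\unsh$ directly.

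The genuine difference is in item (3).

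\textbf{The paper's route.} The paper never computes the composite $\sh\cdot\Lan_\iota\cdot\mathcal{S}$. It shows directly that the plain inclusion $\Nompres\hookrightarrow\Nom$ is left adjoint to $R$, via the transposition $f\mapsto\hat f$ with $\hat f(x)=(f(x),\supp x)$; this is well defined because equivariant maps shrink supports. The isomorphism with the composite then follows implicitly from uniqueness of left adjoints.

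\textbf{Your route.} You compute the composite itself. Using item (1), you show $\Lan_\iota\mathcal{S}X\cong\unsh X$, naturally in $X$ and in $B$. You then use that the counit $\sh\unsh\to\Id$ is invertible because $\unsh$ is fully faithful.

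\textbf{What each buys.} Your route gives slightly more information: $\Lan_\iota\mathcal{S}$ already lands in intersection-preserving presheaves, so the square of left adjoints commutes up to isomorphism at the presheaf level. The cost is two naturality checks. For naturality in $B$, you should say explicitly why an injection $g$ acts the same way on both sides. On the summand $A$ of $\Lan_\iota\mathcal{S}X$ it acts by $\pi_{g|_A}\cdot x$, and in $\unsh X$ it acts by $\pi_g\cdot x$. These agree because $\supp x=A$. The paper's route is more elementary, and it produces the explicit transposition formula that it reuses verbatim in item (4).

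\textbf{Item (4).} The paper only writes down the hom-set bijection $\Nompres(X,RiY)\cong\Nom(X,Y)$ and leaves compatibility with Kleisli composition implicit. Your appeal to the Kleisli comparison functor gets functoriality for free.

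One precision point: $\sh\Lan_\iota\mathcal{S}$ is only isomorphic to the inclusion; it is not itself bijective on objects, since its values are quotients $\sh\unsh X$. Applying the comparison argument to it directly would only give an equivalence. To get an isomorphism of categories, first transport the adjunction along item (3) to the actual inclusion $i_0\dashv R$. Its monad $Ri_0$ is isomorphic to $\mathcal{T}$, so the two Kleisli categories are isomorphic, and the comparison argument for $i_0$ then gives $\Nom\cong(\Nompres)_{\mathcal{T}}$.
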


\Cref{prop:nom-psh-square} is just a nominal rephrasing of Fiore's~\cite{f01,fm05} result characterizing the topos $\PShpb \I$ as the Kleisli category for the monad $\iota^{*} \cdot \Lan_{ \iota }$  on $\PSh \B$.
The intuition is the right adjoint $R$ remembers the names dropped by an equivariant map, making it support-preserving.
Concretely, an equivariant map $f \in \Nom(X, Y)$ corresponds precisely to the support-preserving equivariant map $f_= \in \Nompres(X, RY)$ with $f_=(x) = (f(x), \supp x)$.
\Cref{prop:spec-nompres-equiv} shows that nominal sets with support-pre\-ser\-ving maps also yield a topos.
Let us provide some intuition on $\Nom_=$.



\subparagraph{Limits and Colimits.} Colimits in \Nompres are formed as in \Nom, but limits look very differently: Products in $\Nom_=$ are given by
\(X \times_{=} Y = \{(x, y) \in X \times Y \mid \supp x = \supp y\}\),
and the terminal object \(1_{=}\)   in \Nompres is the nominal set \(\powf \names\) with terminal map \(!_{=}\) given by  \( \supp \colon X \rightarrow \powf \names \).
\subparagraph{Fresh Product, Substitution and Abstraction.} Fresh product \(*\) in \(\Nompres\) is formed in $\Nom$; note that the unit of the fresh product in $\Nompres$ is \emph{not} the terminal object $1_{=}$ but the singleton nominal set $1$.


Substitution in $\Nompres$ is also formed in $\Nom$; in the presheaf category \(\PSh \B\) this was studied by Joyal~\cite{j81} as composition of species and interpreted as substitution by Tanaka~\cite{t00}.

Nominal sets permit a natural description of \emph{name abstraction} $[\names](-)$ modeling bound variables modulo $\alpha$-equivalence like in the $\lambda$-calculus~\cite{gp99,p13}.
For example, the nominal set $[\names]\names^2$ contains the elements $x = \langle a \rangle (a, b)$ and $y = \langle c \rangle (e, f)$, where the variables $a$ and $c$ are \emph{abstracted} (bound).
This means that in $[\names]\names^2$ these elements are \emph{equal} to $x = \langle d \rangle (d, b)$ and $y = \langle a \rangle (e, f)$, respectively.
Presheaves over $\B$ model \emph{linear variable binding}~\cite{t00,fr25}, meaning that all variables from the context are used exactly once.
Therefore the abstraction functor $\strabs (-)$ in $\Nompres \simeq \PSh \B$ is also \emph{linear}, viz.\ $a \in \supp x$ for every element $\langle a \rangle_= x \in \strabs X$.
For example, $\langle a \rangle_= (a, b) \in \strabs \names^2$, but $\langle c \rangle_= (e, f)$ does not make sense since $c \not \in \supp (e, f)$.

Finally, we give a simple description of the monad $\mathcal{T}$ on $\Nom_=$, using that the terminal object $1_=$ of $\Nompres$ is a monoid for the fresh product $*$ with (disjoint) union as multiplication.

\begin{prop}\label{prop:mon-desc}
  The monad \(\mathcal{T}\) is isomorphic to the writer monad of $*$ for the monoid $1_=$, that is, \(\mathcal{T} \cong 1_{=} * (-)\).
\end{prop}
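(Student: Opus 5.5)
Throughout, $\mathcal{T}$ denotes the monad $R\,i$ of the adjunction $i\dashv R$, where $i\colon\Nompres\to\Nom$ is the inclusion (\Cref{prop:nom-psh-square}(3)). By \Cref{prop:nom-psh-square}(2) it acts on objects by
\[\mathcal{T}X = R\,iX \cong \{(x,A)\mid x\in X,\ \supp x\subseteq A\}.\]
The approach is to compare this explicit description directly with the fresh product $1_= * X$, and then check that the unit and multiplication agree.

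First, recall $1_=\cong\powf\names$ with the action $\pi\cdot B=\pi[B]$ and $\supp B=B$. So the fresh product in $\Nom$ is
\[1_=*X=\{(B,x)\mid B\in\powf\names,\ x\in X,\ B\cap\supp x=\emptyset\}\]
with the diagonal action. Define $\phi\colon 1_=*X\to\mathcal{T}X$ by $(B,x)\mapsto(x,B\cup\supp x)$, with inverse $(x,A)\mapsto(A\setminus\supp x,\,x)$. Both maps are equivariant because $\supp(\pi\cdot x)=\pi[\supp x]$ and $\pi$ commutes with union and difference. They are mutually inverse because of the disjointness condition $B\cap\supp x=\emptyset$ and the inclusion $\supp x\subseteq A$.

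Next I check support preservation, so that $\phi$ is an isomorphism in $\Nompres$. In $1_=*X$ we have $\supp(B,x)=B\cup\supp x$. In $\mathcal{T}X$ we have $\supp(x,A)=A$, since $\supp x\subseteq A$. These agree under $\phi$. Naturality in support-preserving $f\colon X\to Y$ is immediate: $\supp f(x)=\supp x$, so $(B,x)\mapsto(B,f(x))$ corresponds to $(x,A)\mapsto(f(x),A)$, which is $\mathcal{T}f=R\,i f$.

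The main remaining obstacle is the monad structure. The unit of $i\dashv R$ at $X$ is $x\mapsto(x,\supp x)$; under $\phi$ this is $x\mapsto(\emptyset,x)$, which is the writer unit via the unit $\emptyset$ of $1_=$. The multiplication $R\varepsilon_{iX}\colon RiRiX\to RiX$ is computed from the counit $\varepsilon\colon iRY\to Y$, $(y,A)\mapsto y$. This gives $((x,A),A')\mapsto(x,A')$, with $\supp x\subseteq A\subseteq A'$. Transporting along $\phi$ gives $(B',(B,x))\mapsto(B'\cup B,\,x)$, with $B'$, $B$ and $\supp x$ pairwise disjoint. This is exactly the writer multiplication using disjoint union on $1_=$ and the associator of $*$.

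The subtle points are that the counit must be computed via the composite adjunction; I would take \Cref{prop:nom-psh-square}(2),(3) as given and verify the unit/counit on representatives. One should also note that $*$ and its associator in $\Nompres$ are computed as in $\Nom$. With these, the triangle identities and the writer-monad laws follow routinely.
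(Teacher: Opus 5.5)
Your proposal is correct and follows the paper's route: the same explicit isomorphism $(x,A)\mapsto(A\setminus\supp x,\,x)$ with inverse $(B,x)\mapsto(x,\,B\cup\supp x)$, followed by matching the units $x\mapsto(x,\supp x)$ and $x\mapsto(\emptyset,x)$ and the multiplications (the paper's $((x,A),B)\mapsto(x,A\cup B)$ agrees with your $(x,A')$ since $A\subseteq A'$). You are more explicit than the paper, which leaves equivariance, support preservation, naturality, and the derivation of the multiplication from the counit as ``easy to see''.
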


\subsection{Relevance Sets and $\PSh \S$}
\label{sec:relev}
Next, we identify the nominal counterpart of the presheaf category $\PSh \S$. The latter is a suitable model for \emph{relevant} syntax, capturing notions of terms or computations where every variable (resource) in a given context is required to be used at least once~\cite{fr25}. Since $\S$ involves non-injective renamings, one would expect a close relation to renaming sets. However, for the latter a curious phenomenon can occur that prevents them from being equivalent to $\PSh \S$: unlike for permutations, applying renamings can drop part of the support, that is, it can happen that $\supp (\rho \cdot x)$ is a proper subset of  $\rho \cdot \supp x$.
\begin{expl}
  The restriction of the free-group monad to finite sets is an intersection-preserving presheaf $G \in \PSh \F$, and therefore corresponds to a renaming set $\sh G$ with
  \[\supp \rho \cdot ab^{-1} = \supp cc^{-1} = \supp 1 = \emptyset \ne \{c\} = \rho \cdot \supp ab^{-1}\]
  for the renaming $\rho = [a \mapsto c, \, b \mapsto c]$ and $ab^{-1} \in G\{a, b\}$.
\end{expl}

Gabbay and Hofmann~\cite{gh08} point out that they are not sure whether the existence of such models is a bug or a feature of the category $\Ren$. However, clearly such models are against the spirit of structures where all resources are expected to be relevant. Thus, for our purposes, it is a natural approach to exclude them:

\begin{defn}[Relevance set]
  A renaming set $X$ is a \emph{relevance set} if \(\rho \cdot \supp x = \supp (\rho \cdot x)\) for all $\rho\in \rens \names$ and $x\in X$.
  The corresponding full subcategory of \(\Ren \) is denoted by \(\Relev\), and its subcategory of support-preserving maps is denoted by $\Relevpres$.
\end{defn}

\begin{constr}
  We denote by $\coprod^{\S} \colon \PSh \S \rightarrow \Relevpres$ and $\mathcal{S}^\S \colon \Relevpres \rightarrow \PSh \S$ the functors defined as in \Cref{constr:psh-nom-pres}.
\end{constr}
Note that $\mathcal{S}^\S$ is well-defined by the relevance set condition.
While studying the corresponding presheaf categories, Fiore~\cite{f01} observed that the monad induced by the left adjoint $\Lan_\iota \colon \PSh \S \rightarrow \PSh \F$ is the lifting of the monad $\mathcal{T}$ on $\PSh \B \simeq \Nompres$.

\begin{thm}\label{thm:relev}
  \begin{enumerate}
    \item The functors $\coprod^{\S}$ and $\mathcal{S}^{\S}$ constitute an equivalence of categories: $\Relevpres \simeq \PSh \S$.
    \item The monad $\mathcal{T}$ on $\Nompres$ lifts to a monad $\overbar{\mathcal{T}}$ on $\Relevpres$ , and the category $\Relev$ is isomorphic to the Kleisli category of $\overbar{\mathcal{T}}$, that is,
          \(\Relev \cong (\Relevpres)_{\overbar{\mathcal{T}}}\).
  \end{enumerate}
\end{thm}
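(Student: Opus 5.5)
For part (1) I would mirror the proof of \Cref{prop:spec-nompres-equiv}. First I check that $\coprod^\S F$ is a relevance set. The action is $\rho\cdot\kappa_A(x)=\kappa_{\rho[A]}(\rho|_A^{\rho[A]}\cdot x)$, where $\rho|_A^{\rho[A]}$ is a surjection and hence a morphism of $\S$. Monoid-action laws follow from functoriality of $F$. Finite support holds because any $\rho$ fixing $A$ acts trivially on $\kappa_A(x)$. Next I would prove the analogue of \Cref{lem:coprod-supp}, namely $\supp\kappa_A(x)=A$. The inclusion $\subseteq$ is clear. For $\supseteq$, suppose $a\in A$ and $a\notin\supp\kappa_A(x)$. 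Take a renaming $\rho$ fixing $\supp\kappa_A(x)$ that sends $a$ to a fresh name. Then $\rho\cdot\kappa_A(x)$ lies in the summand indexed by $\rho[A]\neq A$, so it differs from $\kappa_A(x)$, a contradiction. It follows that $\rho\cdot\supp\kappa_A(x)=\rho[A]=\supp(\rho\cdot\kappa_A(x))$, which is exactly the relevance condition. Support-preserving equivariant maps are sent to themselves, since natural transformations preserve summands. Conversely, $\mathcal{S}^\S X(A)=\{x\mid \supp x=A\}$ is functorial along a surjection $f\colon A\epi B$: extend $f$ to some $\rho\in\rens\names$ and set $f\cdot x=\rho\cdot x$. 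This is independent of the chosen extension because $x$ is supported by $A$, and it lands in $\mathcal{S}^\S X(B)$ by relevance. Both round-trip isomorphisms are then exactly as in \Cref{prop:spec-nompres-equiv}.

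For part (2) I would transport the situation through the equivalences. Using (1), \Cref{prop:spec-nompres-equiv}, and the equivalence $\Ren\simeq\PShpb\F$, the inclusion $\iota\colon\S\incl\F$ yields the adjunction $\Lan_\iota\dashv\iota^*$. This adjunction restricts to one between $\PSh\S\simeq\Relevpres$ and the full subcategory of $\PSh\F$ corresponding to $\Relev$. The restriction works once I show that $\Lan_\iota$ lands in intersection-preserving presheaves whose renamings preserve supports. As in \Cref{prop:nom-psh-square}, I expect $\Lan_\iota F(B)\cong\coprod_{A\subseteq B}FA$, because every map factors uniquely as a surjection followed by an inclusion, and pullback-stability of $\S$ (contextuality) handles the coend relations. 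Concretely, the right adjoint should be $\overbar R X=\{(x,A)\mid \supp x\subseteq A\}$, equipped with the renaming action $\rho\cdot(x,A)=(\rho\cdot x,\rho[A])$. The monad $\overbar{\mathcal T}=\overbar R\circ i$ then restricts, on underlying nominal sets with permutations only, to $\mathcal T\cong 1_=*(-)$ from \Cref{prop:mon-desc}, since for permutations $\pi[A]$ is just the permuted set; this is the lifting. Its unit is $x\mapsto(x,\supp x)$, and its multiplication is $((x,A),B)\mapsto(x,B)$. Both are renaming-equivariant, using relevance for the unit.

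It remains to identify the Kleisli category. Kleisli maps $X\to\overbar{\mathcal T}Y$ in $\Relevpres$ are support-preserving equivariant maps $x\mapsto(g(x),\supp x)$, which correspond bijectively to arbitrary renaming-equivariant maps $g\colon X\to Y$. Equivariance of $g$ follows from projection, and conversely $\supp(\rho x)=\rho\supp x$ makes the pairing equivariant. Kleisli composition matches ordinary composition under this correspondence, so the category obtained is $\Relev$, which is full in $\Ren$. The main obstacle is checking that $\overbar R X$ is again a relevance set. The key computation is $\supp(x,A)=A$, which holds because $x$ is determined and $A\supseteq\supp x$ is recorded exactly. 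Together with $\rho\cdot(x,A)$ having support $\rho[A]$, this gives relevance. I would verify this directly rather than through presheaves.
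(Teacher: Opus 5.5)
Your proposal is correct and follows essentially the same route as the paper. Part (1) is the same coproduct-over-supports argument as for $\Nompres$, with the surjection-based renaming action and $\supp\kappa_A(x)=A$. Part (2) rests on the paper's observation that $\overbar{\mathcal T}X=\{(x,A)\mid \supp x\subseteq A\}$ is the same monad as on $\Nompres$, with Kleisli maps $x\mapsto(g(x),\supp x)$ corresponding to equivariant maps that may drop names. You spell this out in more detail than the paper: you check that $\overbar{\mathcal T}X$ is a relevance set, and you note that relevance of $X$ is needed for the unit and the pairing to be equivariant.

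The detour through $\Lan_\iota$ is not needed, since your direct check carries the argument. That check tacitly uses the standard fact that renaming-supports coincide with permutation-supports, which is what makes $\overbar{\mathcal T}$ a lifting of $\mathcal T$ along the forgetful functor to $\Nompres$.
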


Relevance sets embed into presheaves:
Recall that \emph{preimages} are pullbacks where one leg is mono.
Let $\PShpi \F \subseteq \PShpb \F$ be the full subcategory preimage-preserving presheaves.

\begin{prop}\label{prop:preim-relev-equiv}
  The equivalence $\PShpb{\F}\! \simeq\! \Ren$ restricts to $\PShpi{\F}\! \simeq\! \Relev$.
\end{prop}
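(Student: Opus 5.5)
We prove that the equivalence $\unsh\colon \Ren \simeq \PShpb{\F}$ identifies relevance sets with exactly the preimage-preserving presheaves. Since both sides are full subcategories of equivalent categories, it suffices to check, for a renaming set $X$, that $X$ is a relevance set iff $\unsh X$ preserves preimages. Recall $(\unsh X)A = \{x \mid \supp x \subseteq A\}$, with $f\cdot x = \rho_f\cdot x$ for any renaming $\rho_f$ extending $f$ on $A$.

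The first step is to reduce preimages in $\F$ to a simple form. Every preimage square is a pullback of an inclusion $m\colon B\subseteq C$ along some $f\colon A\to C$, with apex the inclusion $f^{-1}[B]\subseteq A$ up to isomorphism. Preservation then says: if $x\in(\unsh X)A$ and $f\cdot x\in(\unsh X)B$, then $x\in(\unsh X)f^{-1}[B]$. Uniqueness of the comparison element is automatic, since inclusions act injectively on $\unsh X$. In support terms, the condition reads
\[\supp(\rho\cdot x)\subseteq B \implies \supp x\subseteq \rho^{-1}[B]\]
for all renamings $\rho$ and finite $B$, where the preimage is relative to some finite $A\supseteq \supp x$.

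For ($\Rightarrow$), suppose $X$ is a relevance set. Then $\supp(\rho\cdot x)=\rho[\supp x]\subseteq B$, and hence $\supp x\subseteq\rho^{-1}[B]$.

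For ($\Leftarrow$), assume the displayed condition. Take $B=\supp(\rho\cdot x)$, computed inside a large enough finite $C$. This gives $\supp x\subseteq \rho^{-1}[\supp(\rho\cdot x)]$, so $\rho[\supp x]\subseteq\supp(\rho\cdot x)$. The reverse inclusion $\supp(\rho\cdot x)\subseteq\rho[\supp x]$ holds in every renaming set: $\rho[\supp x]$ supports $\rho\cdot x$, because any $\tau$ fixing $\rho[\supp x]$ satisfies $\tau\rho|_{\supp x}=\rho|_{\supp x}$, so $\tau\rho\cdot x=\rho\cdot x$. Equality follows, so $X$ is a relevance set.

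The main obstacle is bookkeeping in the reduction step. We must ensure the contexts $A$ and $C$ can be enlarged without changing anything, which holds because $\unsh X$ is intersection-preserving and the action is independent of the chosen extension $\rho_f$. We must also make sure that a general preimage square, with the mono given only up to isomorphism, reduces to the inclusion case. Both points are routine once intersection preservation is used to identify $(\unsh X)A$ as the set of elements with $\supp x\subseteq A$.
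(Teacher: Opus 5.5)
Your proof is correct and follows the paper's argument. Both proofs reduce preimage preservation of $\unsh X$ to the implication $\supp(\rho\cdot x)\subseteq B \Rightarrow \supp x\subseteq\rho^{-1}[B]$, then take $B=\supp(\rho\cdot x)$ and combine with $\supp(\rho\cdot x)\subseteq\rho[\supp x]$. The differences are cosmetic: you work with $\unsh X$ for a renaming set $X$ rather than with $\sh F$, and you justify that last inclusion (via renamings agreeing on a support acting equally on $x$), which the paper only asserts.
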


The equivalence to relevance sets allows us to recover the following characterization of the finitary preimage-preserving endofunctors on $\Set$:
\begin{notheorembrackets}
  \begin{thm}[{\cite{sz15}}]\label{thm:es-im-finit}
    The essential image of $\Lan_i \colon \PSh{\S} \rightarrow [\Set, \Set]$ for $i \colon \S \incl \Set$ are the
    finitary, preimage-preserving endofunctors on $\Set$.
  \end{thm}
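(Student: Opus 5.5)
The statement is Theorem~\ref{thm:es-im-finit}: the essential image of $\Lan_i$ consists exactly of the finitary, preimage-preserving endofunctors. The plan is to factor $i\colon \S\incl\Set$ as $\S \xrightarrow{\iota} \F \xrightarrow{V} \Set$, so that $\Lan_i \cong \Lan_V \cdot \Lan_\iota$. It is classical (and recalled before \Cref{prop:finitary}) that $\Lan_V \dashv V^*$ restricts to an equivalence between $\PSh\F$ and the finitary endofunctors $[\Set,\Set]_{\mathrm{fin}}$. It therefore suffices to prove two facts:
\begin{enumerate}
\item the essential image of $\Lan_\iota\colon \PSh\S\to\PSh\F$ is $\PShpi\F$;
\item a finitary $H\colon\Set\to\Set$ preserves preimages iff its restriction $V^*H$ does.
\end{enumerate}

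For (1) I will pass through the nominal side. By \Cref{thm:relev}, $\PSh\S\simeq\Relevpres$, and $\Relev$ is the Kleisli category of $\overbar{\mathcal T}$. Under these equivalences, $\Lan_\iota$ corresponds to the free functor $\Relevpres\to\Relev$ followed by the inclusion $\Relev\simeq\PShpi\F\subseteq\PSh\F$ from \Cref{prop:preim-relev-equiv}. This mirrors \Cref{prop:nom-psh-square}, where $\sh\cdot\Lan_\iota\cdot\mathcal S$ is the inclusion. The Kleisli free functor is identity on objects, so every relevance set is in the image. Conversely, every object in the image lands in $\PShpi\F$. The key verification is that $\Lan_\iota F$ is preimage-preserving; I will check this directly from the coend formula $\Lan_\iota F(C)=\int^{A}\F(A,C)\times FA$. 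An element is then a pair $[f\colon A\to C,\ x]$, which can be normalized, using surjections in $\S$, to one where $f$ is injective. So $\Lan_\iota F(C)\cong\coprod_{A\subseteq C}FA/\mathrm{Bij}$, the analogue of \Cref{prop:nom-psh-square}(1) with $\F$ in place of $\I$ and image factorisation. A preimage along $m\colon C'\incl C$ then selects exactly the summands with image inside $C'$, which gives preservation.

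For (2), the direction from left to right is immediate, since finite preimages are preimages. For the converse, let $H=\Lan_V(V^*H)$. Any element of $H(Y)$ comes from some $H(B)$ along $g\colon B\to Y$ with $B$ finite, so a preimage square over $m\colon Y'\incl Y$ can be reduced to the finite square obtained by pulling $m$ back along $g$. This uses finitariness together with the fact that preimages of finite sets are finite.

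I expect the main obstacle to be the normalisation in (1): showing that the coend equivalence on $\coprod_A \F(A,C)\times FA$ collapses exactly to pairs with injective $f$, modulo bijections. The natural approach is epi–mono factorisation in $\F$: write $f=m\cdot e$ with $e\in\S$, replace $[f,x]$ by $[m,e\cdot x]$, and check that the generating relations respect this normal form. This is the same computation Fiore's note~\cite{f01} performs. If it proves cumbersome, the fallback is to invoke \Cref{prop:preim-relev-equiv} and Fiore's lifting result directly.
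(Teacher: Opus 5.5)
Your proposal is correct and follows essentially the same route as the paper. Like the paper, you factor $\Lan_i\cong\Lan_V\cdot\Lan_\iota$ and use the equivalence of $\PSh{\F}$ with finitary endofunctors. You then reduce preimage preservation of a finitary functor to its restriction to $\F$; your element-wise finite reduction is what the paper compresses into ``filtered colimits commute with finite limits''. Preimage preservation of $\Lan_\iota F$ is checked via the image-factorisation formula, and essential surjectivity onto $\PShpi{\F}$ comes from \Cref{prop:preim-relev-equiv} together with $\Lan_\iota\mathcal{S}^{\S}X\cong\unsh X$ for relevance sets $X$. One cosmetic point: once the injective leg is normalised to the inclusion of its image, no quotient by bijections remains, so $\Lan_\iota F(C)\cong\coprod_{A\subseteq C}FA$ exactly, as in the paper.
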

\end{notheorembrackets}
The categorical structure of $\Relevpres$ is similar to that of $\Nompres$,
with again an important difference being the monoidal structure corresponding to Day convolution $\oplus$ for $+$.
Note that $\Relevpres$ has a monoidal structure $\oplus$ taking products of underlying sets $|X \oplus Y| \cong |X| \times |Y|$ with a singleton unit, which, like in $\Nompres$, is not the categorical product.
Moreover, the adjunction between nominal sets and renaming sets restricts to relevance sets.

\begin{prop}\label{prop:day-relev}
  Both the equivalence $(\Relevpres, \oplus) \simeq (\PSh \S, \oplus)$ and the inclusion $(\Relevpres, \oplus)\rightarrow (\Relev, \times)$ are strong monoidal.
\end{prop}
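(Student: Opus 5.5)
Both claims of \Cref{prop:day-relev} reduce to elementwise descriptions, so the plan is to first make the tensor $\oplus$ on $\Relevpres$ explicit and then check the two monoidality claims separately.

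\textbf{Step 1: the tensor.} For relevance sets $X,Y$, put $X\oplus Y := X\times Y$ as sets, with the diagonal $\rens\names$-action. Since $\rho\cdot\supp(x,y) = \rho\cdot(\supp x\cup\supp y) = \supp(\rho x)\cup\supp(\rho y)$, the product $X\times Y$ is again a relevance set. If $f,g$ are support-preserving, so is $f\times g$. Hence $\oplus$ is the cartesian product of $\Relev$, restricted to the non-full subcategory $\Relevpres$. The unit is the singleton $1$, and the associator and unitors are the cartesian ones, all of which are support-preserving.

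\textbf{Step 2: the inclusion.} With $\oplus$ defined this way, the inclusion $(\Relevpres,\oplus)\to(\Relev,\times)$ is strict monoidal once we check that $\times$ in $\Relev$ is computed as in $\Ren$, i.e.\ that $\Relev$ is closed under products in $\Ren$. The computation in Step 1 already shows this, so the second claim follows.

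\textbf{Step 3: the equivalence.} We show $\coprod^\S(F\oplus G)\cong \coprod^\S F\times\coprod^\S G$ naturally. By \Cref{cor:conv-rep-sub}, or directly from the coend formula for Day convolution, $(F\oplus G)(C)$ consists of classes $[e\colon A+B\epi C,\, x\in FA,\, y\in GB]$ modulo renaming $A$ and $B$ by bijections. Since $\S$ contains only surjections, each such class amounts to a pair of subsets $A',B'\subseteq C$ with $A'\cup B'=C$, together with $x\in F A'$ and $y\in G B'$, obtained by factoring $e$ on each summand as a surjection followed by an inclusion. The key point is that a surjection $A\to A'$ acts on $x$ via $F$, so the coend identifies $[e,x,y]$ with the factored form. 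This identification is the main obstacle: the image factorizations must be made canonical and compatible with the coend relation generated by pairs of surjections. To do this, I will define the map
\[
\coprod_C (F\oplus G)C \;\longrightarrow\; \coprod_A FA\times\coprod_B GB,
\qquad
[e,x,y]\mapsto \bigl(\kappa_{e[A]}(e|_A\cdot x),\ \kappa_{e[B]}(e|_B\cdot y)\bigr),
\]
check that it respects the coend equations, and invert it by $(\kappa_A x,\kappa_B y)\mapsto[\mathrm{incl}\colon A+B\epi A\cup B,\,x,\,y]$. Equivariance under renamings follows by naturality of the action of $\coprod^\S$ (the $\S$-analogue of \Cref{constr:psh-nom-pres}). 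The unit $\S(0,-)$ is sent to the singleton. Coherence is inherited from the universal property of coends, exactly as in the proof of \Cref{prop:day-fresh}. Finally, since $\coprod^\S$ is an equivalence by \Cref{thm:relev}, its inverse $\mathcal{S}^\S$ is strong monoidal as well.
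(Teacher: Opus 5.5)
Your proposal is correct and takes essentially the same route as the paper, which uses exactly your isomorphism $\coprod^{\S}(F\oplus G)\cong\coprod^{\S}F\times\coprod^{\S}G$ and obtains strict monoidality of the inclusion from products in $\Relev$ being computed in $\Ren$. The isomorphism image-factorizes $j\colon A+B\epi C$ on each summand and is inverted via the copairing $A+B\epi A\cup B$. One small slip: the coend relation for $\oplus$ on $\PSh{\S}$ is generated by pairs of arbitrary surjections, not only by bijections (you in fact use this a few lines later), and \Cref{cor:conv-rep-sub} is not the right reference for that description.
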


\begin{rem}\label{rem:sub-relev}
  We note that the substitution tensor $\sub$ of $\Ren$ restricts to relevance sets, that is, if $X$ and $Y$ are relevance sets, then so is $X \sub Y$.
  Note that this is not true for the internal hom: for $Y =  2$ the set $Y \dwand Y$ is not a relevance set.
\end{rem}

\subsection{Sheaves over Contexts}
\label{sec:sheaves}

The characterization of the essential image of $\unsh \colon \Nom \incl \PSh \I$ as intersection-preserving presheaves (\Cref{thm:nom-sh}) emphasizes  the property of finite supports being closed under intersection.
There exists a different description of the presheaves in the image of $\unsh$ as \emph{sheaves}, which is closer to the definition of a support. While intersection-preservation was the focus in \cite{p13,gh08}, the sheaf-based perspective was assumed in~\cite{s07} and in recent approaches on separation logic~\cite{l24,sbf25}, so it is worthwhile to connect the approaches.
This also allows us to conceptually include \emph{supported sets}~\cite{w23}, another abstract model of the notion of support.

We recall the basic definitions for (covariant) sheaves over a small category $\A$.
A \emph{cover} on \(C \in \A\) is a family of morphisms with domain \(C\).
A \emph{coverage} \(J\) on \A assigns to every \(C \in \A\) a collection \(J(C)\) of covers on \(C\). It is \emph{stable} if for every \(g \colon C \rightarrow D\) and every
 cover \(S \in J(C)\), there exists a cover \(T \in J(D)\) such that $T \cdot g$ factorizes through $S$, that is, for every \(h \cdot g \in T \cdot g\) there exist morphisms $f \in S$ and $g' \in \A$ with $h \cdot g = g' \cdot f$, see \eqref{eq:cover-stable}.\\
\begin{minipage}[c]{.3\columnwidth}
  \begin{equation}\label{eq:cover-stable}
  \begin{tikzcd}
    C \rar{ \forall g} \ar{d}[swap]{\exists f \in S} & D \dar{\forall h \in T} \\
    D' \rar{\exists g'} & E \\
  \end{tikzcd}
\end{equation}
\end{minipage}
\hfill%
\vspace{-10pt}\\
Given a presheaf \(F \in \PSh \A\), a \emph{matching family} for a cover \(S\) on~\(C\) is a family \((x_{f} \in F D)\) for \((f \colon C \rightarrow D) \in S\) such that \(g \cdot x_{f} = k \cdot x_{h}\) for all \(f, h \in S\) and morphisms \(g,k\) with \(gf = kh\). An \emph{amalgamation} of the family is an element \(x \in FC\) with \(f \cdot x = x_{f}\) for all \(f \in S\).
Let~\(J\) be stable a coverage and let \(F \in \PSh \A\) be a presheaf. Then \(F\) is \emph{$J$-separated} if every matching family has {at most one} amalgamation, and a \emph{$J$-sheaf} if every matching family has a {unique} amalgamation.
The full subcategory of sheaves in $\PSh \A$ is denoted by $\Sh({J})$.

\begin{expl}\label{ex:cov-I}
  For a contextual subcategory $\ctx \hookrightarrow \F$,
  we get a stable coverage \(\mathcal{I}_\ctx\) on \ctx whose only covers are singleton inclusions \(\{A \subseteq B\}\).
\end{expl}

We get the following correspondence between intersection-preserving presheaves and sheaves, capturing $\ctx=\I$~\cite[Example A2.1.11h]{j02} and $\ctx = \F$~\cite{gh08} uniformly:

\begin{thm}\label{thm:int-pres-sheaves}
  For every contextual category $\ctx\hookrightarrow \F$ we have \(\PShpb \ctx \simeq \Sh(\mathcal{I}_\ctx)\).
\end{thm}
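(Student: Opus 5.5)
We show that both full subcategories of $\PSh\ctx$ are the same class of presheaves, so the equivalence is in fact an equality of full subcategories. Throughout we work in $\F$ realised as finite subsets of $\names$. By contextuality every morphism of $\ctx$ factors, up to bijection, as a surjection followed by an inclusion, and inclusions lie in $\ctx$: pullback-stability applied to a map $A\to 1$ in $\ctx$ gives inclusions. First we unfold the sheaf condition for $\mathcal{I}_\ctx$. A cover $\{A\subseteq B\}$ on $A$ is a single map. A matching family is an $x\in FB$ with $g\cdot x = k\cdot x$ for all $g,k\colon B\to D$ in $\ctx$ such that $g|_A = k|_A$. An amalgamation is an $x'\in FA$ with $(A\subseteq B)\cdot x' = x$.

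So $F$ is a sheaf iff two conditions hold. The first is that $F(A\subseteq B)$ is injective. This is automatic in $\ctx\ni$ inclusion, since there is a retraction when $A\neq\emptyset$; the case $A=\emptyset$ needs separate care. The second is that every $x\in FB$ "not depending on $B\setminus A$", in the sense above, lies in the image of $FA$.

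Next we unfold intersection preservation. For $A,A'\subseteq C$ it says that $FA\cap FA' = F(A\cap A')$ inside $FC$, where we view each $FA$ as a subset of $FC$ via the injective inclusion maps. We then prove the two directions.

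For $\PShpb\ctx\subseteq\Sh(\mathcal{I}_\ctx)$, we must show the key step: if $x\in FB$ satisfies the matching condition, then $x\in FA$. We induct on $|B\setminus A|$, so it suffices to treat $B=A+\{b\}$. Pick a fresh name $b'$ and set $D=B+\{b'\}$. Take $g=(B\subseteq D)$ and let $k$ be the inclusion renaming $b\mapsto b'$; $k$ is an injection lying in $\ctx$, as inclusion composed with bijection. The matching condition gives $g\cdot x=k\cdot x\in FB\cap F(A+\{b'\})\subseteq FD$. Intersection preservation then places this element in $FA$, and injectivity of $F(B\subseteq D)$ yields $x\in FA$.

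For the converse $\Sh(\mathcal{I}_\ctx)\subseteq\PShpb\ctx$, take $x\in FA\cap FA'\subseteq FC$. We may assume $C=A\cup A'$, using injectivity. Viewed in $FA$, we must verify the matching condition for the cover $A\cap A'\subseteq A$. So let $g,k\colon A\to D$ agree on $A\cap A'$. We extend them to $\hat g,\hat k\colon C\to D'$ by mapping $A'\setminus A$ identically to fresh names. Since $x$ comes from $FA'$, $\hat g\cdot x$ depends only on $\hat g|_{A'}$. The plan is to interpolate between $\hat g$ and $\hat k$ by a map $h$ that agrees with $\hat g$ on $A'$ and with $\hat k$ on $A$; this is consistent because $g$ and $k$ agree on the overlap. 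The required fact, that $\hat g\cdot x=h\cdot x$ whenever the two maps agree on $A'$, follows by rewriting $x=(A'\subseteq C)\cdot x''$ with $x''\in FA'$.

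The main obstacle is ensuring that all the auxiliary maps (extensions, interpolations, renamings to fresh names) actually lie in $\ctx$ rather than merely in $\F$. For $\S$ in particular, the maps must be surjective, so the codomains have to be chosen as images instead of adding fresh names. Here I would invoke \Cref{prop:reindexing}: primality, together with closure under $+$, lets us split a map $C\to D$ along $C=A+(C\setminus A)$ and decide membership in $\ctx$ componentwise. This uniform treatment is what recovers the known cases $\ctx=\I$ and $\ctx=\F$.
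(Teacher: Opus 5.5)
Your skeleton matches the paper's. You show that both full subcategories of $\PSh{\ctx}$ coincide. You prove $\PShpb{\ctx}\subseteq\Sh(\mathcal{I}_{\ctx})$ by comparing two maps out of $B$ that agree on $A$ but send $B\setminus A$ to disjoint copies. You prove the converse by extending $g,k$ by the same map on $C\setminus A$. Two side remarks: the paper handles all of $B\setminus A$ at once with $1_A+1_{A'}+i$ and $1_A+i+1_{A'}$, so your induction is unnecessary (and would itself need an argument that the intermediate element is again matching). Also, with your extension $\hat g$ and $\hat k$ already agree on $A'$, so the interpolating map $h$ is superfluous.

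The genuine gap is your claim that injectivity of $F(A\subseteq B)$ is automatic ``since there is a retraction''. A retraction of a proper inclusion is not injective, so it does not exist in $\I$ --- precisely the nominal case --- and the claim is false there. For example, $FA=\{0,1\}$ for $|A|\le 1$ and $FA=\{\ast\}$ otherwise is a presheaf on $\I$ with $F(\{a\}\subseteq\{a,b\})$ not injective; for $A=\emptyset$ the claim fails even in $\F$. Your direction $\PShpb{\ctx}\subseteq\Sh(\mathcal{I}_{\ctx})$ relies on injectivity twice: for uniqueness of amalgamations, and in its last step (``injectivity of $F(B\subseteq D)$ yields $x\in FA$''). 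As written, that direction is therefore unjustified. The repair, as in the paper, is to derive injectivity from the hypothesis: preserving the trivial intersection $A\cap A=A$ (the kernel pair of the mono $A\subseteq B$) means $F$ preserves monos. In the converse direction injectivity is indeed part of the sheaf condition, as you say.

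Your opening claim that every contextual $\ctx$ contains the inclusions is also false. Pulling back a map $A\to 1$ yields product projections, not inclusions, and $\B$, $\S$ contain no proper inclusions at all. What holds is a dichotomy:

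\begin{itemize}
\item If $\ctx$ contains a proper inclusion $\id_A+(\emptyset\to B\setminus A)$, primality gives $\emptyset\to 1\in\ctx$. Closure under $+$ then gives all inclusions and, with the bijections, all injections. In this case your auxiliary maps (fresh renamings, $g+\id$, $D\subseteq D'$) do lie in $\ctx$.
\item Otherwise every map of $\ctx$ is surjective: a non-surjective $f$ splits as $f'+(\emptyset\to Y\setminus f[X])$, and primality plus closure under $+$ would yield a proper inclusion. Then $\mathcal{I}_{\ctx}$ has only identity covers and all intersections are trivial, so both sides are all of $\PSh{\ctx}$. This is what the paper remarks for $\B$ and $\S$.
\end{itemize}

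Your plan to treat $\S$ by ``choosing images as codomains'' addresses a problem that does not arise. It should be replaced by this case distinction.
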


Note that for $\A = \I,\, \F$ the sheaf condition matches precisely the support condition:
a presheaf $F$ is a sheaf if for every $y \in FB$ and every inclusion $A \subseteq B$ we have that whenever
\[\forall \pi, \sigma \colon  \pi|_A = \sigma|_A \Longrightarrow \pi \cdot y = \sigma \cdot y,\]
then $y$ is already supported by $A$:
there exists a unique $x \in FA$ with $(A \subseteq B) \cdot x = y$.

\subsection{Supported Sets}
\label{sec:supp-set}
A \emph{supported set}~\cite{w23} consists of a set $X$ with a \emph{support function} $s \colon X \rightarrow \pow_\f \names$, and a morphism of supported sets is a map $X \rightarrow X'$ satisfying $s'(f(x)) \subseteq s(x)$ for all $x \in X$, leading to a category $\Suppset$.
Supported sets give a foundation for \(\Nom\)-like categories in which the elements have a support, but the carriers may not be closed under name symmetries.
Their relation to nominal sets is therefore orthogonal to that between $\PSh \B$ and nominal sets (not only in this sense: while $\Nom$ is a Kleisli category over $\PSh \B$, it is monadic over $\Suppset$).
While supported sets have been tied to some presheaf categories under adjunctions~\cite{w23}, a presentation as a category of (pre-)sheaves has been missing so far; we fill this gap to fully integrate supported sets into the presheaf setting.

We denote by $\Idcat$ the discrete subcategory on $\pow_\f \names$ and by $\Inc$ the posetal category $\pow_\f \names$.
Since presheaves over $\Idcat$ are just $\pow_\f \names$-indexed families of sets, we have the standard equivalence \[\PSh \Idcat \simeq \Set \quot \pow_\f \names \simeq \Suppsetpres,\]
where $\Suppsetpres \incl \Suppset$ is the wide subcategory of support-preserving morphisms $f\colon X \rightarrow X'$ (that is, $s'(f(x)) = s(x)$ for all $x \in X$).
The inclusion $\iota \colon \Idcat \incl \Inc$ extends to a left adjoint $\Lan_\iota \colon \PSh \Idcat \rightarrow \PSh \Inc$ with induced monad $\mathcal{T}'$ on $\PSh \Idcat$.
This yields a monad $\mathcal{T}$ on $\Suppsetpres$
defined for $(X, s_X) \in \Suppsetpres$ by
\[ \mathcal{T} X = \{(x, A) \mid x \in X,\, s_X(x) \subseteq A\}\quad \text{with} \quad s_{\mathcal{T} X} \colon \mathcal{T} X \rightarrow \pow_\f \names, (x, A) \mapsto A.\]
Similar to nominal sets (\Cref{prop:nom-psh-square}) we obtain:
\begin{thm}\label{thm:supp-set-char}
  \begin{enumerate}
    \item  $\Suppset$ is the Kleisli category of \(\mathcal{T}\).
    \item Supported sets are equivalent to $\cap$-preserving presheaves: $\Suppset \simeq \PShpb \Inc$.
  \end{enumerate}
\end{thm}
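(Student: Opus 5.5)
Part (1) will mirror \Cref{prop:nom-psh-square}. First I would identify $\Suppset$ with a Kleisli category. The Kleisli category of $\mathcal{T}$ has as morphisms $X \to Y$ the support-preserving maps $g\colon X \to \mathcal{T}Y$. Such a map has the form $g(x) = (f(x), s_X(x))$ with $s_Y(f(x)) \subseteq s_X(x)$. Hence Kleisli morphisms are in bijection with morphisms of supported sets $f\colon X\to Y$, via $f \mapsto f_=(x) = (f(x), s_X(x))$.

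Next I would check that Kleisli composition corresponds to ordinary composition. For this I first make the monad structure explicit. The unit is $\eta(x) = (x, s(x))$, and the multiplication is $\mu((x,A),B) = (x,B)$; both are read off from $\Lan_\iota$ together with $\PSh\Idcat \simeq \Suppsetpres$. The Kleisli composite of $f_=$ and $g_=$ is then $x \mapsto \mu(\mathcal{T}g_=(f(x), s(x))) = (g(f(x)), s(x))$, which is exactly $(g f)_=$. The identities $\eta$ correspond to identity maps. Objects agree on both sides, so this gives an isomorphism of categories.

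Part (2) follows the strategy of \Cref{thm:nom-sh}. I would define $\unsh\colon \Suppset \to \PSh\Inc$ by $(\unsh X)(A) = \{x \mid s(x) \subseteq A\}$, with inclusions acting as inclusions. I would define $\sh\colon \PSh\Inc \to \Suppset$ by $\sh F = \colim_{A} FA$, a filtered colimit over the directed poset $\Inc$. Each element $[A,x]$ receives the support $\bigcap\{B \mid [A,x] \text{ is represented in } FB\}$.

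The functor $\unsh X$ preserves intersections, since $s(x)\subseteq A\cap B$ holds iff $s(x) \subseteq A$ and $s(x) \subseteq B$. The maps $\unsh X \to \unsh Y$ are natural families of maps, which are restrictions of a single map $f$ because colimits are unions. Naturality at each $A$ then says exactly that $s(x)\subseteq A$ implies $s(f(x))\subseteq A$. Taking $A = s(x)$ gives $s(f x)\subseteq s(x)$, so $\unsh$ is fully faithful.

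It remains to show that every intersection-preserving $F$ is isomorphic to $\unsh\sh F$. Since $\Inc$ is a poset with inclusion morphisms, the transition maps of an intersection-preserving presheaf are injective. I would justify this by the pullback of $A \subseteq B$ against itself, which is $A$. Then $\sh F$ is a union, and each element $x$ lies in $FA$ for a least $A$, namely the intersection of all such $A$. This uses that finitely many intersections suffice, because these are finite sets. It also uses that $F$ preserves the intersections $A\cap B$ as pullbacks.

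The main obstacle is this least-support step. It requires showing that $x\in FA$ and $x\in FB$ imply $x \in F(A\cap B)$, which is precisely the pullback-preservation hypothesis. Finiteness of $A$ then gives a least element by a descending-chain argument.
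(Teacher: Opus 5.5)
Your proposal is correct and follows the paper's proof. Part (1) is the hom-set bijection $f \mapsto f_=$ with $f_=(x) = (f(x), s_X(x))$, as in \Cref{prop:nom-psh-square}. Part (2) uses the paper's functors $\Gamma X(A) = \{x \mid s(x) \subseteq A\}$ and $\Lambda F = \colim F$, where an element's support is the intersection of all sets in which it is represented, and intersection-preservation makes this intersection itself a representing set. The differences are cosmetic: you conclude via full faithfulness of $\Gamma$ plus $F \cong \Gamma\Lambda F$ rather than exhibiting both $F \cong \Gamma\Lambda F$ and $X \cong \Lambda\Gamma X$, and you check Kleisli composition explicitly, which the paper leaves implicit.
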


In contrast to the other categories of inter\-sec\-tion-preserving presheaves considered to far, the category \(\PShpb \Inc \simeq \Suppset\) is \emph{not} a (Grothendieck) topos~\cite{w23}, but merely a \emph{quasitopos}, meaning that its subobject classifier is defined with respect to \emph{strong} (rather than arbitrary) monomorphisms.
However, it still fits into the sheaf picture as a category of \emph{separated objects} in a sheaf topos, which is always a quasitopos~\cite{bp91}.

Let $\mathcal{O}$ be the coverage on $\Inc$ whose covers of $A \in \Inc$ are finite families $K = (\iota_i \colon A \subseteq B_i)_{i \in F}$ such that $A = \bigcap_{i \in F} B_i$.
Recall from \Cref{ex:cov-I} that $I_{\Inc}$ is the coverage with singleton inclusions as covers.

\begin{prop}\label{prop:suppset-shv}
  A presheaf $F \in \PSh \Inc$ preserves intersections iff $F$ is a $\mathcal{O}$-sheaf and $\mathcal{I}_{\Inc}$-separated.
\end{prop}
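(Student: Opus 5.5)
We prove both directions by unfolding the two sheaf conditions and comparing them with intersection preservation. Concretely, $F \in \PSh \Inc$ preserves intersections when for all finite $A, B$ the square formed by $F(A\cap B) \to FA, FB \to F(A\cup B)$ is a pullback, and $F$ preserves the empty intersection appropriately. Equivalently: the maps $F(A\subseteq B)$ are injective, and every $y \in FC$ lying in the images of both $FA$ and $FB$ (for $A,B\subseteq C$) lies in the image of $F(A\cap B)$.

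The first step is to note that $\mathcal{I}_{\Inc}$-separatedness says exactly that each $F(A \subseteq B)$ is injective. A singleton cover $\{A\subseteq B\}$ has matching families $x_B \in FB$ subject to $g\cdot x_B = k\cdot x_B$ for $g(A\subseteq B) = k(A\subseteq B)$. In a poset, $g = k$ is forced, so every element is matching, and at most one amalgamation means injectivity. So separatedness is equivalent to all restriction maps being monic. The same holds for $\mathcal{O}$-covers with one member, and these are exactly the identity covers.

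The second step analyses the $\mathcal{O}$-sheaf condition for a cover $(A \subseteq B_i)_{i\in F}$ with $\bigcap B_i = A$. Because $\Inc$ is a poset, a matching family is a tuple $x_i \in FB_i$ whose images agree in $F(B_i\cup B_j)$. The condition $g f = k h$ only relates $B_i$ and $B_j$ through common upper bounds, so agreement in $F(B_i \cup B_j)$ suffices, since any common upper bound factors through $B_i\cup B_j$.
\begin{itemize}
\item For the forward direction, assume $F$ preserves intersections. Injectivity gives separatedness and uniqueness of amalgamations. For existence, induct on $|F|$, using binary intersection preservation: $B_1\cap B_2$ glues $x_1, x_2$, and then we continue with the smaller cover. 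The case $F = \emptyset$ needs $A = \bigcap\emptyset$; since we only have finite subsets, I expect the convention to be that covers are nonempty, and I will check this. Stability of $\mathcal{O}$ should be verified along the way: pulling a cover back along $A\subseteq D$ gives $(D\subseteq B_i\cup D)$, and $\bigcap (B_i\cup D) = A\cup D = D$ by distributivity.
\item For the converse, assume $F$ is an $\mathcal{O}$-sheaf and $\mathcal{I}_{\Inc}$-separated. Take $y \in FC$ equal to both $(A\subseteq C)\cdot a$ and $(B\subseteq C)\cdot b$. Then $(a,b)$ is a matching family for the two-element cover $\{A\cap B\subseteq A,\ A\cap B\subseteq B\}$. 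It matches because the images of $a$ and $b$ in $F(A\cup B)$ become equal after the injective map to $FC$. The amalgamation gives the required element of $F(A\cap B)$, and injectivity gives uniqueness, so the square is a pullback.
\end{itemize}

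The main obstacle is pinning down precisely which notion of intersection preservation is intended (pullbacks of monos versus all wide finite intersections, and the empty case) and making the matching-family condition reduce cleanly to agreement at $B_i\cup B_j$. The remaining work is routine bookkeeping in a poset.
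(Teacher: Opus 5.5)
Your proof is correct and follows the paper's argument. Separatedness is exactly injectivity of the maps $F(A\subseteq B)$. For the converse you, like the paper, transfer agreement from $FC$ to $F(A\cup B)$ by injectivity and then amalgamate $(a,b)$ over the two-element $\mathcal{O}$-cover $\{A\cap B\subseteq A,\ A\cap B\subseteq B\}$. For the forward direction, the paper simply asserts the case of covers with more than two members; your induction on the size of the cover fills this in, and it needs injectivity to see that the shortened family still matches.

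One small slip: your opening formulation, with ambient object $F(A\cup B)$, does not by itself imply injectivity of the restriction maps, so it is not equivalent to your second formulation (injectivity is precisely what separatedness contributes). Since you only ever use and prove the second formulation, the argument is unaffected.
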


\subsection{The Nominal Landscape}\label{sec:nominal-world}
We can summarize the results of this section as follows. Recall that we consider the index categories below, all of which have finite sets as objects as morphisms the following maps:

\medskip
\hspace{-5pt}\begin{tabular}{c c c c c c}
  $\Idcat$ & $\Inc$ & $\B$ & $\I$ & $\S$ & $\F$ \\
  identities & inclusions & bijections & injections & surjections & all
\end{tabular}

\begin{equation}
  \label{eq:index-cats}
  \begin{tikzcd}
    \Idcat \rar[hook]{} \dar[hook]{}   & \B \rar[hook]{} \dar[hook]{}    & \S \dar[hook]{}   \\
    \Inc \rar[hook]{}  & \I \rar[hook]{}  & \F
  \end{tikzcd}
\end{equation}

\noindent Their presheaf categories are related as follows:
  \begin{prop}[{see e.g.~\cite[Expl.\ 4.2.7(b)]{j02}}]
    For every bijective-on-objects functor \(G \colon \A \rightarrow \D\) the adjunction \(\PSh G \colon \PSh \A \dashv \PSh \D \cocolon G^{*}\) is monadic.
  \end{prop}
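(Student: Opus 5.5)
The plan is to apply Beck's monadicity theorem, in its crude (or reflexive-coequalizer) form, to the right adjoint $G^{*}\colon \PSh \D \to \PSh \A$. We already know from \Cref{rem:psh-functor} that $\PSh G = \Lan_G$ is left adjoint to $G^*$, so it suffices to check two things: $G^*$ reflects isomorphisms, and $\PSh \D$ has, and $G^*$ preserves, coequalizers of $G^*$-split pairs. In fact we will show that $G^*$ preserves \emph{all} colimits, which gives the second condition outright.

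First, colimits in $\PSh \D$ and $\PSh \A$ are computed pointwise. For $X \in \PSh \D$ and $A \in \A$ we have $(G^*X)(A) = X(GA)$, so colimits are preserved: $G^*(\colim_i X_i)(A) = (\colim_i X_i)(GA) = \colim_i X_i(GA)$. (Limits are preserved as well, which is consistent with $G^*$ also having a right adjoint $\mathsf{Ran}_G$.) In particular $\PSh \D$ is cocomplete and $G^*$ preserves coequalizers. This step is routine.

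Second, conservativity is the point where bijectivity on objects enters. Let $\alpha\colon X \to Y$ be a natural transformation in $\PSh \D$ such that $G^*\alpha$ is invertible. Then every component $\alpha_{GA}$ is a bijection. Since $G$ is surjective on objects, every $D \in \D$ is of the form $GA$, so all components $\alpha_D$ are bijections. A natural transformation whose components are all bijections is a natural isomorphism, because the pointwise inverses are automatically natural. Hence $G^*$ reflects isomorphisms. Note that we only need surjectivity on objects, so injectivity is harmless surplus.

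Finally, we invoke the monadicity theorem: $G^*$ is a right adjoint (\Cref{prop:kan-props}), it is conservative, and it creates coequalizers of $G^*$-split pairs. The last point follows because such coequalizers exist in $\PSh \D$, are preserved by $G^*$, and conservativity then lets us lift them uniquely up to isomorphism. Therefore the adjunction $\PSh G \dashv G^*$ is monadic. I expect no real obstacle here. The only delicate point is to use the precise form of Beck's theorem: preservation plus reflection of isomorphisms suffices when the domain has all coequalizers.
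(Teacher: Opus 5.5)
Your proof is correct. The paper gives no proof of its own and only cites Johnstone; your argument is the standard one behind that citation. $G^*$ preserves all colimits because they are computed pointwise, it reflects isomorphisms because $G$ is surjective on objects, and Beck's theorem then gives monadicity of $\Lan_G \dashv G^*$ (as you note, only surjectivity on objects is used).
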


Combined with the literature we obtain a complete picture of the various categories of models for abstract syntax and substitution, and of the formal connections between them:
\newcommand{\dequiv}{\dar[phantom]{\rotatebox{90}{$\simeq $}}}
\[
  \begin{tikzcd}[sep=large]
    \Suppsetpres \dequiv \dar[phantom,xshift=-25pt]{\text{\small{Sec.\ \ref{sec:supp-set}}}} & \Nompres \dequiv \lar[swap,dashed]{} \dar[phantom,xshift=-25pt]{\text{\small{Sec.\ \ref{sec:supp-set}}}} & \Relevpres \dequiv \lar[swap,dashed]{} \dar[phantom,xshift=-25pt]{\text{\small{Sec.\ \ref{sec:relev}}}}  & \\
    \PSh \Idcat \dar[squiggly]{} \dar[phantom,xshift=-25pt]{\text{\small{Sec.\ \ref{sec:supp-set}}}}     & \PSh \B \lar[swap,dashed]{\text{\cite{f01}}} \dar[phantom,xshift=-13pt]{\text{\small{\cite{f01}}}}   \dar[squiggly]{}   & \PSh \S \dar[squiggly]{} \lar[swap,dashed]{\text{\cite{f01}}}   \dar[phantom,xshift=-13pt]{\text{\small{\cite{sz15}}}}      & \PSh \F \lar[dashed,swap]{\text{\cite{f01}}}  \\
    \PShpb \Inc \dequiv \dar[phantom,xshift=-25pt]{\text{\small{Sec.\ \ref{sec:supp-set}}}} & \PShpb \I \dequiv \lar[dashed]{} \dar[phantom,xshift=-13pt]{\text{\small{\cite{j02}}}}    & \PShpi \F \dar[phantom,xshift=-25pt]{\text{\small{Sec.\ \ref{sec:relev}}}} \ar[no head,pos =.15,swap]{dd}{\rotatebox{90}{$\mathlarger\sim$}}[pos=.82]{\rotatebox{90}{$\mathlarger\sim$}} \rar[hook]{} \lar[swap]{(1)} & \PShpb \F \dar[phantom,xshift=-13pt]{\text{\small{\cite{s07}}}}   \uar[hook]{} \dequiv  \\
    \Sh_{\mathsf{sep}}(\mathcal{I}, \Sh(\mathcal{O})) \dequiv \dar[phantom,xshift=-25pt]{\text{\small{Sec.\ \ref{sec:supp-set}}}} & \Sh \mathcal{I}_\I \dequiv \lar[dashed]{} \dar[phantom,xshift=-13pt]{\text{\small{\cite{p13}}}}       & \phantom{X} & \Sh \mathcal{I}_\F \dar[phantom,xshift=-13pt]{\text{\small{\cite{gh08}}}}  \dequiv \ar[dashed, swap,pos=.3,crossing over]{ll}{(2)}  \\
    \Suppset & \Nom \lar[dashed,swap]{(3)} & \Relev \lar[swap]{(1')} \rar[hook]{} & \ar[dashed,shiftarr={yshift={-13pt}}]{ll}{(4)} \Ren
  \end{tikzcd}
\]

Dashed arrows $(\dashrightarrow)$ denote monadic right adjoints, and squiggly arrows $(\rightsquigarrow)$ left adjoints into Kleisli categories.
The second row was investigated by Fiore and Menni~\cite{f01,fm05}, and the sheaf categories on $\I$ and $\F$ in Staton's thesis~\cite{s07}, where monadicity of (2) is proven.
An explicit description of the left adjoint (3) and its monadicity is given in by Wissmann~\cite[Def.\ 4.3]{w23} and Moerman and Rot~\cite[Def.\ 4]{m20} gave an explicit construction of (4).
While the functors (1) and (1') are not monadic they still have left adjoints since the left adjoint of (4) corestricts to relevance sets.

\section{Conclusion and Future Work}
\label{sec:concl-future-work}

We have shown how to uniformly derive the closed monoidal substitution structure for presheaf categories over
different types of (untyped) contexts, as well as for different types of nominal sets. In the nominal case, the substitution tensors fill a notable gap in the theory of nominal sets. In addition, we
have exposed new connections between nominal and presheaf models, extending the nominal
landscape. While this is a rather technical contribution to begin with, it is intended as a
starting point for further development.

All ingredients are now at hand to develop the initial semantics for binding signatures as introduced by Fiore et al.~\cite{ftp99} at the level of nominal sets, for example, the characterization of the nominal set of $\lambda$-terms (with its substitution structure) as an initial $H$-monoid for an endofunctor $H$.
A related application is the semantics of higher-order recursion schemes~\cite{amv11} in which \emph{infinite} $\lambda$-terms form the initial \emph{completely iterative} $H$-monoid.
It is not clear yet how straightforward these applications will be, as the substitution tensor we described on $\Nom$ is \emph{affine}:
terms substituted for different variables have disjoint supports.
If this poses problems, then a possible solution might be to use renaming sets or the ``captureful'' substitution tensor (\Cref{rem:nom-sub-cap}).
Overall, it is now possible to compare the different axiomatizations of substitutions as given in~\cite{ftp99} in the nominal setting, in particular single-variable substitution as studied in~\cite{gm08,g16,g09} with the multi-variable monoid based approach from~\cite{ftp99}.

We have instantiated our theory to presheaves over untyped contexts. In remains to study whether it also applies to more involved (e.g.~dependently or parametrically) \emph{typed} contexts, and how it integrates with existing approaches to typed nominal sets.

Fiore and Ranchod~\cite{fr25} have recently developed the theory of \emph{single-variable} substitution for the types of contexts ($\B$, $\I$, $\S$, $\F$) we consider.
As an application of the extended nominal landscape, one could try to transfer this theory to the appropriate nominal models.

Finally, we aim to apply our results to the recently introduced \emph{higher-order abstract GSOS} framework~\cite{gmstu23}. In the latter, the operational semantics of higher-order languages such as the $\lambda$-calculus are modeled via suitable bifunctors on presheaf categories, and their definition involves the internal hom of the substitution tensor. With substitution tensors on nominal-like sets now being available, a technically simplified description of higher-order operational semantics over these models should be in reach.

\bibliography{bibliography}

\clearpage

\appendix
\renewcommand\theequation{A.\arabic{equation}}
In this appendix we provide full proofs and additional details.

\section*{Details for \Cref{sec:mon-struct-from-act}}

\appendixproof{thm:gen-mon}
  We have the following situation:
  \begin{equation}
    \label{eq:sit}
    \begin{tikzcd}
      \V \rar{J} \drar{\y} & \A \dar[xshift=5pt]{\nu} \dar[phantom]{\dashv} \\
      & \PSh \V^\op \uar[xshift=-5pt]{L}
    \end{tikzcd}
  \end{equation}
  where we write abbreviate \[\nu C = \nu_J C = \A(J(-), C)\] and
  \[LX = \Lan_\y(J) X = \int^A XA \cdot JA.\]
  The outer triangles of \eqref{eq:sit} commute:
  We have \[L \y A = \Lan_\y J (\y A) \cong J A\] since $\y$ is fully faithful,
  and \[\nu (JA) = \A(J-, JA) \cong \y A\] since $J$ is fully faithful.
  Moreover, since $J$ is dense we have $L \nu \cong \Id$.
  Recall that the monoidal structure $\tl_J$ is given by
  \[C \tl_J D = \Lan_J((-) \tl D)C \cong \int^{A \in \V} \nu(C)A \cdot A \tl D.\]
  Since $J$ is fully faithful we have
  \[JA \tl_J C = \Lan_J((-) \tl C)(J A) \cong A \tl C.\]
  We also have by definition of $J$ a natural isomorphism
  \[J(A \otimes B) = (A \otimes B) \tl J_I \cong A \tl (B \tl J_I) = A \tl JB.\]

  \noindent \emph{Units.} The right unit isomorphism is given by
  \begin{align}
    &\; C \tl_J JI \notag \\
    \cong&\; \Lan_J((-) \tl JI)C \notag \\
    \cong&\; \Lan_J(J(- \otimes I))C \label{step:ul1} \\
    \cong&\; \Lan_J (J)C \label{step:ul2} \\
    \cong&\; C,\label{step:ul3}
  \end{align}
  where we used compatibility for \eqref{step:ul1}, the unit isomorphism $\Id_\V \cong (-) \otimes I$ for \eqref{step:ul2} and density for \eqref{step:ul3}.
  The left unit isomorphism uses the unit of the action:
  \[JI \tl_J D \cong I \tl D \cong D.\]

  \noindent \emph{Associativity.} The associativity isomorphism is constructed  via the coend calculations
  \begin{align}
    &\; C \tl_J (D \tl_J E) \notag \\
    = &\; \int^A \nu(C) A \cdot A \tl (D \tl_J E) \notag \\
    \cong&\; \int^A \nu(C) A \cdot (A \tl D) \tl_J E \label{step:ass1} \\
    = &\; \int^A \nu(C) A \cdot \big(\int^B \nu (A \tl D) B \cdot B \tl E\big) \notag \\
    \cong &\; \int^B \big(\int^A \nu(C) A \cdot \nu (A \tl D) B \big) \cdot B \tl E \label{step:ass2} \\
    \cong &\; \int^B \nu\big(\int^A \nu(C) A \cdot A \tl D \big)B \cdot B \tl E \label{step:ass3} \\
    \cong &\; (C \tl_J D) \tl_J E, \notag
  \end{align}
  where we use associativity as in \eqref{eq:J-ass} for \eqref{step:ass1}; ``Fubini'' for coends and their distributivity over tensors for \eqref{step:ass2}; and well-behavedness of $J$ for $(-) \tl D$ for \eqref{step:ass3}, as it is precisely step \eqref{eq:wb-step} in the following calculation:
  \begin{align}
    &\; \int^A \nu(C)A \cdot \nu(A \tl D)B \notag \\
    =&\; \int^A \A(JA, C) \cdot \A(JB, A \tl D) \notag \\
    \cong&\; \Lan_J (\A(JB, (-) \tl D))C \notag \\
    \cong&\; \A(JB, \Lan_J((-) \tl D)C) \label{eq:wb-step}\\
    \cong&\; \A(JB, \int^A \A(JA, C) \cdot A \tl D) \notag \\
    =&\; \nu(\int^A \nu(C)A \cdot A \tl D)B \notag \qedhere
  \end{align}


\detailsfor{rem:coherence}
Assume that the conditions of \Cref{thm:gen-mon} are satisfied.
We denote the coend cone of $C \tl_J D$ at $f \colon JA \rightarrow C$ by
\[\kappa_f \colon A \tl D \rightarrow  C \tl_J D = \int^{A}\A(JA, C) \cdot A \tl D.\]
Since $(A \tl C) \tl_J D \cong \int^B \A(JB, A \tl C) \cdot B \tl D$ is a coend, the isomorphism \aiso is induced by a family
  \begin{equation}
    \label{eq:a-comp}
    \aiso_f \colon B \tl D \rightarrow A \tl (C \tl_J D), \qquad \text{for } f \colon JB \rightarrow A \tl C
  \end{equation}
  natural in $D$ and satisfying
  $\aiso_{f \cdot g} = \aiso_f  \cdot (g \tl D)$ {for all } $g \colon B' \rightarrow B$.
  Recall that it induces $\alpha \colon (C \tl_J D) \tl_J E \rightarrow C \tl_J (D \tl_J E)$ by
  \[
    \begin{tikzcd}
      & \int^B \A(JB, \int^A \A(JA, C) \cdot A \tl D) \cdot B \tl D \dar{\mathbf{wb}} \\
      B \tl E \urar[end anchor=west]{\kappa_{\kappa_f \cdot g}} \rar{\kappa_f \cdot \kappa_g} \drar[end anchor = west]{\kappa_f \cdot \aiso_g} & \int^A \A(JA, C) \cdot \int^B \A(JB, A \tl D) \cdot B \tl E \dar{\int^A \A(JA, C) \cdot \aiso}\\
      & \int^A \A(JA, C) \cdot A \tl \int^B \A(JB, D) \cdot B \tl E
    \end{tikzcd}
  \]
  for $f \colon JA \rightarrow C, g \colon JB \rightarrow A \tl D$ and where $\mathbf{wb}$ is the isomorphism due to the well-behavedness.

  Then $(\A, \tl_J, JI, \alpha, \lambda, \rho)$ is a monoidal category if the family $\aiso_{(-)}$ satsifies additional compatibility conditions:
  first, for the triangle condition $\aiso$  should extend the associator of the action:
  \begin{equation}
    \label{eq:a-cond-u}
    \begin{tikzcd}
       & A \tl (JI \tl_J D) \\
      (A \otimes I) \urar[xshift=-8pt]{\aiso_{\mathsf{as}}} \tl D \rar{\mathsf{as}} & A \tl (I \tl D) \uar{A \tl \kappa_{\id_I}},
    \end{tikzcd}
  \end{equation}
  where $\mathsf{as}$ is the associator of the action $\tl$.
  If this condition holds, then the triangle axiom holds by the action conditions.
  Second, $\aiso$ must satisfy the associativity condition
  \begin{equation}
    \label{eq:a-cond}
    \begin{tikzcd}
      B' \tl F \rar{\aiso_r} \dar{\aiso_{\aiso_q \cdot r}} & B \tl (E \tl_J F) \dar{\aiso_q} \\
      A \tl ((D \tl_J E) \tl_J F) \rar{A \tl \alpha} & A \tl (D \tl_J (E \tl_J F))
    \end{tikzcd}
  \end{equation}
  for all $A, B, B' \in \V$, $D, E, F \in \A$ and $q \colon J B \rightarrow A \tl D$ and $r \colon J B' \rightarrow B \tl E$.
  The pentagon identity now follows since \eqref{eq:a-cond} is just the restricted pentagon
  \[
    \begin{tikzcd}[column sep=tiny]
      ((A \tl D) \tl_J E) \tl_J F \rar{} \dar{} & (A \tl D) \tl_J (E \tl_J F) \drar{} & \\
      (A \tl (D \tl_J E) \tl_J F) \rar{} & A \tl ((D \tl_J E) \tl_J F) \rar{} & A \tl (D \tl_J (E \tl_J F))
    \end{tikzcd}
  \]
  reformulated in terms of $\aiso$.

\appendixproof{thm:gen-mon-clo}
  We assume that \eqref{eq:asm-ra} holds and show that $D \dwand E$ as above is an internal hom.
  Then we have a natural isomorphism
  \begin{align*}
    &\; \A(C \tl_J D, E) \\
    \cong &\; \A(\int^A \nu_J(C) A \cdot A \tl D, E) \\
    \cong &\; \int_A \Set(\nu_J(C)A, \A(A \tl D, E)) \\
    \cong &\; \Nat(\nu_J(C), \A((-) \tl D, E)) \\
    \cong &\; \Nat(\nu_J(C), \nu_J \Lan_J\A((-) \tl D, E)) \\
    \cong &\; \A(C, D \dwand E),
  \end{align*}
  where we used \eqref{eq:asm-ra} in the second-to-last step, and density of $J$ in the last step.

\section*{Details for \Cref{sec:psh-substitution}}

\appendixproof{prop:reindexing}
Note first that the reindexing map is given on fibers as follows:
\[
  \begin{tikzcd}
    f^{*}B \rar[equals]{} & \sum_{x}B_{f(x)} \rar[dashed]{r_{f}} & \sum_{y}B_{y} \rar[equals]{} & B \\
    & B_{f(x)} \uar{\iota_{x}} \urar[swap]{\iota_{f(x)}} & &
  \end{tikzcd}
\]

  Let \(\ctx\) be stable under pullbacks. Then \(\ctx\) is closed under products, for if \(f \colon A \rightarrow A', g \colon B \rightarrow B'\) are in \(\ctx\), then we consider the bundle \(\Delta B'\) over \(A'\), whose fibers are all equal to \(B'\), with base change \(f^{*}\Delta{B'}\).
  We then get the map \(f \times g\) as
  \[
    \begin{tikzcd}[column sep = small]
      \sum_{a} B \rar{\sum_{a} g} & \sum_{a} B' \rar[phantom]{\cong} & \sum_a f^{*}\Delta B'  \rar{r_{f}} & \sum_{a'} {B}    \\
      A \times B \uar[phantom]{\cong} \ar[dashed]{rrr}{f \times g} &  & & A' \times B' \uar[phantom]{\cong}
    \end{tikzcd}
  \]
  For the prime condition only the implication \(f + g \in \ctx \Rightarrow f, g \in \ctx\) has to be proven since we assume that \(\ctx\) is closed under sums.
  Here it suffices to show that if \(! + g \colon A + B \rightarrow 1 + B'\) is in \(\ctx\), then the morphism \(! \colon A \rightarrow 1\) is in \(\ctx\), as every map between (finite) sets can be written as a (finite) sum of constant ones.
  \[X \cong \sum_{y \in Y} h^{-1}(y) \rightarrow \sum_{y \in Y} \{y\} \cong Y. \]
  But \(! \colon A \rightarrow 1\) is the morphism \(r_{! + g}\) in
  \[
    \begin{tikzcd}
      A \rar{! = r_{! + g}} \dar[hook]{} & 1 \dar[hook]{} \\
      A + B \rar{! + g} & 1 + B',
    \end{tikzcd}
  \]
  which is in \(\ctx\) by assumption.

  For the other direction assume \(\ctx\) is closed under products and prime, and let \(B \rightarrow Y\) be a bundle over \(B\) with \(f \colon X \rightarrow Y\).
  Then we obtain \(r_{f} \in \ctx\) as follows:
  \[
    \begin{tikzcd}[column sep=small]
      f^{*}{B} \dar[phantom]{\cong} \ar[dashed]{rrr}{r_{f}} &  &  & {B} \dar[phantom]{\cong}  \\
      \coprod_{x}B_{f(x)} \rar[phantom]{\cong}  & \coprod_{y}f^{-1}(y) \times B_{y} \rar[every label/.style={yshift=8pt},font=\small]{\coprod_{y}f_{y} \times 1} & \coprod_{y} \{y\} \times B_{y} \rar[phantom]{\cong} & \coprod_{y} B_{y},
    \end{tikzcd}
  \]
  where \(f_{y} \colon f^{-1}(y) \rightarrow \{y\}\) are the restrictions of \(f\), which are in \(\ctx\) since it is prime. Note that the product condition is used in the middle arrow.

\appendixproof{thm:day-pow-bifunct}
   Given \(f \colon A' \rightarrow A\) in $\V$ and \(X \in \PSh{ \ctx }\), we construct the morphism \(f \tl X \colon A  \tl X \rightarrow A' \tl X\) by
  \[
    \begin{tikzcd}
      \int^{B \rightarrow A} \y B \cdot \prod_{a} X B_{a} \rar[dashed]{f\tl X} & \int^{B' \rightarrow A'} \y B' \cdot \prod_{a'} X B_{a'} \\
      \y B \cdot \prod_{a} XB_{a} \uar{\kappa_{B}} \rar{\y r_{f} \cdot p_{f}} & \y(f^{*}B) \cdot \prod_{a'} X (f^{*}B)_{a'} \uar{\kappa_{f^{*}B}}
    \end{tikzcd}
  \]
  The morphisms $r_f$ are in $\ctx$ by contextuality, and
  $f\tl X$ it is well-defined since the family \(\y r_f \cdot p_f\) over all \(B \rightarrow A\) is natural.

\appendixproof{lem:subst-props}
Most of these proofs are simple using the equivalence $A \tl X \cong X^{\oplus A}$, but we think it may also be instructive to use representation \eqref{eq:sub}.
\begin{enumerate}
  \item By the above isomorphism we have \[A \tl \y 1 \cong (\y 1)^{\oplus A} \cong \y (\underbrace{1 + \cdots + 1}_{A \text{ times}}) \cong \y A.\]
  \item The slice $\F / 0$ is a singleton, so \[X^{*0} = \int^{B \rightarrow 0} \y 0 \cdot \prod_{x \in 0}XB_{0} \cong \y 0 \cdot 1 \cong \y 0.\]
  \item Here we use extensiveness of the category \F we have
        \begin{align*}
          &(A + A') \tl X \\
          &\cong \int^{C \rightarrow A + A'} \y C \cdot \prod_{\alpha \in A + A'} XC_{\alpha} \\
          &\cong \int^{B \rightarrow A, B' \rightarrow A'} \y (B + B') \cdot \prod_{a \in A} XB_{a} \cdot \prod_{a' \in A'} XB'_{a'}  \\
          &\cong (A \tl X) * (A' \tl X)
        \end{align*}
  \item Under the equivalence \(\F / 1 \cong \F\) this is just a reformulation of \ref{eq:y}(3).
  \item
        Using that \(A \tl X \cong X^{\oplus A}\) we simply have by \iref{lem:subst-props}{sum}
        \[(A \times A') \tl X \cong (\coprod_{A} A') \tl X \cong  (A' \tl X)^{\oplus A} \cong A \tl A' \tl X. \]
  \item This follows from Items \ref{lem:subst-props:right-1} and \ref{lem:subst-props:prod}.
\end{enumerate}

\appendixproof{prop:day-power-lan}
  By a lengthy coend computation we have:
  \begin{align*}
    &\;\Lan_\iota(A \tl X) \\
    \cong&\; \int^{B \in \ctx} (A \tl X)B \cdot \y_{\ctx'}(\iota B)  \\
    \cong&\; \int^{B \in \ctx}\int^{C \rightarrow A} \prod_a X C_a \cdot \y_{\ctx}(C)B \cdot \y_{\ctx'}(\iota B)  \\
    \cong&\; \int^{C \rightarrow A} \prod_a X C_a \cdot \y_{\ctx'}(\iota C)  \\
    \cong&\; \int^{C \rightarrow A} \prod_a X C_a \cdot \y_{\ctx'}(\coprod_a \iota C_a) \\
    \cong&\; \int^{B' \rightarrow A}\int^{C \rightarrow A} \y_{C'}B' \cdot \prod_a \y_{C'}(\iota C_a)B_a' \cdot X C_a \\
    \cong&\; \int^{B' \rightarrow A} \y_{C'}B' \cdot \prod_a \big( \int^{C_a} \y_{C'}(\iota C_a)B_a' \cdot X C_a  \big) \\
    \cong&\; \int^{B' \rightarrow A} \y_{C'}B' \cdot \prod_a (\Lan_\iota \, X) B_a' \\
    \cong&\; \iota A \tl (\Lan_\iota\, X) &
  \end{align*}

\appendixproof{prop:sub-pres-day}
We compute
\begin{align}
  & (X \sub Z) \oplus (Y \sub Z) \notag \\
  \cong&\; \int^{A, B} \y(A + B) \cdot (X \sub Z)A \cdot (Y \sub Z)B  \notag \\
  \cong&\; \int^{A, B, C, D} \y(A + B) \cdot (XC \cdot (C \tl Z)A) \cdot (YD \cdot (D \tl Z)B) \notag \\
  \cong&\; \int^{A, B, C, D} \y A \oplus \y B \cdot (XC \cdot (C \tl Z)A) \cdot (YD \cdot (D \tl Z)B) \notag \\
  \cong&\; \int^{C, D} (C \tl Z) \oplus (D \tl C) \cdot XC \cdot YD   \label{sbd1}\\
  \cong&\; \int^{C, D} ((C + D) \tl Z) \cdot XC \cdot YD   \label{sbd2}\\
  \cong&\; \int^{A, C, D} \y(C+D)A \cdot (A \tl Z) \cdot XC \cdot YD   \label{sbd3}\\
  \cong&\; \int^{A} (X\oplus Y)A \cdot (A \tl Z)   \notag \\
  \cong&\; (X\oplus Y) \sub Z \notag \qedhere
\end{align}
where we use Yoneda twice for \eqref{sbd1}; \Cref{lem:subst-props} for \eqref{sbd2}; and Yoneda again for \eqref{sbd3}.

\appendixproof{thm:subst-monoidal}
  We apply \Cref{thm:gen-mon} to the following setting:
  \begin{itemize}
    \item $\V$ is the dual of the monoidal category $(\ctx, \times, 1)$; note that $\ctx$ is closed under $\times$ by \Cref{prop:reindexing}.
    \item $\A$ is the category $\PSh{ \ctx }$.
    \item $J=\y\colon \ctx^\op\to \PSh{ \ctx }$ is the Yoneda embedding.
    \item The action $\tl$ is that of \Cref{def:sub}; it is an action by \Cref{lem:subst-props}.
  \end{itemize}
We know by \Cref{expl:wb} that $\y$ is well-behaved.  Compatibility of the action with $\y$ holds by \Cref{lem:subst-props}, and associativity follows from \Cref{cor:conv-rep-sub} together with \Cref{prop:sub-pres-day}:
\[A \tl (X \sub Y) \cong \bigoplus_{a\in A} X \sub Y\cong (\bigoplus_{a\in A}X) \sub Y \cong (A \tl X) \sub Y.\]

Closedness and the description of $Y\dwand Z$ are immediate from \Cref{thm:gen-mon-clo}, using that $\nu_\y\cong \Id$ and $\Lan_\y\y\cong \Id$.

\appendixproof{thm:sub-pres}
Given \(X, Y \in \PSh{ \ctx }\) we compute
\begin{align}
  & (\Lan_\iota X) \sub (\Lan_\iota Y) \notag \\
  = & \int^{B' \in \ctx'} (\Lan_\iota X)B' \cdot (B' \tl (\Lan_\iota Y)) \notag \\
  \cong & \int^{B' \in \ctx', A \in \ctx} \y_{\ctx'}(\iota A, B') \cdot XA \cdot (B' \tl (\Lan_\iota Y))  \label{psh1}\\
  \cong & \int^{A \in \ctx}  XA \cdot (\iota A \tl (\Lan_\iota Y))  \label{psh2} \\
  \cong & \int^{A \in \ctx}  XA \cdot \Lan_\iota (A \tl Y) \label{psh3} \\
  \cong & \Lan_\iota (\int^{A \in \ctx}  XA \cdot (A \tl Y)) \label{psh4}  \\
  = & \Lan_\iota (X \sub Y) \notag
\end{align}
with the definition of $\Lan_\iota$ for \eqref{psh1}; Yoneda for \eqref{psh2}; {\Cref{prop:day-power-lan}} for \eqref{psh3}; {\(\Lan_\iota\) preserves colimits} for \eqref{psh4}.

We moreover have \(\Lan_\iota (\y_{\ctx} 1) = \y_{\ctx'} 1\) since \(\iota 1 = 1\) and \(\Lan_\iota \) lifts \(\iota\), see \Cref{rem:psh-functor}.

\appendixproof{prop:sub-monoids-char}
  Given a \(\sub\)-monoid \(M\) we define the monad $T_M$ by right action: \(T_{M} = (-) \sub M\). This monad preserves colimits as $(-) \sub M$ is a left adjoint, and it preserves Day convolution by \Cref{prop:sub-pres-day}.

  Given monad \(T\) on $\PSh{ \ctx }$ preserving colimits and $\oplus$, we define a \(\sub\)-monoid structure on \(M_T = T \y 1\).
  The unit is simply $e = \eta_{\y 1} \colon \y 1 \rightarrow T \y 1$.
  Note that since $T$ preserves $\oplus$ we have
  \begin{equation}
    \label{eq:T-pres-oplus}
    A \tl TX \cong (TX)^{\oplus A} \cong T(X^{\oplus A}) \cong T(A \tl X),
  \end{equation}
  and therefore
  \begin{align*}
    &\;(T \y 1) \sub (T \y 1) \\
    \cong&\; \int^{A} (T \y 1) A \cdot A \tl (T \y 1) \\
    \cong&\; T \big( \int^{A} (T \y 1) A \cdot A \tl \y 1 \big) & \text{ \(T\) preserves \(\int\) and \eqref{eq:T-pres-oplus}} \\
    \cong&\; T \big( \int^{A} (T \y 1) A  \cdot \y A \big) & \text{ \Cref{lem:subst-props}} \\
    \cong&\; TT \y 1 & \text{Yoneda.}
  \end{align*}
  We obtain the multiplication by composing this isomorphism with the multiplication \(\mu_{\y 1} \colon TT \y 1 \rightarrow T \y 1\)
  Using Yoneda and \Cref{lem:subst-props} we verify
  \[T_{M_{T}} X \cong \int^{A} XA \cdot A \tl (T \y 1) \cong T \big( \int^{A} XA \cdot A \tl \y 1 \big) \cong T X, \]
  and also \(M_{T_{M}} = T_{M} \y 1 = (\y 1) \sub M \cong M\). \qedhere

\appendixproof{prop:unif-sub-embed}
We show that the morphism is given by
  \begin{align*}
    \label{eq:unif-sub-embed}
    \int^{A, B}\!\!\!\!\ctx(A \times B, C) \cdot XA \cdot YB &\rightarrow  \int^{A}\!\!\! XA \cdot \int^{E \rightarrow A}\!\!\!\! \ctx(E, C) \cdot \prod_{a}YE_{a} \\
    \kappa_{A, B}(j, x, y) &\mapsto \kappa_{A}(x, \kappa_{\pi_{A}}(j, (y)_a)),
  \end{align*}
  where \(\pi_{A} \colon A \times B \rightarrow A\) is the product projection and \((y)_a \in \prod_a YB\) is the constant $A$-indexed family with value $y\in YB$.

We first have to show that $\varphi$ is well-defined:
Let $A, B \in \F$,  $j \colon A' \times B' \rightarrow C, x \in XA, y \in YB$ and $f \colon A \rightarrow A', g \colon B \rightarrow B'$.
First note that for constant bundles $\pi_{A'} \colon A' \times B' \rightarrow A'$ we have $f^{*}(A' \times B') = (A \times B' \rightarrow A)$, and $r_{f} = f \times 1$.
Then
\begin{align*}
  &  \varphi(\kappa_{A', B'}(j,  f \cdot x, g \cdot y)) \\
  =\; & \kappa_{A'}(f \cdot x, \kappa_{\pi_{A'}}(j, (g \cdot y)_{a})) \\
  =\; & \kappa_{A}(x, \kappa_{A \times B' \rightarrow A}(j \cdot (f \times 1), (g \cdot y)_{a})) \\
  =\; & \kappa_{A}(x, \kappa_{{A \times B \rightarrow A}}(j \cdot (f \times 1) \cdot (1 \times g), (y)_{a})) \\
  =\; & \varphi(\kappa_{A, B}(j \cdot (f \times g), x, y)).
\end{align*}
It is obvious that $\varphi$ is natural, so it remains to show that it is a morphism of monoidal structures.
We just give the proof for the associators (the proof for the unitors is simple), showing that the following diagram commutes for all presheaves $X, Y, Z \in \PSh \F$:
\begin{equation}
  \label{eq:ass-mor}
  \begin{tikzcd}
    (X \otimes Y) \otimes Y \rar{\alpha} \dar{\varphi \otimes 1} &  X \otimes (Y \otimes Z) \dar{1 \otimes \varphi} \\
    (X \sub Y) \otimes Y \dar{\varphi} & X \otimes (Y \sub Z) \dar{\varphi} \\
    (X \sub Y) \sub Z \rar{\alpha} & X \sub (Y \sub Z)
  \end{tikzcd}
\end{equation}
Recall for this that the associator
\begin{align*}
  &\; \int^{C, D} \y(C \times D)E \cdot (\int^{A, B} \y(A \times B) \cdot XA \cdot YB)C \cdot ZD \rightarrow\\
  &\; \int^{C', D'} \y(C' \times D' )E \cdot XC' \cdot (\int^{A', B'} \y(A' \times B') \cdot YA' \cdot ZB')D'
\end{align*}
for Day convolution $\otimes$ sends an element
\[[f \colon C \times D \rightarrow E, [g \colon A \times B \rightarrow C, x \in XA, y \in YB]_{A, B}, z \in ZD ]_{C, D}\]
to the element
\[[f(g \times 1) \colon A \times (B \times D) \rightarrow E, x, [1_{B \times D}, y, z]_{B, D}]_{A, B \times D},\]
where we now denote, to ease notation, the coend coprojections by $[-]_A = \kappa_A$ or just $[-]$, and the currying of a map $f \colon A \times B \rightarrow C$ at $a \in A$ by $f_a \colon B \rightarrow C$.
The associator
\[\int^{A} (\int^B XB \cdot Y^B)A \cdot Z^AE \rightarrow \int^{B} XB \cdot (\int^{A} YA \cdot Z^A)^B E\]
is defined by
\[[[x, (y_b)_b]_B, (z_a)_a]_A \mapsto [x, ([y_b, (z_a)_a]_{A})_b]_B.\]
So for the lower-left path of \eqref{eq:ass-mor}
we have
\begin{align*}
  &\; \alpha\varphi(\varphi \otimes 1)[f, [g, x, y], z] \\
  =&\; \alpha\varphi [f, [x, (g_a \cdot y)_a]_A, z] \\
  =&\; \alpha[[x, (g_a \cdot y)_a]_A, (f_c \cdot z)_c]_C \\
  =&\; [x, ([g_a \cdot y, (f_c \cdot z)_c]_C)_a]_A
\end{align*}
while for the upper-right path we have, writing $1_b \colon D \rightarrow B \times D, d \mapsto (b, d)$ that
\begin{align}
  &\; \varphi(1 \otimes \varphi)\alpha [f, [g, x, y, z]] \notag \\
  =&\; \varphi(1 \otimes \varphi)[f(g \times 1), x, [1, y, z]] \notag \\
  =&\; \varphi [f(g \times 1), x, [y, (1_b \cdot z)_b]_B] \notag \\
  =&\; [x, ((f(g \times 1))_a \cdot [y, (1_b \cdot z)_b])_a ]_A \notag \\
  =&\; [x, ([y, ((f(g \times 1))_a \cdot 1_b \cdot z)_b]_B)_a ]_A \label{step:unisub} \\
  =&\; [x, ([y, (f_{g_a(b)} \cdot z)_b]_B)_a ]_A \notag \\
  =&\; [x, ([g_a \cdot y, (f_c \cdot z)_c]_C)_a]_A \label{step:reindex}
\end{align}
where \eqref{step:unisub} is the definition of the action of the substitution tensor, and \eqref{step:reindex} is reindexing of substitutions.

\appendixproof{thm:com-mon}
  Let $\mathcal{T}$ be a finitary monad on $\Set$. Its restriction along $\F \hookrightarrow \Set$ yields the corresponding $\sub$-monoid $(T, m, e)$ in $\PSh \F$.
  Let $T_\otimes$ be the associated $\otimes$-monoid with the same carrier, but with multiplication induced by $\varphi$, viz.\ \(m_\otimes = m \cdot \varphi \colon T \otimes T \rightarrow T \sub T \rightarrow T\).
  Intuitively, the multiplication $m_\otimes$ can only collapse ``uniform'' terms.
  Day convolution $\otimes$ freely extends the monoidal structure $\otimes$ of $\ctx$, so it has a universal property~\cite{day70}: \emph{commutative} $\otimes$-monoids are precisely \emph{symmetric} lax \(\times\)-monoidal functors $(\F, \times) \rightarrow (\Set, \times)$.
  Therefore, $T_\otimes$ is commutative iff the following diagram commutes, where $\tau$ is the symmetry of $\times$ and the horizontal arrows are induced by the multiplication of $T$.
  \begin{equation}
    \label{eq:comm-mon}
    \begin{tikzcd}
      T_{\otimes}A \times T_{\otimes}B \rar{} \ar{d}[swap]{\tau} & T_{\otimes}(A \times B) \dar{T_{\otimes} \tau} \\
      T_{\otimes}B \times T_{\otimes}A \rar{} & T_{\otimes}(B \times A)
    \end{tikzcd}
  \end{equation}
 But commutativity of \eqref{eq:comm-mon} means precisely that the finitary set monad $T$ is commutative.

\section*{Details for \Cref{sec:subst-nomin-sets}}
\appendixproof{prop:day-fresh}

  Denoting the coend cone for the Day convolution on $\PSh \I$ by
  \[\kappa \colon \I(A + B, C) \times FA \times GB \rightarrow (F \oplus G)C,\]
  the isomorphism for $\unsh$ is given as follows for nominal sets \(X, Y\) by
  \begin{align*}
    (\unsh X \oplus \unsh Y)(C)  &\cong \unsh (X * Y)C \\
    \kappa_{A,B}(j \colon A + B \mono C, x, y) &\mapsto (j|_{A} \cdot x, j|_{B} \cdot y) \\
    \kappa_{\supp x \cup \supp y}(\supp x \cup \supp y \subseteq C, x, y) &\mapsfrom (x, y)
  \end{align*}
  In the direction \(\PSh \I \rightarrow \Nom\) we have for sheaves \(F, G\):
  \begin{align*}
    \sh F * \sh G &\cong \sh (F \oplus G) \\
    ([A, x \in FA], [B, x \in FB]) &\mapsto [A \cup B, \kappa_{A, B}(\id_{A \cup B}, x, y)] \\
    ([j[A], j|_{A} \cdot x], [j[B], j|_{B} \cdot y]) &\mapsfrom [C, \kappa_{A, B}(j \colon A + B \mono C, x, y)] \\
  \end{align*}

\appendixproof{thm:nom-sub-pres}
For the purpose of this proof, suppose that $X\sub Y$ is \emph{defined} by \eqref{eq:sub-pres}. Since $\nu_\psi \cong I_*$ (\Cref{lem:nom-nerve}), our task is to show
  \[\int^{A \in \I} I_* X A \cdot Y^{* A} \cong X \sub Y.\] The isomorphism sends an element $\kappa_A(x \in I_* X A, \gamma \in Y^{*A})$ of the coend on the left to the equivalence class $x[\gamma|_{\supp x}]$ on the right,
  and the equivalence relation $\sim$ of \eqref{eq:sub-pres} corresponds precisely to the equivalence relation of the coend.

\appendixproof{thm:nom-sub}
  (1):
  Every representable $\y A \in \PSh \I$ preserves pullbacks, so the Yoneda embedding $\y \colon \I^\op \rightarrow \PSh I$ factorizes through $\PShpb \I \subseteq \PSh \I$, and we have $I_* \cdot \psi = \y$.
  So $\psi$ is fully faithful and dense since $\y$ is.
  The well-behavedness condition is satisfied by \Cref{lem:wb-psh-subcat} since $\Lan_\y (\iota Y^{*(-)}) \colon \PSh \I \rightarrow \PSh \I$ corestricts to $\Nom$: The image of $X$ is precisely $X \sub Y$.

  (2 + 3): The isomorphism
  \[\psi A \tl \psi B = (\V^{*B})^{*A} \cong \V^{*(A \times B)} \cong \psi(A \times B)\]
  is clear, and $A \tl (X \tl_\psi Y) \cong (A \tl X) \tl_\psi$ is given by
  \begin{align*}
    (X \sub Y)^{*A} &\cong X^{*A} \sub Y \\
    (\lambda a.\ x_a[\gamma_a]) &\leftrightarrow (\lambda a.\ x_a)[\lambda (b \in \supp x_a).\ \gamma_a(b)]. \qedhere
  \end{align*}

\appendixproof{prop:dwand}
  \begin{enumerate}
    \item The action is well-defined, i.e., if $f \in Y \dwand Z$ then $\pi \cdot f \in Y \dwand Z$: let $f = f' \cdot \pi_A \colon X^{*\V} \rightarrow X^{*A} \rightarrow Y$ be a finite reduction of $f$, then the following diagram commutes:
          \[
          \begin{tikzcd}
            {Y^{*\V}} \rar{Y^{* \pi}} \dar{p_{\pi A}} \ar[shiftarr={yshift=15pt}]{rr}{\pi \cdot f} & Y^{* \V}\rar{f} \dar{p_A} & Z \\
            Y^{* \pi A} \rar{Y^{* \pi}} &  Y^{* A} \urar[swap]{f'} &
          \end{tikzcd}
          \]
          This shows that $\pi \cdot f$ is finitely reducible.
          It is easy to see that this yields an action.
          The above diagram also shows that for a finitely reducible $f  = f' \cdot p_A$ the finite set $A$ supports $f$: if $\pi|_A = \id$, then the lower path is equal to $f$ and the upper path to $\pi \cdot f$.

          We also need the other direction, which is subtle:
          we show that for $f \in Y \dwand Z$, if $A$ supports $f$ then $f$ factorizes through $p_A$.
          Let $\gamma, \delta \in Y^{*\V}$ such that $\gamma|_A = \delta|_A$, we have  to show $f(\gamma) = f(\delta)$.
          Since $f \in Y \dwand Z$ there exists a $B \subseteq_\f \V$ through which $f$ factors, so whenever
          \begin{equation}
            \label{eq:B-res}
            \gamma'|_B = \delta'|_B \text{ for } \gamma', \delta' \in Y^{*\V}, \text{ then } f(\gamma') = f(\delta').
          \end{equation}
          For simplicity assume $B \setminus A = \{b\}$, but the argument is the same for larger differences.
          If $\gamma(b) = \delta(b)$ we are done, since then $\gamma, \delta$ agree on all of $B$ and so $f(\gamma) = f(\delta)$, hence we assume $\gamma(b) \ne \delta(b)$.
          Let $a \in A \setminus B$ and $\gamma(a) = y_a = \delta(a)$.
          Given $\phi \in Y^{*C}$ we abuse notation and write $f(\phi)$ for $f(\phi')$ for some $\phi' \in Y^{*\V}$ extending $\phi$ that agrees with $\phi$ on $C \cup B$.
          Pick $c$ such that $\gamma(c)$ is fresh for $\delta(b)$ and some $y$ fresh for $\delta(b), \gamma(c)$.
          Then
          \begin{align}
            &\; f(\gamma) \notag \\
            =&\; f(\gamma \tr b c ) \label{step:ext1}\\
            =&\; f[a \mapsto \gamma(a), b \mapsto \gamma(c), c \mapsto \gamma(b)] \notag\\
            =&\; f[a \mapsto \delta(b), b \mapsto \gamma(c), c \mapsto y] \notag\\
            =&\; f[a \mapsto y, b \mapsto \gamma(c), c \mapsto \delta(b)] \notag\\
            =&\; f[a \mapsto y, b \mapsto \delta(b), c \mapsto \gamma(c)] \label{step:ext4}\\
            =&\; f(\delta), \notag
          \end{align}
          where steps \eqref{step:ext1} and \eqref{step:ext4} are due to $f$ being supported by $A$, and the other equalities are due to \eqref{eq:B-res}.

    \item The isomorphism is given by
          \begin{align*}
            \omega \colon \Nom(Y^{*A}, Z) &\cong I_*(Y \dwand Z) A \cocolon \omega^{-1} \\
            \alpha_A &\mapsto (\lambda (\gamma \in Y^{*\V}.\ \alpha(\gamma|_A))) \\
            (\lambda (\gamma \in Y^{*A}).\ {f}(\hat{\gamma})) &\mapsfrom f,
          \end{align*}
          such that $\hat{\gamma} \in Y^{*\V}$ is any map with $\hat{\gamma}|_{A} = \gamma|_{A}$.
          Since $f \in I_*(Y \dwand  Z)A$, it is supported by $A$, so this definition is independent of the choice of $\hat{\gamma}$.
  \end{enumerate}

\appendixproof{prop:nom-univ-sub-from-presh}
We show
\begin{equation}
  \label{eq:515todo}
  X \otimes Y \cong \sh(\unsh X \otimes \unsh Y) \cong \int^{A, B} \V^{*(A \times B)} \cdot \unsh X A \cdot \unsh Y B
\end{equation}
since $\sh$ preserves colimits, and $\sh \cdot \y \cong \psi$.
We denote the coend cone for the right side by
  \[\kappa_{A, B} \colon \V^{*(A, B)} \cdot \unsh X A \cdot \unsh Y B \rightarrow \int^{A, B} \V^{*(A \times B)} \cdot \unsh X A \cdot \unsh Y B,\]
  and we may omit the subscripts of $\kappa$ for readibility reasons.
  From left-to-right, the isomorphism \Cref{eq:515todo} is
  \begin{align*}
    \varphi \colon x[y] \mapsto \kappa_{\supp x \cup  \supp y}((a, b) \mapsto j_a(b), x, y),
  \end{align*}
  where $y$ is a fresh element in the orbit of all the $y_a$ used in $x[y]$,
  and for $a \in \supp x$ the $j_a \colon \supp y \cong \supp y_a$ satisfy $y_a = j_a \cdot y$.

  First note that this definition is independent of the choice of the $j_a$ up to the condition $y_a = j_a \cdot y$.
  This makes it independent in the choice of $y$ as well: if $y'$ satisfies the same conditions then there exists $pi$ such that $\pi \cdot y' = y$
  since they are in the same orbit, and we can choose $j'_a(b' \in \supp y') = j_a(\pi b)$.
  We then have
  \begin{align*}
    &\; \kappa((a, b) \mapsto j_a(b), x, y) \\
    =&\; \kappa((a, b) \mapsto j_a(b), x, \pi \cdot y')\\
    =&\; \kappa((a, b') \mapsto j_a(\pi b'), x, \pi \cdot y')\\
    =&\; \kappa((a, b') \mapsto j'_a(b'), x, \pi \cdot y')\\
  \end{align*}

  From right-to-left, it suffices to defin the isomorphism \Cref{eq:515todo}
  for $A = \supp x, B = \supp y$, and then
  \[\varphi^{-1} \colon \kappa_{A, B}(f \in \V^{*(A \times B)}, x, y) \mapsto x[(a \in supp x) \mapsto f_a \cdot y], \]
  where $f_a(b)= f(a, b)$.
  It is well-defined since
  \begin{align*}
    &\; \varphi^{-1}(\kappa(j \cdot (k \times l), x, y)) \\
    =&\; x[a \mapsto (j \cdot (k \times l))_a \cdot y] \\
    =&\; x[(a \mapsto j_a \cdot l)_a\cdot k \cdot y] \\
    =&\; (k \cdot x)[a \mapsto j_a \cdot l \cdot y] \\
    =&\; \varphi^{-1}(\kappa(j, k \cdot x, l \cdot y))
  \end{align*}
  It is routine to verify that these are indeed inverses.

\appendixproof{prop:nom-unif-sub-mon}
By \Cref{prop:nom-univ-sub-from-presh} we immediately have
\[\unsh(X \otimes Y) \cong \unsh \sh (\unsh X \otimes \unsh Y) \cong \unsh X \otimes \unsh Y.\]

  For the other direction we use \Cref{thm:day-conv-pres}.
  The functor \[\psi \colon (\V^\op, \times, 1) \rightarrow (\Nom, \otimes, \V)\] is obviously strong monoidal since $\V^{* A} \otimes \V^{* B} \cong \V^{*(A \times B)}$, and
  $\otimes$ on $\Nom$ preserves colimits in both variables:
  the inverse to the map \[\colim_i X \otimes Y_i \rightarrow X \otimes \colim_i Y_i\]
  sends $x[y] = x[a \mapsto j_a \cdot \kappa_i(y)]$ to $\kappa_i(x[a \mapsto j_a \cdot y])$.
  Therefore the extension of $\psi$, which by \Cref{lem:nom-nerve} is given by $\sh$, is a strong monoidal functor.

\section*{Details for \Cref{sec:taxon-nomin-sets}}

\appendixproof{lem:coprod-supp}
Clearly $\supp \kappa_A(x) \subseteq A$, so we show the other inclusion.
Suppose $a \in A \setminus \kappa_A(x)$, then for every fresh $b$ we have
\[\tr a b \cdot \kappa_A(x) = \kappa_{A - \{a\} \cup \{b\}}(\tr a b \cdot x) \ne \kappa_A(x) \]
since $A - \{a\} \cup \{b\} \ne A$, so $\tr a b \cdot \kappa_A(x)$ and $\kappa_A(x)$ are not in the same coproduct component.

\appendixproof{prop:nom-psh-square}
  \begin{enumerate}
    \item We have
          \[\Lan{ \iota }(F)B = \int^{A \in \B}\I(\iota A, B) \cdot FA,\]
          but every $\kappa_A(f \colon A \mono B, x \in FA)$ has the unique representative $\kappa_{f[A]}(\iota, f' \cdot x)$ for the image factorization \[f = \iota \cdot f' \colon A \cong f[A] \incl B\]
    \item We have
          \[RX = \coprod_{A \in \B}(\iota^* \unsh X)A = \coprod_{A \in \B}\{x \in X \mid \supp x \subseteq A\}.\]
    \item The inclusion \(i \colon \Nompres \rightarrow \Nom\) is a left adjoint to $R$ since there is an obvious natural correspondence between morphisms $f \in \Nom(iX, Y)$ and morphisms $\hat{f} \in \Nompres(X, RY)$ with $\hat{f}(x) = (f(x), \supp x)$.
    \item
  For $(x, A) \in RX$ we clearly have $\supp (x, A) = A$.
  So for a support-preserving map $f \in \Nompres(X, RY)$ with $f(x) = (\hat{f}(x), A)$ we have  \[A = \supp (\hat{f}(x), A) = \supp f(x) = \supp x.\]
  Therefore the maps
  \begin{align*}
    \Nompres(X, RiY) &\rightarrow \Nom(X, Y) \\
    f &\mapsto (x \mapsto \hat{f}(x)) \\
    (x \mapsto (g(x), A)) &\mapsfrom g
  \end{align*}
  are bijections.
  \end{enumerate}

\appendixproof{prop:mon-desc}
Using the descriptions $\mathcal{T} X = \{(x, A) \mid \supp x \subseteq A\}$ and $1_= = \pow_\f \V$, the isomorphism is the obvious one:
\begin{align}
  \mathcal{T} X  &\cong 1_= * X \label{eq:writer-iso}\\
  (x, A) &\mapsto (A \setminus \supp x, x) \notag \\
  (x, \supp x \cup B) &\mapsfrom (B, x). \notag
\end{align}
The unit of $\mathcal{T}$ is given by $X \rightarrow \mathcal{T} X, x \mapsto (x, \supp x)$, and its multiplication by $\mathcal{T} \mathcal{T} X \rightarrow X, ((x, A), B) \mapsto (x, A \cup B)$.
The unit of the writer monad $1_= * X$ is $X \rightarrow 1_= * X, x \mapsto (\emptyset, x)$
and the multiplication $1_= * (1_= * X) \rightarrow 1_= * X, (A, (B, x)) \mapsto (A \cup B, x)$.
It is easy to see that the isomorphisms \eqref{eq:writer-iso} are compatible with the monad structures.

\appendixproof{thm:relev}
\begin{enumerate}
  \item The proof of is analogous to that of \Cref{prop:spec-nompres-equiv}: the functors constituting this equivalence are given also given by
        \[\coprod^\S \colon \PSh \S \rightarrow \Relevpres, \qquad F \mapsto  \coprod_{A \subseteq_{\f} \V}FA \]
        and
        \[\mathcal{S}^\S \colon \Relevpres \rightarrow \PSh \S, \qquad  \mathcal{S}(X)(A) = \{x \in X \mid \supp x = A\}.  \]
        The action \(\rens \V \times \coprod^\S F \rightarrow \coprod^{\S} F\) is given by
        \[\rho \cdot \kappa_{A}(x) = \kappa_{\rho A}(\hat{\rho} \cdot x \in F(\rho A)),\]
        for \(\hat{\rho} \colon A \epi \rho A\).
        This makes $\coprod^\S F$ a relevance set with \(\supp \kappa_{A}(x) = A\).
        The functor structure of \(\mathcal{S} X\) is for \(\rho \colon A \epi B\) given by \(\rho \cdot (x \in \mathcal{S} X A) = \bar \rho \cdot x \in \mathcal{S} X B\),  where \(\bar \rho (a \in A) = \rho(a)\) and $\bar \rho(b \not \in A) = b$.
        Note that this definition only makes sense (i.e.\ \(\bar \rho \cdot x \in \mathcal{S} X B\)) because \(X\) is a relevance set and therefore \[\supp (\bar \rho \cdot x) = \bar \rho \cdot \supp x = \bar \rho \cdot A = B.\]
  \item Note that the monad $\bar{\mathcal{I}}$ does the same as the monad $\mathcal{I}$ on $\Nompres$, so in the Kleisli category the morphisms are allowed to (equivariantly) drop names from the support.
\end{enumerate}

\appendixproof{prop:preim-relev-equiv}
  (1) We show that for every preimage-preserving $F \in \PShpi \F$ the renaming set $\sh F$ is a relevance set.
  This is equivalent to showing $\rho \cdot \supp x \subseteq \supp \rho \cdot x$ for all $x \in \sh F$ and renamings $\rho$, since the inclusion $\supp \rho \cdot x \subseteq \rho \cdot \supp x$ holds for every renaming set.
  Since $F$ preserves preimages we have the following pullback of sets:
  \[
    \begin{tikzcd}
      F(\rho^{-1}[\supp \rho \cdot x]) \rar{} \dar[tail]{} & F(\supp \rho \cdot x) \dar[tail]{} \\
      F(\supp x) \rar{F \rho} & F(\rho \cdot \supp x).
    \end{tikzcd}
  \]
  In particular, since $x \in F(\supp x)$ and $\rho \cdot x \in F(\supp \rho \cdot x)$ we have \[x \in F(\rho^{-1}[\supp \rho \cdot x]) \subseteq F(\supp x),\] so $\supp x = \rho^{-1}[\supp \rho \cdot x]$ since $\supp x$ is the least support of $x$.
  This yields the desired inequality via
  \[\rho \cdot \supp x = \rho[\supp x] = \rho[\rho^{-1}[\supp \rho \cdot x]] \subseteq \supp \rho \cdot x. \]

  (2) Given a relevance set $X$ the presheaf $\unsh X$ preserves preimages:
  given $\rho \colon A \rightarrow B$ and $U \subseteq B$ we have to show that
  \[
    \begin{tikzcd}
      \unsh X(\rho^{-1} U) \rar{} \dar[tail]{} & \unsh X U \dar[tail]{} \\
      \unsh X A \rar{\unsh X \rho} & \unsh X B
    \end{tikzcd}
  \]
  is a pullback.
  This means that if $x \in \unsh X$ is supported by $A$ and $\rho \cdot x$ is supported by $U$, then $x$ is supported by $\rho^{-1}U$.
  As $X$ is a relevance set we have
  \[\rho [ \supp x ] \subseteq \supp \rho \cdot x \subseteq U,\] which is equivalent to $\supp x \subseteq \rho^{-1} U$, and we are done.

\appendixproof{thm:es-im-finit}

  Since $i = j \cdot \iota \colon \S \rightarrow \F \rightarrow \Set$  we have $\Lan_{i} \cong \Lan_j \cdot \Lan_i$.
  The equivalence between finitary endofunctors on $\Set$ and $\PSh \F$ is given by $\Lan_j$, whose pseudo-inverse is restriction $j^* \colon [\Set,\Set]_{\mathsf{finit}} \colon \PSh \F$.
  Since filtered colimits commute with finite limits, a finitary endofunctor $F \in [\Set, \Set]_\mathsf{finit}$, with $F \cong \Lan_j G$ for $G \in \PSh \F$, preserves preimages iff $G$ does.
  It therefore suffices to show that $G$ preserves preimages iff it is of the form $\Lan_\iota H$ for some $H \in \PSh \S$.

  The functor $\Lan_\iota$ is given for $H \in \PSh \S$ and $B \in \F$ by
  \begin{align}
    (\Lan_\iota)(H)(B) = \int^{A \in \S}\F(\iota A, B) \cdot HA &\cong \coprod_{A \subseteq B} HA \label{eq:psh-iota-s} \\
    [f, x] &\mapsto \iota_{f[A]}(f \cdot x) \notag \\
    [A \subseteq B, x] & \mapsfrom \iota_A(x). \notag
  \end{align}

  It is easy to see that every $\Lan_\iota H$ preserves preimages:
  In the square
  \[
    \begin{tikzcd}
      {\coprod_{A \subseteq f^{-1}U}H(f^{-1}U)} \rar{} \dar[tail]{} & {\coprod_{A' \subseteq U} HA'} \dar[tail]{} \\
      {\coprod_{A \subseteq B}HA} \rar{\Lan_\iota H f} & {\coprod_{A' \subseteq B'} HA'}
    \end{tikzcd}
  \]
  we have $\Lan_\iota Hf (\iota_A(x)) = \iota_{fA}(f \cdot x) \in \Lan_\iota H U$ iff $f A \subseteq U$ iff $A \subseteq f^{-1} U$, so $\iota_A(x) \in \Lan_\iota H (f^{-1}U)$.

  It remains to show that if $G \in \PSh \F$ preserves preimages then it is of the form $\Lan_\iota H$ for some $H \in \PSh \S$.

  Here we can use \Cref{prop:preim-relev-equiv}: we know that $\unsh G$ is a relevance set, so we have
  \[(\Lan_\iota)\mathcal{S}^\S \unsh G \cong \sh \unsh G \cong G.  \]

\appendixproof{prop:day-relev}
  Day convolution in \(\PSh \S\) is given by the coend
  \[(X \oplus Y)(C) = \int^{A, B \in \S} \S(A + B, C) \times XA \times YB\]
  whose coend cowedge we denote by \(\alpha\).
  Under the equivalence  from \Cref{thm:relev} Day convolution is transported to the monoidal structure in \Relevpres given by the cartesian product on their underlying sets, as witnessed by the isomorphism
  \begin{align*}
    \coprod^{\S} (X \oplus Y) &\cong \coprod^\S X \times \coprod^\S Y \\
    \iota_C(\kappa_{A,B}( j, x \in XA, y \in YB)) &\mapsto (\iota_{jA}(j_{A} \cdot x), \iota_{jB}( j_{B} \cdot y)) \\
    \kappa_{A, B}(A + B \epi A \cup B, x, y)  &\mapsfrom (\iota_{A}(x), \iota_{B}(y))
  \end{align*}
  where \(j = [j_{A}, j_{B}] \colon A + B \epi C\).

  This also makes the inclusion \(\Relevpres \rightarrow \Relev\) strict monoidal as products in $\Relev$ are taken in $\Ren$.

\detailsfor{rem:sub-relev}
It is easy to see that given two relevance sets $X, Y$ their substitution product $X \sub Y$ is a relevance set as well: given $x[\gamma] \in X \sub Y$ and $\rho \in \rens$ we have
\begin{align*}
  &\; \supp \rho \cdot x[\gamma] \\
  =&\; \bigcup_{a \in \supp x} \supp \rho \cdot \gamma(a) \\
  =&\; \bigcup_{a \in \supp x} \rho[\supp \gamma(a)] \\
  =&\; \rho[\bigcup_{a \in \supp x} \supp \gamma(a)] \\
  =&\; \rho[\supp x[\gamma]].
\end{align*}

However, the internal hom $\dwand$ of renaming sets does not restrict to relevance sets:
  Consider the discrete relevance set $Y = Z = 2$ and the element $f \in Y \dwand Z$ given by $f(\gamma) = 1$ if $\gamma(a) = \gamma(b)$, otherwise $f(\gamma) = 0$.
  For the renaming $\rho = [a \mapsto c, b \mapsto c]$ the map $\rho \cdot f$ is constant $1$, and thus has empty support.
  Therefore \[\emptyset = \supp (\rho \cdot f) \subsetneq \rho[\supp f] = \{c\},\]
  so $Y \dwand Z$ is not a relevance set.

\detailsfor{ex:cov-I}
We show that for a contextual subcategory $\ctx \hookrightarrow \F$ the coverage whose covers are singleton inclusions $\{A \subseteq B\}$ is stable:

  Given a cover \(S = \{C \subseteq D' = C + A\}\) of \(C\) and a \ctx-morphism \(g \in \ctx(C, D)\), take the cover $T = \{D \subseteq D + A' \}$ of $D$, where $A'$ is disjoint from $D$ and $\pi \colon A \cong A'$:
  \begin{equation}
  \begin{tikzcd}
    C \rar{ \forall g} \dar[swap, phantom]{\rotatebox{-90}{$\subseteq$}} & D \dar[phantom]{\rotatebox{-90}{$\subseteq$}} \\
    \mathllap{D' = \;}C + A \rar[dashed]{g + \pi} & D + A' \mathrlap{\;= E}
  \end{tikzcd}
\end{equation}
  Note that the inclusion $D \subseteq D \cup A'$ is in $\ctx$ by the prime property (\Cref{prop:reindexing}): $\id_C + i_A \colon C + \emptyset \rightarrow C + A$ is in $\ctx$, so $\id_D + i_A$ is as well.
  On $\ctx = \B, \S$ the only inclusions are identities, and so the sheaf condition for these coverages is trivial.

\appendixproof{thm:int-pres-sheaves}
  Under these conditions the proof from the Elephant~\cite[Example A2.1.11h]{j02} for $\ctx = \I$ generalizes, but for self-containment we recall the complete proof here.

  Note that presheaf $F \in \PSh \ctx$ is a sheaf iff for all inclusions $\iota \colon A \incl B$ the morphism $F \iota \colon FA \rightarrow FB$ factorizes as
  $FA \cong K \incl FB$, where
  \[\{y \in FB \mid \forall g, h \colon g \iota = h \iota \Longrightarrow g \cdot y = h \cdot y\}.\]

  First, let $F$ preserves intersections.
  It then also preserves monos, so $F\iota$ is injective.
  Now let $y \in K$.
  Since $\iota = 1 + i \colon A = A + 0 \rightarrow A + A' = B$ and $\ctx$ is an prime we have $i \in \ctx$.
  We can therefore express $A$ as the intersection of $g = 1_{A} + 1_{A'} + i \colon B \rightarrow A + A' $ and $h = 1_{A} + i + 1_{A'} \colon B \rightarrow A + A' + A'$.
  \[
    \begin{tikzcd}
      A \rar[hook]{\iota} \dar[hook]{\iota} & B \dar[hook]{g} \\
      B \rar[hook]{h} & A + A' + A'.
    \end{tikzcd}
  \]
  We have $g \iota = h \iota$, so, as $y \in K$, also $g \cdot y = h \cdot y$.
  Since $F$ preserves the intersection of $g, h$ there exists a unique $x \in FA$ with $\iota \cdot x = y$.

  Second, let $F$ be a sheaf, we show it preserves the intersection of $A, A' \subseteq C$.
  Let $y \in FA, y' \in FA'$ such that they are sent to the same element of $FC$.
  Given $g, h \colon A \rightarrow B$ that agree on $A \cap A'$, define $g', h' \colon C = A + P \rightarrow B + P$ via $g, h$, respectively:
  \[
    \begin{tikzcd}
      A \cap A' \rar[hook]{} \dar[hook]{} &  A \dar[hook]{} \rar[yshift=+3pt]{g} \rar[swap,yshift=-3pt]{h} & B \dar[hook]{} \\
      A' \rar[hook]{} & C \rar[yshift=+3pt]{g'} \rar[swap,yshift=-3pt]{h'} & B + P
    \end{tikzcd}
  \]
  In the right square of this diagram the upper and lower paths commute, so $g', h'$ agree on $A'$.
  Then we have
  \begin{align*}
    &\;(B \subseteq B + P) \cdot g \cdot y \\
    = &\; g' \cdot (A \subseteq C) \cdot y \\
    = &\; g' \cdot (A' \subseteq C) \cdot y' \\
    = &\; h' \cdot (A' \subseteq C) \cdot y' \\
    = &\; (B \subseteq B+ P) \cdot h \cdot y.
  \end{align*}
  and thus also $g \cdot y = h \cdot y$ since $F(B \subseteq B + P)$ is mono.
  Therefore there exists a unique $x \in FA$ with $(A \cap A' \subseteq A) \cdot x = y$.
  We also have
  \begin{align*}
    &\; (A' \subseteq C) \cdot (A \cap A' \subseteq A') \cdot x \\
    = &\; (A \subseteq C) \cdot (A \cap A' \subseteq A) \cdot x \\
    = &\; (A \subseteq C)\cdot y \\
    =&\; (A' \subseteq C) \cdot y',
  \end{align*}
  so $(A \cap A' \subseteq A') \cdot x = y'$ as $F(A \cap A' \subseteq A')$
  is mono as well.

\appendixproof{thm:supp-set-char}
  \begin{enumerate}
    \item This is just the standard equivalence between families of sets and slices.
    \item Analogous to nominal sets.
    \item The equivalence is analogous to that of $\Nom$ and $\PShpb \I$:
          The functor \(\Lambda \colon \PShpb \Inc \rightarrow \Suppset\) sends a sheaf \(F\) to the supported set \(\Lambda F\) whose underlying set is \(\colim F\); since \(\Inc\) is directed this is given by
  \(\coprod_{A \subseteq \V} FA / \sim\), where \(\kappa_{A}(x) = \kappa_{B}(y)\) if \[(A \subseteq C) \cdot x = (B \subseteq C) \cdot x \text{ for some \(C \supset A, B\).}\]
  The map \(s \colon \Lambda F \rightarrow \powf \V\) is given by \[s(\kappa_{A}(x)) = \bigcap_{\kappa_{B}(y) = \kappa_{A}(x)} B,\]
  it is well-defined by definition and since \(F\) preserves intersections it makes sense: if \(\kappa_{A}(x) = \kappa_{B}(y)\) there exists a unique \(z \in F(A \cap B)\) with \((A \cap B \subseteq A) \cdot z = x\) and \((A \cap B \subseteq B) \cdot z = y\), so every equivalence class \(\kappa_{A}(x)\) has a ``smallest representative'' \(\kappa_{s(x)}(x_{s})\) with \((s(x) \subseteq A) \cdot x_{s} = x\).

  For \(\alpha \colon F \Rightarrow G\) the map \[\Lambda \alpha \colon \Lambda F \rightarrow \Lambda G, \quad \kappa_{A}(x) \mapsto \kappa_{A}(\alpha(x))\] is well-defined since \(\alpha\) is natural, and it is a morphism in \(\Suppset\):
  Let \(s(\kappa_{A}(x)) = A\), then
  \[s((\Lambda \alpha)(\kappa_{A}(x))) = s(\kappa_{A}(\alpha(x))) \subseteq A = s(\kappa_{A}(x))\] and so \(s \cdot (\Lambda \alpha) \subseteq s\).

  In the other direction, the functor \(\Gamma \colon \Suppset \rightarrow \PShpb \Inc\) maps \(s \colon X \rightarrow \powf \V\) to the functor \(\Gamma X\) with \[\Gamma X A = \{x \in X \mid s(x) \subseteq A\}\] so that  \(\Gamma X (A \subseteq B)\) is just inclusion, obviously \(\Gamma X\) preserves  intersections. On morphisms \(\Gamma\) is equally trivial.

  These functors are indeed  pseudoinverses as manifested by the following isomorphisms:
  \begin{align*}
    F &\cong  \Gamma \Lambda F  & X &\cong \Lambda \Gamma X\\
    x \in FA &\mapsto \kappa_{A}(x) & x &\mapsto \kappa_{s(x)}(x)\\
    (s(y) \subseteq A ) \cdot y_{s} &\mapsfrom (\kappa_{B}(y) \in \Gamma (\Lambda F) A) & y&\mapsfrom \kappa_{A}(y)
  \end{align*}
          where $y_s \in s(y) \subseteq A$ is the smallest representative in the equivalence class of $\kappa_A(y)$.
  \end{enumerate}

\appendixproof{prop:suppset-shv}
  Let \(F \in \PSh \Inc\) be a presheaf.
  Then \(F\) is an \(\mathcal{O}\)-sheaf iff it preserves intersections of the form \(A \subseteq A \cup B \supseteq B\), and it is \(\mathcal{I}_{\Inc}\)-separated if every \(F(A \subseteq B) \colon FA \rightarrow FB\) is injective.
  In particular, if \(F\) preserves intersections it is a \(\mathcal{O}\)-sheaf and \(\mathcal{I}_{\Inc}\)-separated.

  For the converse direction, we assume that $F$ is a $\mathcal{O}$-sheaf and $\mathcal{I}_{\Inc}$-separted and show it preserves intersections:
  given subsets \(A, B \subseteq C\) with \(x \in FA, y \in FB\) such that \((A \subseteq C)\cdot x = (B \subseteq C) \cdot y\) we have to show there exists a unique \(z \in F(A \cap B)\) mapped to \(x\) and \(y\).
  We have \((A \subseteq A \cup B) \cdot x = (B \subseteq A \cup B)\cdot y\) since
  \begin{align*}
    &\; (A \cup B \subseteq C) \cdot (A \subseteq A \cup B)\cdot x \\
    =&\;(A \subseteq C) \cdot x \\
    =&\; (A \subseteq C) \cdot y \\
    =&\; (A \cup B \subseteq C)\cdot (B \subseteq A \cup B)\cdot y \\
  \end{align*}
  and \(F(A \cup B \subseteq C)\) is mono. This means that \(x, y\) is a compatible family for the \(\mathcal{O}\)-cover \(\{A, B\}\) of \(A \cap B\), and since \(F\) is a \(\mathcal{O}\)-sheaf this yields the desired unique element \(z \in F(A \cap B)\).

\end{document}